\documentclass[letterpaper,UKenglish,nolineno]{socg-lipics-v2019}

\usepackage{booktabs}

\usepackage{latexsym}

\usepackage[linesnumbered, vlined, ruled]{algorithm2e}
\usepackage{microtype}

\usepackage[absolute]{textpos}


\bibliographystyle{plainurl}

\usepackage{cite}


\def\calA{\mathcal{A}}

\theoremstyle{plain}
\newtheorem{observation}{Observation}
\newtheorem{mylemma}{Lemma}

\title{Algorithms for Subpath Convex Hull Queries and Ray-Shooting Among Segments\footnote{A preliminary version of this paper will appear in the Proceedings of the 36th International Symposium on Computational Geometry (SoCG 2020).}}

\titlerunning{Subpath Convex Hull Queries and Ray-Shooting}

\author{Haitao Wang}{Department of Computer Science, Utah State University, Logan, UT 84322, USA}{haitao.wang@usu.edu}{ https://orcid.org/0000-0001-8134-7409}{}

\authorrunning{H. Wang}

\Copyright{Haitao Wang}

\ccsdesc{Theory of computation $\rightarrow$ Design and analysis of
algorithms; Theory of computation $\rightarrow$ Computational geometry}

\keywords{subpath hull queries, convex hulls, compact interval trees, ray-shooting, data structures}





\hideLIPIcs  

\begin{document}

\maketitle

\begin{abstract}
In this paper, we first consider the subpath convex hull query problem:
Given a simple path $\pi$ of $n$ vertices, preprocess it so that the
convex hull of any query subpath of $\pi$ can be quickly obtained. Previously, Guibas, Hershberger, and
Snoeyink [SODA 90'] proposed a data structure of $O(n)$ space and $O(\log n\log\log n)$ query time;
reducing the query time to $O(\log n)$ increases
the space to $O(n\log\log n)$.
We present an improved result that uses $O(n)$ space while achieving $O(\log n)$ query
time. Like the previous work, our query algorithm returns a compact interval tree representing the convex hull
so that standard binary-search-based queries on the hull can be
performed in $O(\log n)$ time each. The preprocessing time of our data structure is $O(n)$, after the
vertices of $\pi$ are sorted by $x$-coordinate. As the subpath convex hull query problem has many applications, our new result leads to improvements for several other problems.

In particular, with the help of the above result, along with other
techniques, we present new algorithms for the ray-shooting problem among
segments. Given a set of $n$ (possibly intersecting) line segments in
the plane, preprocess it so that the first segment
hit by a query ray can be quickly found. We give a data structure
of $O(n\log n)$ space that can answer each query in $(\sqrt{n}\log n)$ time.
If the segments are nonintersecting or if the segments are lines, then the space can be reduced to $O(n)$.
As a by-product,
given a set of $n$ (possibly intersecting) segments in the plane,
we build a data structure of $O(n)$ space that can determine whether a query line intersects a segment in
$O(\sqrt{n}\log  n)$ time. The preprocessing time is $O(n^{1.5})$ for all four problems,
which can be reduced to $O(n\log n)$ time by a randomized algorithm so that the
query time is bounded by $O(\sqrt{n}\log n)$ with high probability.
All these are classical problems that have been studied extensively. Previously data
structures of $\widetilde{O}(\sqrt{n})$ query time\footnote{The notation $\widetilde{O}$ suppresses a polylogarithmic factor.} were known in early 1990s; nearly no progress has been made for more than two decades. For all these problems, our new results provide improvements by
reducing the space of the data structures by at least a logarithmic
factor while the preprocessing and query times are the same as before or even better.
\end{abstract}

\section{Introduction}
\label{sec:intro}
In this paper, we first consider the subpath convex hull query
problem. Let $\pi$ be a simple path of $n$ vertices in the plane. A
{\em subpath hull query} specifies two vertices of $\pi$ and asks for
the convex hull of the subpath between the two vertices. The goal is to preprocess $\pi$ so that the subpath hull queries
can be answered quickly. Ideally, the query
should return a representation of the convex hull
so that standard queries on the hull can be performed in logarithmic
time.

The problem has been studied by Guibas, Hershberger, and
Snoeyink\cite{ref:GuibasCo91}, who proposed a method of using
compact interval trees. After $O(n\log n)$ time
preprocessing, Guibas et al.~\cite{ref:GuibasCo91} built a data
structure of $O(n)$ space that can answer each query in
$O(\log n\log\log n)$ time. Their query algorithm returns a compact interval
tree that represents the convex hull so that all binary-search-based
queries on the hull can be performed in $O(\log n)$ time each. The
queries on the hull include (but are not limited to) the following: find
the most extreme vertex of the convex hull along a query direction;
find the intersection between a query line and the convex hull; find
the common tangents from a query point to the convex hull; determine
whether a query point is inside the convex hull, etc.
Guibas et al.~\cite{ref:GuibasCo91} reduced the subpath hull
query time to $O(\log n)$ but the space becomes $O(n\log\log n)$. A
trade-off was also made with $O(\log n\log^*n)$ query time and
$O(n\log^*n)$ space~\cite{ref:GuibasCo91}.

As compact interval trees are quite amenable, the results of Guibas et
al.~\cite{ref:GuibasCo91} have found many applications,
e.g.,~\cite{ref:BeckerEn96,ref:ChengAl92,ref:ChenEf15,ref:ChenAp13,ref:ChristIm10,ref:DaiOp13,ref:WangOn19}.
Clearly, there is still some room for further improvement on the
results of Guibas et al.~\cite{ref:GuibasCo91}; the ultimate goal
might be an $O(n)$ space data structure with $O(\log n)$ query time.
In this paper, we achieve this goal. The preprocessing time of our data
structure is $O(n)$, after the vertices of $\pi$ are sorted by $x$-coordinate.
Like the results of Guibas et al.~\cite{ref:GuibasCo91}, our query
algorithm also returns a compact interval tree that can support
logarithmic time queries for all binary-search-based queries on the
convex hull of the query subpath; the edges of the convex hull can be
retrieved in time linear in the number of vertices of the convex hull. Note that like those in~\cite{ref:GuibasCo91} our results are for the random access machine (RAM) model.

With our new result, previous applications that use the results of
Guibas et al.~\cite{ref:GuibasCo91} can now be improved accordingly. We will demonstrate some of
them, including the problem of enclosing polygons by two
minimum area rectangles~\cite{ref:BeckerAn91,ref:BeckerEn96}, computing a guarding set for simple polygons in wireless location~\cite{ref:ChristIm10}, computing optimal time-convex
hulls~\cite{ref:DaiOp13}, $L_1$ top-$k$ weighted sum aggregate nearest and
farthest neighbor searching~\cite{ref:WangOn19}, etc.
For all these problems, we reduce
the space of their algorithms by a $\log\log n$ factor while
the time complexities are the same as before or even better.

We should point out that Wagener~\cite{ref:WagenerOp92} proposed a parallel algorithm for computing a data structure, called {\em bridge tree}, for representing the convex hull of a simple path $\pi$. If using one processor, for any query subpath of $\pi$, Wagener~\cite{ref:WagenerOp92} showed that the bridge tree can be used to answer decomposable queries\footnote{A convex hull query is decomposable if the answer to the query on a point set S can be obtained in constant time from the answers to the queries on $S_1$ and $S_2$, where $S_1$ and $S_2$ form a disjoint partition of $S$. For example, the following queries are decomposable: find the most extreme vertex of the convex hull along a query direction; find the two common tangents to the convex hull from a query point outside the hull, while the following queries are not decomposable: find the intersection of the convex hull with a query line; find the common tangents for two disjoint convex hulls.} on the convex hull of the query subpath in logarithmic time each. Wagener~\cite{ref:WagenerOp92} claimed that some non-decomposable queries can also be handled; however no details were provided. In contrast, our approach returns a compact interval tree that is more amenable (indeed, the bridge trees~\cite{ref:WagenerOp92} were mainly designed for parallel processing) and can support both decomposable and non-decomposable queries. In addition, if one wants to output the convex hull of the query subpath, our approach can do so in time linear in the number of the vertices of the convex hull while the method of Wagener~\cite{ref:WagenerOp92} needs $O(n)$ time.


\subsection{Ray-Shooting}
\label{sec:introray}

With the help of our subpath hull query data structure and many other new techniques, we present improved results for several classical ray-shooting problems.  These problems have been studied extensively. Previously, data structures of $\widetilde{O}(\sqrt{n})$ query time and near-linear space were known in early 1990s; nearly no progress has been made for over two decades. Our new results reduce the space by at least a logarithmic factor while still achieving the same or even better preprocessing and query times.

In the following, we use a triple $(T(n),S(n),Q(n))$ to represent the complexity of a data structure, where $T(n)$ is the preprocessing time, $S(n)$ is the space, and $Q(n)$ is the query time. We will confine the discussion of the previous work to data structures of linear or near-linear space.
Refer to Table~\ref{tab:10} for a summary. Throughout the paper, we use $\delta$ to refer to an arbitrarily small positive constant.


\subparagraph{Ray-shooting among lines.} Given a set of $n$ lines in the plane, the problem is to build a data structure so that the first line hit by a query ray can be quickly found.

Bar-Yehuda and Fogel~\cite{ref:Bar-YehudaVa94} gave a data structure
of complexity $O(n^{1.5},n\log^2 n,\sqrt{n}\log n)$. Cheng and
Janardan~\cite{ref:ChengAl92} gave a data structure of complexity
$O(n^{1.5}\log^2 n,n\log n,\sqrt{n}\log n)$. Agarwal and
Sharir~\cite{ref:AgarwalAp93} developed a data structure of complexity
$O(n\log n, n\log n, n^{1/2+\delta})$.

By using our subpath hull query data structure and a result from Chazelle and Guibas~\cite{ref:ChazelleFr862}, we present a new data structure of complexity $O(n^{1.5},n,\sqrt{n}\log n)$. This is the first time that this problem is solved in $\widetilde{O}(\sqrt{n})$ time while using only $O(n)$ space.

In addition, we also consider a more general {\em first-$k$-hits} query, i.e., given a query ray and an integer $k$, report the first $k$ lines hit by the ray. This problem was studied by Bar-Yehuda and Fogel~\cite{ref:Bar-YehudaVa94}, who gave a data structure of complexity $O(n^{1.5},n\log^2 n,\sqrt{n}\log n+k\log^2 n)$. Our new result is a data structure of complexity $O(n^{1.5},n,\sqrt{n}\log n+k\log n)$.


\subparagraph{Intersection detection.} Given a set of $n$ line segments in the plane, the problem is to build a data structure to determine whether a query line intersects at least one segment.

Cheng and Janardan~\cite{ref:ChengAl92} gave a data structure of complexity $O(n^{1.5}\log^2 n,n\log n,\sqrt{n}\log n)$. By adapting the interval partition trees of Overmars et al.~\cite{ref:OvermarsSt90} (which relies on the conjugation trees of Edelsbrunner and Welzl~\cite{ref:EdelsbrunnerHa86}) to the partition trees of Matou\v{s}ek~\cite{ref:MatousekEf92,ref:MatousekRa93}, we obtain a data structure of complexity $O(n^{1.5},n,\sqrt{n}\log  n)$. To this end, we have to use Matou\v{s}ek's techniques in both \cite{ref:MatousekEf92} and \cite{ref:MatousekRa93}, and modify them in a not-so-trivial mannar.


\subparagraph{Ray-shooting among segments.}
Given a set of $n$ (possibly intersecting) line segments in the plane, the problem is to build a data structure to find the first segment hit by a query ray.

Overmars et al.~\cite{ref:OvermarsSt90} gave a data structure of complexity $O(n\alpha(n)\log^3 n, n\log^2 n, n^{0.695}\log n)$, where $\alpha(n)$ is the inverse Ackermann's function. Guibas et al.~\cite{ref:GuibasIn88} presented a data structure of complexity $O(n\alpha(n)\log^3 n, n\alpha(n), n^{2/3+\delta})$. Agarwal~\cite{ref:AgarwalRa922} gave a data structure of complexity $O(n^{1.5}\log^{4.33}n,n\alpha(n)\log^4 n, \sqrt{n\alpha(n)}\log^2 n)$.
Bar-Yehuda and Fogel~\cite{ref:Bar-YehudaVa94} gave a data structure of complexity $O((n\alpha(n))^{1.5},n\alpha(n)\log^2 n,\sqrt{n\alpha(n)}\log n)$.
Cheng and Janardan~\cite{ref:ChengAl92} developed a data structure of
complexity $O(n^{1.5}\log^2 n,n\log^2 n,\sqrt{n}\log n)$. Agarwal and
Sharir~\cite{ref:AgarwalAp93} proposed a data structure of complexity
$O(n\log^2 n,n\log^2 n,n^{0.5+\delta})$.
Chan's randomized techniques~\cite{ref:ChanOp12} yielded a data structure of complexity $O(n\log^3 n, n\log^2 n, \sqrt{n}\log^2 n)$, where the query time is expected.

Cheng and Janardan's algorithm~\cite{ref:ChengAl92} relies on their
results for the ray-shooting problem among lines and the intersection detection problem.
Following their algorithmic
scheme and using our above new results for these two problems, we obtain a
data structure for the ray-shooting problem among segments with
complexity $O(n^{1.5},n\log n,\sqrt{n}\log n)$. This is the first
data structure of $\widetilde{O}(\sqrt{n})$ query time
that uses only $O(n\log n)$ space.

If the segments are nonintersecting, then better
results exist. Overmars et al.~\cite{ref:OvermarsSt90} gave a data
structure of complexity $O(n\log n, n, n^{0.695}\log n)$.
Agarwal~\cite{ref:AgarwalRa922} presented a data structure
of complexity $O(n^{1.5}\log^{4.33}n,n\alpha(n)\log^3 n,
\sqrt{n}\log^2 n)$.
Bar-Yehuda and Fogel~\cite{ref:Bar-YehudaVa94} proposed a data structure
of complexity $O(n^{1.5},n\log n,\sqrt{n}\log n)$.
Our new data structure has complexity $O(n^{1.5},n,\sqrt{n}\log n)$.
This is the first data structure of $\widetilde{O}(\sqrt{n})$ query time
that uses only $O(n)$ space.
Note that if the segments form the boundary of a simple polygon, then there exist data structures of complexity $O(n,n,\log n)$~\cite{ref:ChazelleRa94,ref:ChazelleVi89,ref:HershbergerA95}.

\begin{table}[t]
{
\begin{center}
\begin{tabular}{l l l l l}
\toprule[0.02in]
 & Preprocessing time & Space & Query time & Source \\
\midrule
\multirow{4}{2.5cm}{Ray-shooting among lines}
 & $n^{1.5}$ & $n\log^2 n$ & $\sqrt{n}\log n$ & BF\cite{ref:Bar-YehudaVa94}\\
 & $n^{1.5}\log^2 n$ & $n\log n$ & $\sqrt{n}\log n$ & CJ\cite{ref:ChengAl92}\\
 & $n\log n$ & $n\log n$ & $n^{0.5+\delta}$ &  AS\cite{ref:AgarwalAp93}\\
 & $n^{1.5}$ & $n$ & $\sqrt{n}\log n$ & this paper\\
 & $n\log n$  & $n$ & $\sqrt{n}\log n$ * & this paper\\
 \hline
\multirow{3}{2.5cm}{Intersection detection}
 & $n^{1.5}\log^2 n$ & $n\log n$ & $\sqrt{n}\log n$ & CJ\cite{ref:ChengAl92}\\
 & $n^{1.5}$ & $n$ & $\sqrt{n}\log  n$ & this paper\\
 & $n\log n$  & $n$ & $\sqrt{n}\log n$ * & this paper\\
 \hline
\multirow{9}{2.5cm}{Ray-shooting among intersecting segments}
& $n\alpha(n)\log^3 n$ & $n\log^2 n$  &  $n^{0.695}\log n$ & OSS\cite{ref:OvermarsSt90}\\
& $n\alpha(n)\log^3 n$ & $n\alpha(n)$ & $n^{2/3+\delta}$ & GOS\cite{ref:GuibasIn88}\\
& $n^{1.5}\log^{4.33}n$& $n\alpha(n)\log^4 n$ & $\sqrt{n\alpha(n)}\log^2 n$ & A~\cite{ref:AgarwalRa922}\\
& $(n\alpha(n))^{1.5}$ & $n\alpha(n)\log^2 n$ & $\sqrt{n\alpha(n)}\log n$ & BF~\cite{ref:Bar-YehudaVa94}\\
 & $n^{1.5}\log^2 n$ & $n\log^2 n$  & $\sqrt{n}\log n$ & CJ\cite{ref:ChengAl92}\\
 & $n\log^2 n$ & $n\log^2 n$ & $n^{0.5+\delta}$ & AS~\cite{ref:AgarwalAp93}\\
 & $n\log^3 n$  & $n\log^2 n$ & $\sqrt{n}\log^2 n$ * & C~\cite{ref:ChanOp12} \\
 & $n^{1.5}$ & $n\log n$ & $\sqrt{n}\log  n$ & this paper\\
 & $n\log^2 n$  & $n\log n$ & $\sqrt{n}\log n$ * & this paper\\
 \hline
\multirow{5}{2.5cm}{Ray-shooting among nonintersecting segments}
& $n\log n$ & $n$ & $n^{0.695}\log n$ &  OSS~\cite{ref:OvermarsSt90}\\
& $n^{1.5}\log^{4.33}n$ & $n\alpha(n)\log^3 n$ & $\sqrt{n}\log^2 n$ & A~\cite{ref:AgarwalRa922}\\
& $n^{1.5}$ & $n\log n$ & $\sqrt{n}\log n$  & BF~\cite{ref:Bar-YehudaVa94}\\
& $n^{1.5}$ & $n$ & $\sqrt{n}\log  n$ & this paper\\
& $n\log n$ & $n$ & $\sqrt{n}\log n$ * & this paper\\
\bottomrule[0.02in]
\end{tabular}
\vspace{0.2in}
\caption{\footnotesize Summary of the results. The big-$O$ notation is omitted. $\delta$ can be any small positive constant. The results marked with * hold with high probability (except that the result of Chan~\cite{ref:ChanOp12} is expected).} \label{tab:10}
\end{center}
}
\vspace*{-0.25in}
\end{table}

\subparagraph{Randomized results.}
Using Chan's randomized techniques~\cite{ref:ChanOp12}, the preprocessing time of all our above results can be reduced to $O(n\log n)$ (except $O(n\log^2 n)$ time for the ray-shooting problem among intersecting segments), while the same query time complexities hold with high probability (i.e., probability at least $1 - 1/n^c$ for  any large constant $c$).

\subparagraph{Outline.} The rest of the paper is organized as follows. In Section~\ref{sec:pre} we review some previous work of the subpath hull query problem; Section~\ref{sec:result} presents our new data structure for the problem. Section~\ref{sec:ray} is concerned with the ray-shooting problem.
Other applications of the our subpath hull query result are discussed in Section~\ref{sec:app}.

\section{Preliminaries}
\label{sec:pre}

Let $p_1,\ldots,p_n$ be the vertices of a simple path $\pi$
ordered along $\pi$.
For any two indices $i$ and $j$ with $1\leq i\leq j\leq n$, we use
$\pi(i,j)$ to refer to the subpath of $\pi$ from $p_i$ to $p_j$.
Given a pair $(i,j)$ of indices
with $1\leq i\leq j\leq n$, the {\em subpath hull query} asks for the convex
hull of $\pi(i,j)$.

The convex hull of a simple path
can be found in linear time, e.g.,~\cite{ref:GrahamFi83,ref:MelkmanOn87}.
Note that the convex hull of a simple path is the same as the convex hull of its vertices. For this reason, in our discussion a subpath $\pi'$ of $\pi$ actually refers to its vertex set.
For each subpath $\pi'$ of $\pi$, we use $|\pi'|$ to denote the number
of vertices of $\pi'$; we consider the endpoint of $\pi'$ that is
closer to $p_1$ in $\pi$ as the {\em first vertex} of $\pi'$ while the
other endpoint is the {\em last vertex} of $\pi'$. So $p_i$ is the first
vertex and $p_j$ is the last vertex of $\pi(i,j)$.

For any set $P$ of points in the plane, let $H(P)$ denote the convex
hull of $P$. Denote by $H_U(P)$ and $H_L(P)$ the upper and lower
hulls, respectively.

\subparagraph{Interval trees.}
Let $S$ be a set of $n$ points in the plane. The {\em interval tree} $T(S)$
is a complete binary tree whose leaves
from left to right correspond to the points of $S$ sorted from left
to right. Each internal node corresponds to the interval between the
rightmost leaf in its left subtree and the leftmost leaf in its right
subtree. We say that a segment joining two points of $S$ {\em spans} an
internal node $v$ if $v$ is between the two endpoints of the
segment in the symmetric order of the nodes of $T(S)$ (or equivalently, the
projection of the interval of $v$ on the $x$-axis is contained in the
projection of the segment on the $x$-axis).

We store each edge $e$ of the upper hull $H_U(S)$ at the highest node of $T(S)$
that $e$ spans (e.g., see Fig.~\ref{fig:intervaltree}). By also storing the edges of the lower hull $H_L(S)$ in
$T(S)$ in the same way, we can answer all standard binary-search-based
queries on the convex hull $H(S)$ in $O(\log n)$ time,
by following a path from the root of $T(S)$ to a leaf~\cite{ref:GuibasCo91}. The main
idea is that the edge of $H_U(S)$ (resp., $H_L(S)$) spanning a node
$v$ of $T(S)$ is stored either at $v$ or at one of $v$'s ancestors and
only at most two ancestors
closest to $v$ (one to the left and the other to the right of $v$)
need to be remembered during the
search (see Lemma 4.1 of~\cite{ref:GuibasCo91} for details).

\begin{figure}[t]
\begin{minipage}[t]{\textwidth}
\begin{center}
\includegraphics[height=1.5in]{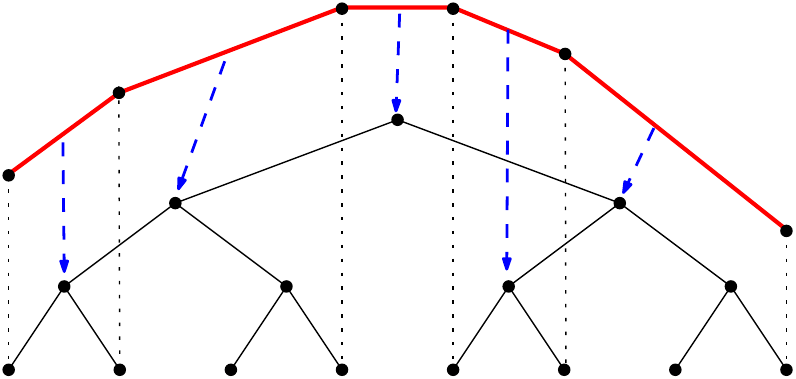}
\caption{\footnotesize Illustrating an interval tree that stores upper hull edges: the (blue) dashed lines with arrows indicate where edges are stored.}
\label{fig:intervaltree}
\end{center}
\end{minipage}
\vspace{-0.15in}
\end{figure}

\subparagraph{Compact interval trees.}
As the size of $T(S)$ is $\Theta(n)$ while $|H(S)|$ may be much smaller
than $n$, where $|H(S)|$ is the number of edges of $H(S)$,
using $T(S)$ to store $H(S)$ may not be space-efficient.
Guibas et al.~\cite{ref:GuibasCo91} proposed to use a {\em compact interval tree} $T_U(S)$ of $O(|H_U(S)|)$ size to
store $H_U(S)$, as follows. In $T(S)$, a node $v$ is {\em
empty} if it does not store an edge of $H_U(S)$; otherwise it is {\em
full}. It was shown in~\cite{ref:GuibasCo91} that if two
nodes of $T(S)$ are full, then their lowest common ancestor is also
full. We remove empty nodes from $T(S)$ by relinking the tree to make
each full node the child of its nearest full ancestor. Let $T_U(S)$ be
the new tree and we still use $T(S)$ to refer to the original interval tree
without storing any hull edges.
Each node of $T_U(S)$ stores exactly one edge of $H_U(S)$, and thus
$T_U(S)$ has $|H_U(S)|$ nodes. After $O(n)$ time preprocessing on
$T(S)$ (specifically, build a lowest common ancestor query
data structure~\cite{ref:BenderTh00,ref:HarelFa84}, with constant query time), $T_U(S)$ can be computed from $H_U(S)$ in
$O(|H_U(S)|)$ time (see Lemma 4.4 in~\cite{ref:GuibasCo91}).
Similarly, we use a compact interval tree $T_L(S)$ of $|H_L(S)|$ nodes
to store $H_L(S)$. Then, using the three trees $T_U(S)$, $T_L(S)$, and
$T(S)$, all standard binary-search-based queries on $H(S)$ can be answered
in $O(\log n)$ time. The main idea is that the algorithm walks down
through the compact interval trees while keeping track of the
corresponding position in $T(S)$ (see Lemma 4.3~\cite{ref:GuibasCo91}
for details). We call $T(S)$ a {\em reference tree}. In addition, using $T_U(S)$ and $T_L(S)$, $H(S)$ can be output in $O(|H(S)|)$ time.

As discussed above, to represent $H(S)$, we need two compact interval
trees, one for $H_U(S)$ and the other for $H_L(S)$. To make our
discussion more concise, we will simply say ``the compact interval tree''
for $S$ and use $T^+(S)$ to refer to it, which actually includes two
trees.


\subparagraph{Compact interval trees for $\boldsymbol{\pi}$.}
Consider two {\em consecutive} subpaths $\pi_1$ and $\pi_2$ of $\pi$.
Suppose their compact interval trees $T^+(\pi_1)$ and
$T^+(\pi_2)$ as well as the interval tree $T(\pi)$ of $\pi$ are available.
It is known that the convex hulls of two consecutive subpaths of a simple path
have at most two common tangents~\cite{ref:ChazelleFr862}.
Hence, $H(\pi_1)$ and $H(\pi_2)$ have at most two common tangents.
By using the path-copying method of persistent data
structures~\cite{ref:DriscollMa89}, Guibas et al.~\cite{ref:GuibasCo91} obtained the following result.

\begin{mylemma}\label{lem:merge}{\em (Guibas et al.~\cite{ref:GuibasCo91})}
Without altering $T^+(\pi_1)$ and $T^+(\pi_2)$, the compact interval
tree $T^+(\pi_1\cup \pi_2)$ can be produced (the root of the
tree will be returned) in $O(\log n)$ time and
$O(\log n)$ additional space.
\end{mylemma}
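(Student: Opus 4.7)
The plan is to exploit the structural fact that when $\pi_1$ and $\pi_2$ are consecutive subpaths of a simple path, $H(\pi_1)$ and $H(\pi_2)$ share at most two common tangents (the Chazelle--Guibas result cited in the excerpt). Consequently $H_U(\pi_1\cup\pi_2)$ is obtained by concatenating a sub-chain of $H_U(\pi_1)$, the upper common tangent edge, and a sub-chain of $H_U(\pi_2)$, and symmetrically for the lower hull. The construction therefore reduces to two tasks: (i) locating the two tangents, and (ii) splicing the compact trees accordingly, both in $O(\log n)$ time while creating only $O(\log n)$ new nodes.

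For task (i), I would locate each tangent by a dovetailed binary search on $T^+(\pi_1)$ and $T^+(\pi_2)$. Finding a common tangent between two convex polygons whose hulls admit $O(\log n)$-time binary-search-based queries is a standard operation: one descends simultaneously in both compact trees, using at each level an extreme-point or tangent-from-a-point query (which the compact interval trees support in $O(\log n)$ time each) to discard a constant fraction of the remaining candidate range. Careful accounting shows the whole tangent search fits into $O(\log n)$ total time and touches $O(\log n)$ nodes in each tree without modifying them.

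For task (ii), I would apply the path-copying technique of Driscoll et al. In each of the four compact trees $T_U^+(\pi_1), T_U^+(\pi_2), T_L^+(\pi_1), T_L^+(\pi_2)$, the edges surviving in the merged hull are precisely those on one side of the tangent vertex; discarding the rest amounts to cutting the tree along the single root-to-tangent path and dropping one child subtree at each node of that path. This affects at most $O(\log n)$ child pointers, so I would allocate fresh copies of exactly those $O(\log n)$ nodes, relink them to point into the unchanged subtrees of $T^+(\pi_1)$ and $T^+(\pi_2)$, and create one additional node holding the tangent edge at the position where it spans the interval-tree reference. The resulting root is $T^+(\pi_1\cup\pi_2)$, the originals are untouched, and the node budget is $O(\log n)$.

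The main obstacle, I expect, is verifying that each retained hull edge is still stored at its \emph{highest} spanning node with respect to the interval tree of the merged subpath, rather than merely with respect to $T(\pi_1)$ or $T(\pi_2)$ individually. Using the reference tree $T(\pi)$ and the constant-time LCA data structure on $T(\pi)$ noted in the preliminaries, the correct storage node for the new tangent edge and for any retained edge whose attachment point may migrate upward after the cut can be identified in $O(1)$ work per edge. Since the only edges whose attachment can migrate are those along the two cut paths, this bookkeeping is absorbed into the $O(\log n)$ time and $O(\log n)$ additional space already budgeted, completing the proof.
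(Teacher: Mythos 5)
Your overall route is the same one the paper relies on: it does not prove this lemma itself but cites Guibas et al., whose construction is exactly ``at most two common tangents between hulls of consecutive subpaths (Chazelle--Guibas) plus path-copying \`a la Driscoll et al.\ over the shared reference tree.'' The genuine soft spot is your task (i). As you describe it---a descent of depth $\Theta(\log n)$ that performs an extreme-point or tangent-from-a-point query costing $O(\log n)$ at each level---the tangent search costs $O(\log^2 n)$, and the sentence ``careful accounting shows the whole tangent search fits into $O(\log n)$'' is an assertion, not an argument; the $O(\log n)$ bound is the entire content of the lemma. The missing idea is a lockstep descent with \emph{constant} work per level: because $T^+(\pi_1)$ and $T^+(\pi_2)$ are both compacted versions of the \emph{same} reference tree $T(\pi)$ (the interval tree over all vertices of $\pi$ in $x$-order), at each reference node one can compare in $O(1)$ time the two hull edges spanning that node and decide in which child the bridge vertex lies, in the spirit of Overmars--van Leeuwen bridge finding. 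This shared-reference-tree descent is also what makes the search safe here: $H(\pi_1)$ and $H(\pi_2)$ are hulls of consecutive subpaths and may intersect or be nested, and naive binary search for common tangents of two \emph{intersecting} convex polygons is a well-known trap.

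Two smaller points. First, your structural claim that $H_U(\pi_1\cup\pi_2)$ is ``a sub-chain of $H_U(\pi_1)$, the upper tangent, a sub-chain of $H_U(\pi_2)$'' is not true in all configurations: if $H(\pi_2)$ protrudes from $H(\pi_1)$ only on top, the merged upper hull uses two sub-chains of $H_U(\pi_1)$ and both tangents, and if one hull is nested in the other it uses a chain from only one of them; the path-copying still goes through, but you may have to copy along two root-to-cut paths in one tree (still $O(\log n)$ nodes). Second, the ``main obstacle'' you anticipate largely dissolves: since both compact trees are defined with respect to the same $T(\pi)$, the highest node of $T(\pi)$ spanned by a retained edge depends only on the edge's endpoints and therefore does not migrate when the edge survives into the merged hull; only the (at most two) new tangent edges need storage nodes, which the constant-time LCA structure on $T(\pi)$ supplies, as you note.
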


\begin{mylemma}\label{lem:subpathhull}{\em (Guibas et al.~\cite{ref:GuibasCo91})}
Given the interval tree $T(\pi)$, with $O(n)$ time preprocessing, we can compute $T^+(\pi')$ for any subpath $\pi'$ of $\pi$ in $O(|\pi'|)$ time.
\end{mylemma}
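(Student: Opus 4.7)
My plan is to build $T^+(\pi')$ essentially from scratch for each query, leveraging two linear-time primitives: Melkman's algorithm for the convex hull of a simple polygonal chain, and the $O(|H|)$-time assembly of a compact interval tree from a hull $H$, once a constant-time LCA oracle is available on the reference tree.

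For the $O(n)$ preprocessing, build a linear-space, constant-query lowest-common-ancestor structure (such as~\cite{ref:BenderTh00,ref:HarelFa84}) on $T(\pi)$ in $O(n)$ time; this mirrors the preprocessing recalled above for a generic interval tree $T(S)$, but is done once so that $T(\pi)$ serves as the single reference tree for \emph{every} subpath of $\pi$.

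Given a query subpath $\pi'$: (i) run Melkman's algorithm on the vertex sequence of $\pi'$ in path order to produce $H_U(\pi')$ and $H_L(\pi')$ in $O(|\pi'|)$ time; (ii) for each hull edge $(p,q)$, issue one LCA query in $T(\pi)$ on the leaves representing $p$ and $q$ to locate its ``storage node'', namely the highest node of $T(\pi)$ spanned by $(p,q)$, in $O(1)$ time per edge; (iii) assemble these storage nodes into $T_U(\pi')$ and $T_L(\pi')$ exactly as in Lemma~4.4 of~\cite{ref:GuibasCo91}, in $O(|H(\pi')|) = O(|\pi'|)$ time total. Summed over the three steps, the query cost is $O(|\pi'|)$, as required.

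The one point I expect to need justification is that sharing the single reference tree $T(\pi)$ across all subpaths is sound: the construction in~\cite{ref:GuibasCo91} uses $T(S)$ built on $S$ itself, whereas here we use the larger $T(\pi) \supseteq T(\pi')$ as the reference for $\pi'$. The essential invariant used by the construction---that the LCA (in the reference tree) of two full nodes is itself full---still holds, because for any internal node $w$ of $T(\pi)$ the vertices of $\pi'$ are split by $x$-coordinate across $w$'s two subtrees, and if both sides contain hull vertices of $\pi'$ then the unique edge of $H_U(\pi')$ that crosses this split has its endpoints in the two subtrees of $w$, whence its highest spanning node is exactly $w$. All binary-search-based query algorithms of~\cite{ref:GuibasCo91} likewise carry over unchanged, since they only walk down the compact trees while tracking a position in the reference tree, whose depth is $O(\log n)$.
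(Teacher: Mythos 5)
Your proposal matches the paper's proof: both preprocess $T(\pi)$ with a constant-time LCA structure in $O(n)$ time, compute $H(\pi')$ from scratch in $O(|\pi'|)$ time by Melkman's algorithm, and then assemble the compact interval tree in $O(|H(\pi')|)$ time via Lemma~4.4 of Guibas et al. The extra justification you give (locating each edge's storage node as an LCA, and the fullness-of-LCA invariant when $T(\pi)$ serves as the shared reference tree) is correct and simply spells out details the paper delegates to~\cite{ref:GuibasCo91}.
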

\begin{proof}
We preprocess $T(\pi)$ in the same way as preprocessing $T(S)$ discussed before (i.e., build a lowest common ancestor query data structure~\cite{ref:BenderTh00,ref:HarelFa84}, with constant query time).
For any subpath $\pi'$ of $\pi$, we first compute its convex hull $H(\pi')$ in $O(|\pi'|)$ time~\cite{ref:GrahamFi83,ref:MelkmanOn87}. Then, as discussed before, $T^+(\pi')$ can be constructed in $O(|H(\pi')|)$ time (Lemma 4.4 in~\cite{ref:GuibasCo91}).
\end{proof}

\section{Subpath Convex Hull Queries}
\label{sec:result}

In this section, we present our new data structure for subpath
hull queries. We first compute a sorted list of all vertices of $\pi$ by $x$-coordinate.
As will be seen later, the rest of the preprocessing of our data structure takes $O(n)$ time in total.

\subsection{A decomposition tree}

After having the interval tree $T(\pi)$, we construct a {\em decomposition tree} $\Psi(\pi)$, which is a segment tree on
the vertices of $\pi$ following their order along $\pi$.
Specifically, $\Psi(\pi)$ is a complete binary tree with $n$ leaves
corresponding to the vertices of $\pi$ in order along $\pi$.
Each internal node $v$ of $\Psi(\pi)$ corresponds to the subpath $\pi(a_v,b_v)$, where $a_v$
(resp., $b_v$) is defined to be the index of the vertex of $\pi$ corresponding to the leftmost (resp.,
rightmost) leaf of the subtree of $\Psi(\pi)$ rooted at $v$; we call $\pi(a_v,b_v)$
a {\em canonical subpath} of $\pi$ and use $\pi(v)$ to
denote it.

Next, we remove some nodes in the lower part of $\Psi(\pi)$, as follows. For each node $v$ whose canonical path has at most $\log^2 n$ vertices and whose parent canonical subpath has more than $\log^2 n$ vertices, we remove both the left and the right subtrees of $v$ from $\Psi(\pi)$ but explicitly store $\pi(v)$ at $v$, after which $v$ becomes a leaf of the new tree. From now on we use $\Psi(\pi)$ to refer to the new tree. It is not difficult to see that $\Psi(\pi)$ now has $O(n/\log^2 n)$ nodes.

We then compute compact interval trees $T^+(\pi(v))$ for all nodes $v$ of $\Psi(\pi)$ in a bottom-up manner. Specifically, if $v$ is a leaf, then $\pi(v)$ has at most $\log^2n$ vertices, and we compute $T^+(\pi(v))$ from scratch, which takes $O(\log^2 n)$ time by Lemma~\ref{lem:subpathhull}. If $v$ is not a leaf, then $T^+(\pi(v))$ can be obtained by merging the two compact interval trees of its children, which takes $O(\log n)$ time by Lemma~\ref{lem:merge}. In this way, computing compact interval trees for all nodes of $\Psi(\pi)$ takes $O(n)$ time in total, for $\Psi(\pi)$ has $O(n/\log^2 n)$ nodes.

\subsection{A preliminary query algorithm}

Consider a subpath hull query $(i,j)$. We first present an $O(\log^2
n)$ time query algorithm using $\Psi(\pi)$ and then reduce
the time to $O(\log n)$. Depending on whether the two vertices $p_i$ and $p_j$ are in
the same canonical subpath of a leaf of $\Psi(\pi)$, there are two
cases.

\begin{description}
\item[Case 1.]
If yes, let $v$ be the leaf. Then, $\pi(i,j)$ is a subpath of $\pi(v)$ and thus has at most $\log^2 n$ vertices. We compute $T^+(\pi(i,j))$ from scratch in $O(\log^2 n)$ time by Lemma~\ref{lem:subpathhull}.

\item[Case 2.]
Otherwise, let $v$ be the leaf of $\Psi(\pi)$ whose canonical subpath contains $p_i$ and $u$ the leaf whose canonical subpath contains $p_j$. Let $w$ be the lowest common ancestor of $u$ and $v$.
As in~\cite{ref:GuibasCo91}, we partition $\pi(i,j)$ into two subpaths
$\pi(i,k)$ and $\pi(k+1,j)$, where $k=b_{w'}$ with $w'$ being the left
child of $w$ (recall the definition of $b_{w'}$ given before).
We will compute the compact interval trees for the
two subpaths separately, and then merge them to obtain $T^+(\pi(i,j))$
in additional $O(\log n)$ time by Lemma~\ref{lem:merge}. We only
discuss how to compute $T^+(\pi(i,k))$, for the other tree can be
computed likewise.

We further partition $\pi(i,k)$ into two subpaths $\pi(i,b_v)$ and
$\pi(b_v+1,k)$. We will compute the compact interval trees for them separately and then merge the two trees to obtain $T^+(\pi(i,k))$.

For computing $T^+(\pi(i,b_v))$, as $\pi(i,b_v)$ is a subpath of
$\pi(v)$, it has at most $\log^2 n$ vertices. Hence, we can compute
$T^+(\pi(i,b_v))$ from scratch in $O(\log^2 n)$ time.

For computing $T^+(\pi(b_v+1,k))$, observe that $\pi(b_v+1,k)$ is the
concatenation of the canonical subpaths of $O(\log n)$ nodes of
$\Psi(\pi)$; precisely, these nodes are the right children of their
parents that are in the path of $\Psi(\pi)$ from $v$'s parent to
$w'$ and these nodes themselves are not on the path. Since the compact interval trees of these nodes are already available
due to the preprocessing, we can produce $T^+(\pi(b_v+1,k))$ in
$O(\log^2 n)$ time by merging these trees.
\end{description}

In summary, we can compute $T^+(\pi(i,j))$ in $O(\log^2 n)$ time in
either case.

\subsection{Reducing the query time to $O(\log n)$}

In what follows, we reduce the query time to
$O(\log n)$, with additional preprocessing (but still $O(n)$).

To reduce the time for Case 1, we perform the following preprocessing.
For each leaf $v$ of $\Psi(\pi)$, we preprocess the path $\pi(v)$ in
the same way as above for preprocessing $\pi$. This means that we
construct an interval tree $T(\pi(v))$ as well as a decomposition tree
$\Psi(\pi(v))$ for the subpath $\pi(v)$. To answer a query for Case 1,
we instead use
$\Psi(\pi(v))$ (and use $T(\pi(v))$ as the reference tree). The query
time becomes $O(\log^2 \log n)$ as $|\pi(v)|\leq \log^2 n$. Note that
to construct $T(\pi(v))$ and $\Psi(\pi(v))$ in $O(|\pi(v)|)$ time, we
need to sort all vertices of $\pi(v)$ by $x$-coordinate in $O(|\pi(v)|)$ time. Recall
that we already have a sorted list of all vertices of $\pi$, from
which we can obtain sorted lists for $\pi(v)$ for all leaves $v$ of
$\Psi(\pi)$ in $O(n)$ time altogether.
Hence, the preprocessing for $\pi(v)$ for all leaves $v$ of
$\Psi(\pi)$ takes $O(n)$ time.

We proceed to Case 2. To reduce the query time to $O(\log n)$, we will
discuss how to perform additional preprocessing so that
$T^+(\pi(i,k))$ can be computed in $O(\log n)$ time.
Computing $T^+(\pi(k+1,j))$ can be done in $O(\log n)$ time
similarly. Finally we can merge the two trees to obtain
$T^+(\pi(i,j))$ in additional $O(\log n)$ time by Lemma~\ref{lem:merge}.

To compute $T^+(\pi(i,k))$ in $O(\log n)$ time, according
to our algorithm it suffices to compute both $T^+(i,b_v)$ and
$T^+(b_v+1,k)$ in $O(\log n)$ time. We discuss $T^+(i,b_v)$ first.

\subparagraph{Dealing with $\boldsymbol{T^+(\pi(i,b_v))}$.}
To compute $T^+(i,b_v)$ in $O(\log n)$ time, we preform the following
additional preprocessing. For each leaf $v$ of $\Psi(\pi)$, recall
that $|\pi(v)|\leq \log^2 n$;
we partition $\pi(v)$ into $t_v\leq \log n$ subpaths each of which
contains at most $\log n$ vertices. We use
$\pi_v(1),\pi_v(2),\ldots,\pi_{v}(t_v)$ to refer to these subpaths in
order along $\pi(v)$. For each subpath $\pi_v(i)$, we compute
$T^+(\pi_v(i))$ from scratch in $O(\log n)$ time. The total time for
computing all such trees is $O(\log^2 n)$. Next, we compute
compact interval trees for $t_v$ {\em prefix subpaths} of $\pi(v)$.
Specifically, for each $t\in [1,t_v]$, we compute $T^+(\pi_v[1,t])$,
where $\pi_v[1,t]$ is the concatenation of the paths
$\pi_v(1),\pi_v(2), \ldots, \pi_v(t)$. This can be done in $O(\log^2 n)$
time by computing $T^+(\pi_v[1,t])$ incrementally for $t=1,2,\ldots,
t_v$ using the merge algorithm of Lemma~\ref{lem:merge}. Indeed,
initially $T^+(\pi_v[1,t])=T^+(\pi_v(1))$, which is already available.
Then, for each $2\leq t\leq t_v$, $T^+(\pi_v[1,t])$ can be produced by
merging $T^+(\pi_v[1,t-1])$ and $T^+(\pi_v(t))$ in $O(\log n)$ time.
Similarly, we compute compact interval trees for $t_v$ {\em suffix subpaths} of $\pi(v)$:
$T^+(\pi_v[t,t_v])$ for all $t=1,2,\ldots,t_v$,
where $\pi_v[t,t_v]$ is the concatenation of the paths
$\pi_v(t),\pi_v(t+1), \ldots, \pi_v(t_v)$. This can be done in
$O(\log^2 n)$ time by a similar algorithm as above.
Thus, the preprocessing on $v$ takes $O(\log^2 n)$ time;
the preprocessing on all leaves of $\Psi(\pi)$ takes $O(n)$ time in
total.

We can now compute
$T^+(i,b_v)$ in $O(\log n)$ time as follows. Recall that $\pi(i,b_v)$
is a subpath of $\pi(v)$ and $b_v$ is the last vertex of $\pi(v)$.
We first determine the subpath $\pi_v(t)$ that contains $i$.
Let $g$ be the last vertex of $\pi_v(t)$. We partition $\pi(i,b_v)$
into two subpaths $\pi(i,g)$ and $\pi(g+1,b_v)$, and we will compute
their compact interval trees separately and then merge them to obtain
$T^+(\pi(i,b_v))$. For $\pi(i,g)$, as $\pi(i,g)$ is a subpath of
$\pi_v(t)$ and $|\pi_v(t)|\leq \log n$, we can compute $T^+(\pi(i,g))$
from scratch in $O(\log n)$ time. For $\pi(g+1,b_v)$, observe that
$\pi(g+1,b_v)$ is exactly the suffix supath $\pi_v[t+1,t_v]$, whose
compact interval tree has already been computed in the preprocessing.
Hence, $T^+(i,b_v)$ can be produced in $O(\log n)$ time.

\subparagraph{Dealing with $\boldsymbol{T^+(\pi(b_v+1,k))}$.}
To compute $T^+(b_v+1,k)$ in $O(\log n)$ time, we perform the
following preprocessing, which was also used by Guibas et al.~\cite{ref:GuibasCo91}.
Recall that $\pi(b_v+1,k)$ is the concatenation of the canonical paths
of $O(\log n)$ nodes that are right children of the nodes on the path
in $\Psi(\pi)$ from $v$'s parent to the left child of $w$ (and these
nodes themselves are not on the path). Hence, this
sequence of nodes can be uniquely determined by the leaf-ancestor pair
$(v,w)$; we use $\pi_{v,w}$ to denote the above concatenated subpath
of $\pi$.

Correspondingly, in the preprocessing, for each leaf $v$ we do
the following. For each ancestor $w$ of $v$, we compute the compact
interval tree for the subpath $\pi_{v,w}$. As $v$ has $O(\log n)$
ancestors, computing the trees for all ancestors takes $O(\log^2 n)$
time using the merge algorithm of Lemma~\ref{lem:merge}. Hence, the
total preprocessing time on $v$ is $O(\log^2 n)$, and thus the total
preprocessing time on all leaves of $\Psi(\pi)$ is $O(n)$, for
$\Psi(\pi)$ has $O(n/\log^2 n)$ leaves. Due to the
above preprocessing, $T^+(b_v+1,k)$ is available during queries.

\subparagraph{Wrapping up.}
In summary, with $O(n)$ time preprocessing (excluding the time for
sorting the vertices of $\pi$), we can build a data structure of
$O(n)$ space that can answer each subpath hull query
in $O(\log n)$ time. Comparing with the method of Guibas et
al.~\cite{ref:GuibasCo91}, our innovation is threefold. First, we
process subpaths individually to handle queries of Case 1. Second, we
precompute the compact interval trees for convex hulls of the prefix
and suffix subpaths of $\pi(v)$ for each leaf $v$ of $\Psi(\pi)$. Third,
we use a smaller decomposition tree $\Psi(\pi)$ of only $O(n/\log^2
n)$ nodes.
The following theorem summarizes our result.

\begin{theorem}\label{theo:subpathhull}
Given a simple path $\pi$ of $n$ vertices in the plane, after all vertices
are sorted by $x$-coordinate, a data structure of $O(n)$ space can be
built in $O(n)$ time so that each subpath hull query can be
answered in $O(\log n)$ time. The query algorithm produces
a compact interval tree representing the convex hull of
the query subpath, which can support all binary-search-based
operations on the convex hull
in $O(\log n)$ time each. These operations include (but are not limited to) the following
(let $\pi'$ denote the query subpath and let $H(\pi')$ be its convex
hull):
\begin{enumerate}
	  \item
      Given a point, decide whether the point is in $H(\pi')$.
      \item
      Given a point outside $H(\pi')$, find the two tangents from the point to $H(\pi')$.
      \item
      Given a direction, find the most extreme point of $\pi'$ along the direction.
      \item
      Given a line, find its intersection with $H(\pi')$.
      \item
      Given a convex polygon (represented in any data structure that
	  supports binary search), decide whether it intersects $H(\pi')$,
	  and if not, find their common tangents (both outer and inner).
\end{enumerate}
In addition,  $H(\pi')$ can be output in time linear in the number of vertices of $H(\pi')$.
\end{theorem}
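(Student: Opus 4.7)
The plan is to assemble the data structure in three layers, and then verify that the standard compact-interval-tree machinery inherited from Guibas et al.~\cite{ref:GuibasCo91} gives all the advertised binary-search-based operations at no extra cost.

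First I would set up the top layer: after sorting the vertices of $\pi$ by $x$-coordinate, build the interval tree $T(\pi)$ (augmented with a constant-time lowest-common-ancestor structure~\cite{ref:BenderTh00,ref:HarelFa84}) and the decomposition tree $\Psi(\pi)$ pruned so that its leaves correspond to blocks of at most $\log^2 n$ vertices, giving $O(n/\log^2 n)$ internal nodes. For each internal node $v$ of $\Psi(\pi)$ I compute $T^+(\pi(v))$ bottom-up by merging children via Lemma~\ref{lem:merge}, paying $O(\log n)$ per node and thus $O(n/\log n)$ overall. The second, intra-leaf layer handles each leaf $v$ of $\Psi(\pi)$ separately: recursively build the analogous interval tree and decomposition tree on $\pi(v)$, and also partition $\pi(v)$ into $t_v\le\log n$ blocks of size $\le\log n$ whose prefix and suffix concatenations $\pi_v[1,t]$ and $\pi_v[t,t_v]$ are assigned precomputed compact interval trees by the incremental use of Lemma~\ref{lem:merge}. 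The third, leaf-ancestor layer precomputes, for each leaf $v$ of $\Psi(\pi)$ and each ancestor $w$, the compact interval tree of the canonical concatenation $\pi_{v,w}$. A careful accounting shows that each leaf absorbs $O(\log^2 n)$ preprocessing and $O(\log^2 n)$ space, so summed over $O(n/\log^2 n)$ leaves the total is $O(n)$.

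Next I would verify the query algorithm. Given $(i,j)$, locate the leaves of $\Psi(\pi)$ containing $p_i$ and $p_j$; if they coincide, recurse one level using the intra-leaf structure, which by the same analysis gives $O(\log\log n)=O(\log n)$ time. Otherwise let $w$ be their LCA, split $\pi(i,j)$ at the boundary $k=b_{w'}$ between $w$'s children, and for each half combine (a) a block-level piece $\pi(i,g)$ inside a single $\pi_v(t)$, built from scratch in $O(\log n)$ time by Lemma~\ref{lem:subpathhull}, (b) the precomputed suffix tree $T^+(\pi_v[t+1,t_v])$, and (c) the precomputed leaf-ancestor tree for $\pi_{v,w}$; by Lemma~\ref{lem:merge} the $O(1)$ merges needed to glue these pieces, and the final merge of the two halves, each cost $O(\log n)$.

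For the listed operations, I would invoke the properties of compact interval trees recalled in Section~\ref{sec:pre}: the returned $T^+(\pi(i,j))$ together with the reference interval tree $T(\pi)$ (or the intra-leaf reference tree in Case 1) supports the standard root-to-leaf search that locates, at $O(\log n)$ cost per step of a bounded list of candidate spanning edges, the unique upper-hull or lower-hull edge answering the query. Items (1)--(4) are classical binary-search predicates on a convex polygon stored this way; item (5) reduces to simultaneous descent in two compact interval trees paired with a binary-search representation of the query polygon, using the fact that two convex polygons admit at most a constant number of common tangents. Outputting $H(\pi')$ in $O(|H(\pi')|)$ time is immediate from traversing $T_U^+(\pi')$ and $T_L^+(\pi')$.

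The main obstacle I anticipate is the space bookkeeping: the persistent merges of Lemma~\ref{lem:merge} create trees whose total size could superficially be as large as the sum of query/preprocessing times, not the number of distinct hull edges, so I would need to argue that the precomputed compact interval trees for the $O(n/\log^2 n)$ nodes of $\Psi(\pi)$, together with the $O(\log n)$ prefix/suffix/leaf-ancestor trees per leaf, each of size $O(\log n)$, sum to $O(n)$; and that the persistent copies made inside a single query are $O(\log n)$-sized and can be discarded after the query so that they do not accumulate. Once that accounting is in hand, the theorem follows by combining the preprocessing bounds, the two-case query analysis, and the standard operations supported by the compact interval tree representation.
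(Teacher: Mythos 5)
Your proposal is correct and follows essentially the same route as the paper: the pruned decomposition tree $\Psi(\pi)$ with leaves of $O(\log^2 n)$ vertices, bottom-up merging via Lemma~\ref{lem:merge}, a second-level structure inside each leaf for Case 1, precomputed prefix/suffix block trees and leaf-ancestor trees $\pi_{v,w}$ for Case 2, and the Guibas--Hershberger--Snoeyink compact-interval-tree machinery for the listed hull operations. The space accounting you flag is handled exactly as you suggest (each precomputed merge adds only $O(\log n)$ persistent nodes, summing to $O(n)$, and per-query copies are transient), so there is no gap.
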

\begin{proof}
Refer to Guibas et al.~\cite{ref:GuibasCo91} for some details on how
to perform operations on the convex hull $H(\pi')$ using compact interval trees.
\end{proof}

\section{Ray-Shooting}
\label{sec:ray}

In this section, we present our results on the ray-shooting problem. The ray-shooting problem among lines is discussed in Section~\ref{sec:line}. Section~\ref{sec:segment} is concerned with the intersection detection problem and the ray-shooting problem among segments.

\subsection{Ray-shooting among lines}
\label{sec:line}

Given a set of $n$ lines in the plane, we wish to build a data structure so that the first line hit by a query ray can be found efficiently. The problem is usually tackled in the dual plane, e.g.,~\cite{ref:ChengAl92}. Let $P$ be the set of dual points of the lines. In the dual plane, the problem is equivalent to the following: Given a query line $l_q$, a pivot point $q\in l_q$, and a rotation direction (clockwise or counterclockwise), find the first point of $P$ hit by rotating $l_q$ around $q$.

A spanning path $\pi(P)$ of $P$ is a polygonal path connecting all points of $P$ such that $P$ is the vertex set of the path. Hence, $\pi(P)$ corresponds to a permutation of $P$. For any line $l$ in the plane, let $\sigma(l)$ denote the number of edges of $\pi(P)$ crossed by $l$. The {\em stabbing number} of $\pi(P)$ is the largest $\sigma(l)$ of all lines $l$ in the plane. It is known that a spanning path of $P$ with stabbing number $O(\sqrt{n})$ always exists~\cite{ref:ChazelleQu89}, which can be computed in $O(n^{1+\delta})$ time using Matou\v{s}ek's partition tree~\cite{ref:MatousekRa93} (e.g., by a method in~\cite{ref:ChazelleQu89}). Let $\pi'(P)$ denote such a path.
Note that $\pi'(P)$ may have self-intersections. Using $\pi'(P)$, Edelsbrunner et al.~\cite{ref:EdelsbrunnerIm89} gave an algorithm that can produce another spanning path $\pi(P)$ of $P$ such that the stabbing number of $\pi(P)$ is also $O(\sqrt{n})$ and $\pi(P)$ has no self-intersections (i.e., $\pi(P)$ is a simple path); the runtime of the algorithm is $O(n^{1.5})$.
Below we will use $\pi(P)$ to solve our problem.

We first build a data structure in the following lemma for $\pi(P)$.

\begin{mylemma}{\em (Chazelle and Guibas~\cite{ref:ChazelleFr862})}\label{lem:edgecross}
We can build a data structure of $O(n)$ size in $O(n\log n)$ time for any simple path of $n$ vertices, so that given any query line $l_q$, if $l_q$ intersects the path in $k$ edges, then these edges can be found in $O(k\log \frac{n}{k})$ time.
\end{mylemma}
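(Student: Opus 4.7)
The plan is to build a balanced binary tree $T$ on the edges of the path in which each internal node $v$ corresponds to a canonical subpath $\pi_v$ of consecutive edges; conceptually each node is associated with the convex hull $H(\pi_v)$. The key structural observation is that since $\pi_v$ is a connected polygonal chain, a line $l_q$ crosses some edge of $\pi_v$ if and only if $\pi_v$ has vertices strictly on both sides of $l_q$, which in turn is equivalent to saying that $l_q$ crosses $H(\pi_v)$. Thus the query algorithm descends $T$ recursively, pruning any subtree whose root's convex hull is not crossed by $l_q$; the leaves reached are exactly the $k$ crossed edges.

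For the query-time analysis, I count visited nodes. Each of the $k$ crossed edges has a unique ancestor at each depth of $T$, so at depth $\ell$ the recursion visits at most $\min(2^\ell, k)$ nodes. Summing over the $O(\log n)$ depths gives $\sum_{\ell=0}^{\lfloor\log k\rfloor} 2^\ell + \sum_{\ell=\lceil\log k\rceil}^{\lceil\log n\rceil} k = O(k) + k\log(n/k) = O(k\log(n/k))$ visited nodes. To match the claimed time bound I need the per-node work, namely deciding whether $l_q$ crosses $H(\pi_v)$, to be $O(1)$ amortized per visited node. This is handled by fractional cascading between the slope-sorted edge sequences of the children's and parent's hulls: once the position of $l_q$'s slope in the parent hull has been located (at the root in $O(\log n)$ time), the analogous positions in each child's hull are obtained in $O(1)$ time via bridge pointers, and a constant number of orientation tests then decides crossing.

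The principal obstacle is space. Storing $H(\pi_v)$ explicitly at every node of $T$ would require $\Theta(n\log n)$ total space. The remedy is a hive-graph-style encoding: at each internal node $v$ one stores only the few ``bridge'' edges of $H(\pi_v)$ that are newly introduced when merging the two children's hulls, plus pointers into the child hulls for the portions that are reused. A standard accounting argument shows that the total number of bridge edges summed over all nodes of $T$ is $O(n)$, so the whole structure fits in $O(n)$ space, with the fractional-cascading catalogs adding only a constant-factor overhead. Preprocessing runs bottom-up: each internal node is assembled from its children's hulls in time proportional to its own hull size, and the dominant cost across all of $T$ is the $O(n\log n)$ time to build and link the cascading structures. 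The main technical care needed is to argue that the bridge-and-pointer encoding still allows the $O(1)$-time line-versus-hull test at each visited node, which is exactly what the fractional-cascading augmentation is designed to guarantee.
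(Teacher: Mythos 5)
The paper gives no proof of this lemma at all: it is imported verbatim from Chazelle and Guibas~\cite{ref:ChazelleFr862}. Your outline follows the same known route as that construction: a balanced hierarchy of subpath hulls, the observation that a line crosses an edge of a connected subchain iff it separates the subchain's vertices, i.e., iff it cuts $H(\pi_v)$, the $O(k\log (n/k))$ bound on visited nodes, and fractional cascading to avoid a logarithmic search at every visited node. All of those ingredients are sound.

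The genuine gap is the step you defer at the end: reconciling $O(n)$ space with the $O(1)$-per-node line-versus-hull test. Fractional cascading, as you invoke it, requires an explicit slope-sorted catalog at each node, and its augmented catalogs occupy space proportional to the \emph{total} size of the original catalogs. Here the natural catalogs are the full edge-slope lists of the hulls $H(\pi_v)$, whose total size is $\Theta(n\log n)$ in the worst case (e.g., a path whose vertices are in convex position, where every subpath hull contains all of its vertices), so the cascading structure is not a ``constant-factor overhead'' on your $O(n)$ bridge storage. Conversely, the bridge-and-pointer encoding does occupy only $O(n)$ space (at most two new tangent edges per merge, since consecutive subpath hulls have at most two common tangents), but locating the query slope among the $O(1)$ bridges stored at a node does not identify the extreme vertex of that node's hull, so it does not give the constant-time crossing test; falling back to a binary search over the implicitly represented hull costs $O(\log|\pi_v|)$ per node, and summing $\min(2^{\ell},2k)\cdot(\log n-\ell)$ over the depths $\ell$ only yields $O\bigl(k\log^2(n/k)\bigr)$, a $\log(n/k)$ factor above the target. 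So as written your argument establishes either $O(n\log n)$ space with $O(k\log(n/k))$ query time, or $O(n)$ space with $O(k\log^2(n/k))$ query time; making the compact shared representation searchable in amortized constant time per visited node is exactly the technical heart of the Chazelle--Guibas result, and it is the piece missing from your sketch.
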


Then, we construct the subpath hull query data structure of Theorem~\ref{theo:subpathhull} for $\pi(P)$. This finishes our preprocessing.

Given a query line $l_q$, along with the pivot $q$ and the rotation direction, we first use Lemma~\ref{lem:edgecross} to find the edges of $\pi(P)$ intersecting $l_q$. As the stabbing number of $\pi(P)$ is $O(\sqrt{n})$, this steps finds $O(\sqrt{n})$ edges intersecting $l_q$ in $O(\sqrt{n}\log n)$ time. Then, using these edges we can partition $\pi(P)$ into $O(\sqrt{n})$ subpaths each of which does not intersect $l_q$. For each subpath, we use our subpath hull query data structure to compute its convex hull in $O(\log n)$ time. Next, we compute the tangents from the pivot $q$ to each of these $O(\sqrt{n})$ convex hulls, in $O(\log n)$ time each by Theorem~\ref{theo:subpathhull}. Using these $O(\sqrt{n})$ tangents, based on the rotation direction of $l_q$, we can determine the first point of $P$ hit by $l_q$ in additional $O(\sqrt{n})$ time. Hence, the total time of the query algorithm is $O(\sqrt{n}\log n)$.

\begin{theorem}\label{theo:raylines}
There exists a data structure of complexity $O(n^{1.5},n,\sqrt{n}\log n)$ for the ray-shooting problem among lines. The preprocessing time can be reduced to $O(n\log n)$ time by a randomized algorithm while the query time is bounded by $O(\sqrt{n}\log n)$ with high probability.
\end{theorem}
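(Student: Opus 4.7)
The plan is to execute, in full detail, the strategy sketched in the paragraph immediately before the theorem, and then layer Chan's randomization on top. I dualize the $n$ input lines to a point set $P$, turning each ray-shooting query into the problem of rotating a query line $l_q$ about a pivot $q\in l_q$ and reporting the first point of $P$ swept. Preprocessing has three stages. First, I build a simple spanning path $\pi(P)$ of stabbing number $O(\sqrt{n})$: starting from a (possibly self-intersecting) $O(\sqrt{n})$-stabbing spanning path obtained via a Matou\v{s}ek partition tree in $O(n^{1+\delta})$ time, I apply the Edelsbrunner et al.\ simplification to turn it into a simple path of the same asymptotic stabbing number in $O(n^{1.5})$ time; this dominates the preprocessing cost. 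Second, I equip $\pi(P)$ with the Chazelle--Guibas line-crossing structure of Lemma~\ref{lem:edgecross}, using $O(n)$ space and $O(n\log n)$ time. Third, after sorting $P$ by $x$-coordinate, I build the subpath convex hull query structure of Theorem~\ref{theo:subpathhull} on $\pi(P)$, in $O(n)$ additional time and space.

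To answer a query, I first invoke Lemma~\ref{lem:edgecross} on $l_q$ to report, in $O(\sqrt{n}\log n)$ time, the set $E$ of $k=O(\sqrt{n})$ edges of $\pi(P)$ crossed by $l_q$. Removing $E$ decomposes $\pi(P)$ into $O(\sqrt{n})$ subpaths whose vertex sets lie entirely on one side of $l_q$. For each such subpath I call Theorem~\ref{theo:subpathhull} to obtain, in $O(\log n)$ time, a compact interval tree for its convex hull, and then compute in another $O(\log n)$ time the tangent from $q$ on the side determined by the rotation direction; the tangent point is the first vertex of that subpath hit as $l_q$ rotates around $q$. Selecting the minimum rotation angle among these $O(\sqrt{n})$ tangent candidates, together with the $O(\sqrt{n})$ endpoints of the edges in $E$ (which include any dual points lying exactly on $l_q$), yields the first-hit point in $O(\sqrt{n})$ additional time. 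The total query cost is $O(\sqrt{n}\log n)$, and the overall space is $O(n)$ since all three preprocessed structures are linear.

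For the randomized bound, the only superlinearithmic deterministic step is the construction of the simple $O(\sqrt{n})$-stabbing spanning path. I replace the Matou\v{s}ek stage by Chan's randomized partition-tree construction, which yields an $O(\sqrt{n})$-stabbing (self-intersecting) spanning path in $O(n\log n)$ time with high probability; running the Edelsbrunner et al.\ simplification on this randomized input keeps us within $O(n\log n)$ time w.h.p.\ as well. Since the stabbing number is the only quantity feeding the query bound, the query cost remains $O(\sqrt{n}\log n)$ w.h.p. The main obstacle I foresee is not any single step, as all ingredients are now in place, but rather the bookkeeping needed to certify that the $O(\sqrt{n})$ subpath tangents plus the $O(\sqrt{n})$ endpoints of crossed edges truly cover every candidate for the first hit: each tangent must be directed to the correct side of the pivot consistent with the prescribed rotation direction, and degeneracies such as a dual point lying on $l_q$ must be absorbed into the explicit endpoint list rather than into the hull tangent computation.
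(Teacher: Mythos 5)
Your deterministic construction and query algorithm are essentially identical to the paper's: dualize, build a simple spanning path of stabbing number $O(\sqrt{n})$ via Matou\v{s}ek's partition tree followed by the Edelsbrunner et al.\ simplification ($O(n^{1.5})$, the dominant cost), attach the Chazelle--Guibas crossing structure of Lemma~\ref{lem:edgecross} and the subpath hull structure of Theorem~\ref{theo:subpathhull}, and at query time decompose the path at the $O(\sqrt{n})$ crossed edges, take one tangent from the pivot per resulting subpath, and select the first hit. That part is correct, including the $O(n)$ space and $O(\sqrt{n}\log n)$ query accounting.

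The randomized half, however, has a genuine gap. You propose to run Chan's randomized construction to get a (possibly self-intersecting) $O(\sqrt{n})$-stabbing spanning path in $O(n\log n)$ time and then feed it to the Edelsbrunner et al.\ simplification, asserting that this simplification ``keeps us within $O(n\log n)$ time w.h.p.'' No justification is given, and none is available from what you cite: the simplification algorithm's known running time is $O(n^{1.5})$, and nothing about the input being produced by a randomized algorithm makes it faster, so as written your randomized preprocessing bound is still $O(n^{1.5})$, not $O(n\log n)$. The way the paper avoids this is that Chan's $O(n\log n)$-time randomized algorithm is invoked to produce a spanning path that is \emph{already simple} and has stabbing number $O(\sqrt{n})$ with high probability, so the Edelsbrunner et al.\ step is skipped entirely in the randomized variant; only the $O(n\log n)$-time constructions of Lemma~\ref{lem:edgecross} and Theorem~\ref{theo:subpathhull} remain. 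With that substitution your argument goes through; without it, the claimed $O(n\log n)$ randomized preprocessing does not follow.
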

\begin{proof}
We first discuss the deterministic result. The query time is $O(\sqrt{n}\log n)$, as explained above. The space is used for the data structure in Lemma~\ref{lem:edgecross} and the subpath hull query data structure in Theorem~\ref{theo:subpathhull}, which is $O(n)$. For the preprocessing time, 
computing $\pi(P)$ takes $O(n^{1.5})$ time. Building the data structure for Lemma~\ref{lem:edgecross} and the subpath hull query data structure can be done in $O(n\log n)$ time. Hence, the total preprocessing time is $O(n^{1.5})$.

For the randomized result, Chan~\cite{ref:ChanOp12} gave an $O(n\log n)$ time randomized algorithm to compute a spanning path $\pi''(P)$ for $P$ such that $\pi''(P)$ is a simple path and the stabbing number of $\pi''(P)$ is at most $O(\sqrt{n})$ with high probability. After having $\pi''(P)$, we build the data structure for Lemma~\ref{lem:edgecross} and the subpath hull query data structure. Hence, the preprocessing takes $O(n\log n)$ time and $O(n)$ space, and the query time is bounded by $O(\sqrt{n}\log n)$ with high probability.
\end{proof}

\subparagraph{Remark.}
As indicated in~\cite{ref:EdelsbrunnerIm89}, ray-shooting can be used to determine whether two query points $p$ and $q$ are in the same face of the arrangement of a set of lines. Indeed, let $\rho$ be the ray originated from $p$ towards $q$. Then, $p$ and $q$ are in the same face of the arrangement if and only if $\rho$ hits the first line after $q$.

\medskip

We can extend the above algorithm to obtain the following result on the first-$k$-hit queries.

\begin{theorem}\label{theo:khitraylines}
Given a set of $n$ lines in the plane, we can build a data structure of $O(n)$ space in $O(n^{1.5})$ time so that given a ray and an integer $k$, we can find the first $k$ lines hit by the ray in $O(\sqrt{n}\log n + k\log n)$ time. The preprocessing time can be reduced to $O(n\log n)$ while the query time is bounded by $O(\sqrt{n}\log n+k\log n)$ with high probability.
\end{theorem}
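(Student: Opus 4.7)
The plan is to extend the query algorithm of Theorem~\ref{theo:raylines} into an incremental heap-based sweep that reports hits one by one. Since the underlying data structures -- Chazelle and Guibas' line-crossing structure of Lemma~\ref{lem:edgecross} and the subpath hull structure of Theorem~\ref{theo:subpathhull} built on a simple spanning path $\pi(P)$ of stabbing number $O(\sqrt{n})$ -- are exactly those already used in Theorem~\ref{theo:raylines}, the $O(n)$ space and $O(n^{1.5})$ (or randomized $O(n\log n)$) preprocessing bounds are inherited with no change.

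I will set up the query in the same dual formulation as Theorem~\ref{theo:raylines}: given the line $l_q$, pivot $q\in l_q$, and rotation direction, I will first use Lemma~\ref{lem:edgecross} to obtain in $O(\sqrt{n}\log n)$ time the $O(\sqrt{n})$ edges of $\pi(P)$ crossed by $l_q$; cutting these edges decomposes $\pi(P)$ into $O(\sqrt{n})$ subpaths each contained in one open half-plane of $l_q$, so the pivot $q$ is external to the convex hull of every such subpath. For each subpath I will invoke Theorem~\ref{theo:subpathhull} in $O(\log n)$ time to get its compact interval tree and then compute the tangent vertex from $q$ in the given rotation direction in $O(\log n)$ more time. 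A tangency argument shows that this vertex is the first one in the subpath met by the rotating line: at the tangent angle, all vertices of the subpath lie on one side of $l_q$, so any other vertex is reached only after further rotation.

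Next I will build a min-heap on these $O(\sqrt n)$ candidates, keyed by rotation angle around $q$. To report the first $k$ hits I will iterate $k$ times: extract-min to output the next hit $v=p_m$, identify its originating subpath $\pi(a,b)$, split it into the two sub-subpaths $\pi(a,m-1)$ and $\pi(m+1,b)$ (empty pieces being discarded; a single-vertex piece directly yielding its vertex as the next candidate), query Theorem~\ref{theo:subpathhull} for each piece, compute its tangent from $q$ in the rotation direction, and insert the resulting candidate(s) into the heap. Because at any moment the heap holds at most $O(\sqrt n+k)$ entries, each heap operation costs $O(\log n)$; combined with the $O(\log n)$ per subpath hull and per tangent query, each iteration runs in $O(\log n)$, yielding the claimed $O(\sqrt{n}\log n+k\log n)$ query time. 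The randomized preprocessing bound follows by substituting Chan's $O(n\log n)$ spanning-path construction, exactly as in Theorem~\ref{theo:raylines}.

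The main obstacle will be pinning down the invariant that justifies correctness: at every iteration the heap holds, for each current sub-subpath, a single candidate equal to the angularly smallest (and hence next-hit) vertex of that sub-subpath around $q$. This rests on two facts -- that the angularly smallest vertex in any subpath lying on one side of $l_q$ is a convex hull vertex of that subpath, and that after splitting at an extracted $v$ the pivot $q$ remains external to both sub-subpath hulls (since each new hull is contained in the old one), so the tangent argument still applies and the newly inserted candidates carry angles strictly larger than that of $v$. Both are intuitively clear but warrant a careful check to rule out angular wrap-around anomalies and confirm that the invariant is genuinely maintained across the $k$ iterations.
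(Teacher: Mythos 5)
Your proposal is correct, and it reaches the same bounds by a noticeably different maintenance scheme than the paper. The paper keeps the decomposition tied to the \emph{current} position of the rotating line: after the line sweeps past $p_i$, it updates $\Pi$ so that it again consists of the subpaths of $\pi(P)$ avoiding the line at its new position, which requires the case analysis of Fig.~\ref{fig:khit} (moving a vertex from one subpath to a neighbor, breaking a subpath into three, or merging three subpaths into one), and then recomputes the affected hulls and tangents in $O(\log n)$ time. You instead freeze the reference line at its initial position, never merge, and only split the piece containing each extracted vertex into $\pi(a,m-1)$ and $\pi(m+1,b)$, keeping an explicit min-heap of per-piece tangent candidates keyed by rotation angle. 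Your correctness argument is sound: every piece is a contiguous index range (so Theorem~\ref{theo:subpathhull} applies), its hull stays inside the original subpath's hull and hence inside one open halfplane of the initial line with $q$ on that line, so the angular interval of contact lies in $(0,\pi)$, no wrap-around occurs, and the directed tangent from $q$ yields exactly the piece's minimum-angle unextracted vertex; the heap then delivers the global next hit, and the $O(\sqrt{n}+k)$ heap size keeps every operation at $O(\log n)$. What your route buys is the elimination of the paper's geometric case analysis in favor of a purely combinatorial split-only invariant; what the paper's route buys is a decomposition that always reflects the current line (which is the more natural picture and keeps all pieces strictly off the sweeping line), at the cost of the three-case update. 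Note also that the paper leaves implicit how the minimum over the $O(\sqrt{n})$ candidates is extracted in $O(\log n)$ per hit, whereas your heap makes this explicit, so in that respect your write-up is actually more complete; preprocessing, space, and the randomized variant are inherited from Theorem~\ref{theo:raylines} in both treatments.
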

\begin{proof}
We still work in the dual plane and use the same notation as above. In the dual plane, the problem is equivalent to finding the first $k$ points that are hit by $l_q$ when it is rotating around the pivot $q$ following the given direction.
We perform exactly the same processing as before.
Let $p_1,p_2,\ldots,p_n$ be the points of $P$ ordered along $\pi(P)$.

\begin{figure}[t]
\begin{minipage}[t]{\textwidth}
\begin{center}
\includegraphics[height=1.0in]{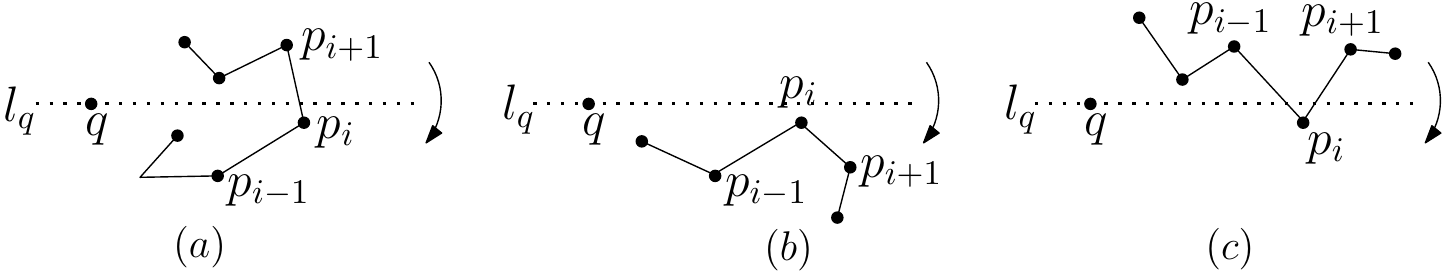}
\caption{\footnotesize Illustrating the three cases for Theorem~\ref{theo:khitraylines}: $l_q$ is the dashed line rotating clockwise around $q$.}
\label{fig:khit}
\end{center}
\end{minipage}
\vspace{-0.15in}
\end{figure}

Consider a query with $l_q$ and $q$. We first determine a set $\Pi$ of $O(\sqrt{n})$ subpaths of $\pi(P)$ that do not intersect $l_q$. Then, we find the first point $p_i$ hit by rotating $l_q$ in the same way as before. This takes $O(\sqrt{n}\log n)$ time. We continue rotating $l_q$ to find the second point. To this end, we need to update the set $\Pi$ so that the new $\Pi$ contains the $O(\sqrt{n})$ subpaths of $\pi(P)$ that do not intersect $l_q$ at its current position (i.e., after it rotated over $p_i$). As $l_q$ has rotated over only one point of $P$, we can update $\Pi$ in constant time as follows.

If $p_{i-1}$ and $p_{i+1}$ are in different sides of $l_q$, then $p_i$ is an endpoint of a subpath $\pi_1$ of $\Pi$ (e.g., see Fig.~\ref{fig:khit}(a)). Without loss of generality, we assume that $p_{i-1}$ is also in $\pi_1$. Thus, $p_{i+1}$ is the endpoint of another subpath $\pi_2$. To update $\Pi$, we remove $p_i$ from $\pi_1$ and append $p_i$ to $\pi_2$ (so $p_i$ becomes a new endpoint of $\pi_2$).

If $p_{i-1}$ and $p_{i+1}$ are in the same side of $l_q$, then there are two subcases depending on whether $p_i$ and $p_{i-1}$ are in the same side of $l_q$, where $l_q$ refers to the line at its original position before it rotated over $p_i$.
If $p_i$ and $p_{i-1}$ are in the same side of $l_q$, then all three points $p_{i-1},p_i,p_{i+1}$ are in the same subpath $\pi_3$ of $\Pi$ (e.g., see Fig.~\ref{fig:khit}(b)). To update $\Pi$, we break $\pi_3$ into three subpaths by removing the two edges $\overline{p_{i-1}p_i}$ and $\overline{p_{i}p_{i+1}}$ (so $p_i$ itself forms a subpath). If $p_i$ and $p_{i-1}$ are not in the same side of $l_q$, then the three points $p_{i-1},p_i,p_{i+1}$ are in three different subpaths of $\Pi$ (in particular, $p_i$ itself forms a subpath;  e.g., see Fig.~\ref{fig:khit}(c)). To update $\Pi$, we merge these three subpaths into one subpath.

Since updating $\Pi$ only involves $O(1)$ subpath changes as discussed above, we can compute the convex hulls of the new subpaths and the tangents from $q$ in $O(\log n)$ time by Theorem~\ref{theo:subpathhull}. Hence, computing the next hit point takes $O(\log n)$ time. We continue rotating $l_q$ in this way until $k$ points are found. The total query time is bounded by $O(\sqrt{n}\log n+k\log n)$.

For the same reason as in Theorem~\ref{theo:raylines}, the randomized result also follows.
\end{proof}

\subsection{Intersection detection and ray-shooting among segments}
\label{sec:segment}

Given a set $S$ of $n$ segments in the plane, an intersection detection query asks whether a query line intersects at least one segment of $S$. One motivation to study the problem is that it is a subproblem in our algorithm for the ray-shooting problem among segments.


To find a data structure to store the segments of $S$, we adapt the techniques of Overmars et al.~\cite{ref:OvermarsSt90} to the partition trees of Matou\v{s}ek~\cite{ref:MatousekEf92,ref:MatousekRa93} (to obtain the deterministic result) as well as that of Chan~\cite{ref:ChanOp12} (to obtain the randomized result). To store segments, Overmars et al.~\cite{ref:OvermarsSt90} used a so-called {\em interval partition tree}, whose underling structure is a conjugation tree of Edelsbrunner and Welzl~\cite{ref:EdelsbrunnerHa86}. The idea is quite natural due to the nice properties of conjugation trees: Each parent region is partitioned into exactly two disjoint children regions by a line. The drawback of conjugation trees is the slow $\widetilde{O}(n^{0.695})$ query time.
When adapting the techniques to more query-efficient partition trees such as those in \cite{ref:ChanOp12,ref:MatousekEf92,ref:MatousekRa93}, two issues arise. First, each parent region may have more than two children. Second, children regions may overlap. Chan's partition tree~\cite{ref:ChanOp12} does not have the second issue while both issues appear in Matou\v{s}ek's partition trees~\cite{ref:MatousekEf92,ref:MatousekRa93}. As a matter of fact, the second issue incurs a much bigger challenge. In the following, we first present our randomized result by using Chan's partition tree~\cite{ref:ChanOp12}, which is relatively easy, and then discuss the deterministic result using Matou\v{s}ek's partition trees~\cite{ref:MatousekEf92,ref:MatousekRa93}. The description of the randomized result may also serve as a ``warm-up'' for our more complicated deterministic result.

We begin with the following lemma, which solves a special case of the problem. The lemma will be needed in both our randomized and deterministic results.

\begin{mylemma}\label{lem:anchor}
Suppose all segments of $S$ intersect a given line segment.
\begin{enumerate}
\item
We can build a data structure of $O(n)$ space in $O(n\log n)$ time so that whether a query line intersects any segment of $S$ can be determined in $O(\log n)$ time.
\item
If the segments of $S$ are nonintersecting, then we can build a data structure of $O(n)$ space in $O(n\log n)$ time so that the first segment hit by a query ray can be found in $O(\log n)$ time.
\end{enumerate}
\end{mylemma}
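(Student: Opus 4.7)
The plan is to reduce both parts of Lemma~\ref{lem:anchor} to a constant number of half-plane emptiness (Part~1) or tangent/first-hit (Part~2) queries on the convex hulls of prefixes and suffixes of two endpoint sequences, using the anchor segment---call it $\sigma$---as a splitter. WLOG take $\sigma$ horizontal; for each $s_i\in S$ write $p_i:=s_i\cap\sigma$ and let $u_i,v_i$ be its endpoints above and below $\sigma$, respectively. Reindex so that $p_1<\dots<p_n$ along $\sigma$.

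For Part~1, let $\ell$ be the query line. If $\ell\cap\sigma=\emptyset$, all the $p_i$ lie in a single closed half-plane $H^+$ of $\ell$; since each $s_i$ passes through $p_i$, $\ell$ intersects $s_i$ iff at least one of $u_i,v_i$ lies in $H^-$, which is a single half-plane emptiness test on $\{u_i\}\cup\{v_i\}$, answered in $O(\log n)$ on a precomputed convex hull of these $2n$ points. Otherwise $\ell$ meets $\sigma$ at a point $q$, and a binary search in the sorted $p_i$'s gives the index $k$ with $p_k\le q<p_{k+1}$, so $p_1,\ldots,p_k$ and $p_{k+1},\ldots,p_n$ lie on opposite sides of $\ell$; the same local argument then reduces the decision to a constant number of prefix/suffix half-plane emptiness queries over the sequences $(u_i)$ and $(v_i)$.

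To support each prefix (resp.\ suffix) half-plane emptiness query in $O(\log n)$ time while keeping $O(n)$ total space, the plan is to insert the $u_i$'s in order $i=1,\ldots,n$ (and separately in reverse, and analogously for $v_i$'s) into Overmars--van Leeuwen's dynamic convex-hull tree, made partially persistent by node-copying in the Driscoll et al.\ style, so that the hull after each insertion is accessible and supports extreme-point queries in $O(\log n)$. This gives $O(n\log n)$ preprocessing, $O(n)$ space, and $O(\log n)$ time per half-plane emptiness query, completing Part~1.

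For Part~2, ray shooting under the nonintersecting hypothesis, perform the same prefix/suffix reduction. Within any selected index range the half-segments $s_i^+$ (resp.\ $s_i^-$) are pairwise disjoint and all anchored on $\sigma$, hence radially ordered around the origin of the query ray $\rho$; the first of them hit by $\rho$ is located by a tangent-from-origin query against the convex hull of the corresponding endpoint range (again $O(\log n)$ on the persistent hull), followed by an $O(1)$-way minimum over the $O(1)$ candidate ranges. The conceptually easy part is the case analysis; the main obstacle is squeezing the persistent prefix/suffix hull into $O(n)$ space (naive path copying of Overmars--van Leeuwen costs $O(n\log n)$), which may require reusing the compact-interval-tree ideas of Section~\ref{sec:result}; the other subtle point, specific to Part~2, is verifying that the tangent-based candidate really is the \emph{first} segment crossed---and this is precisely where the nonintersection hypothesis is used.
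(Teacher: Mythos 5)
Your reduction for Part~1 (split at the anchor $\sigma$, binary search for the crossing point of the query line with $\sigma$, then a constant number of half-plane emptiness tests on prefix/suffix endpoint sets) is sound, and the space issue you flag there is only a loose end, not a fatal one: since prefix hulls evolve by insertions only, each insertion deletes a contiguous chain of hull vertices that never reappear, so the total number of structural changes over all $n$ insertions is $O(n)$ and a partially persistent balanced search tree (Sarnak--Tarjan node copying) stores all prefix hulls in $O(n)$ space with $O(\log n)$ extreme-point queries; persistent Overmars--van Leeuwen is the wrong tool but also unnecessary.

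The genuine gap is Part~2. Disjointness plus a common anchor does \emph{not} make the upper half-segments radially ordered around an arbitrary ray origin: their angular intervals as seen from the origin can nest or partially overlap, and the first segment hit by $\rho$ need not be angularly extreme at all. Concretely, with $\sigma$ on the $x$-axis, take upper pieces at $x=1$ of height $5$, at $x=2$ of height $1$, and at $x=4$ of height $5$, and shoot the ray from $(3,0.5)$ leftward: the first segment hit is the short middle wall, whose top endpoint lies strictly inside the convex hull of the endpoints of the range, so no tangent-from-origin query on that hull can report it. First-hit is an occlusion/distance question along the ray, which convex hulls of endpoints simply do not encode, and this is exactly the step you leave unverified. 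The paper's proof takes a different route that handles occlusion directly: split each segment at the line $\ell$ through $\sigma$, observe (via Alevizos, Boissonnat, and Preparata) that the arrangement of the upper pieces in the halfplane above $\ell$ has an external cell of complexity $O(n)$, computable in $O(n\log n)$ time, and that when the segments are nonintersecting this is the \emph{only} cell; treating that cell as a simple polygon (with dummy edges at infinity) and equipping it with point-location and simple-polygon ray-shooting structures answers Part~1 (shoot vertically up from the point where the query line meets $\ell$ and test whether the edge hit is a dummy edge) and Part~2 (shoot $\rho$ inside the cell and return the segment containing the edge hit) in $O(\log n)$ time each. To repair your Part~2 you would need some mechanism of this kind that accounts for which curtain is crossed first along the ray, not hull tangents.
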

\begin{proof}
Let $s$ be the line segment that intersects all segments of $S$.
Without loss of generality, we assume that $s$ is horizontal. Let $\ell$ be the line containing $s$. For each segment $s'\in S$, we divide it into two subsegments by its intersection with $\ell$; let $S_1$ (resp., $S_2$) be the set of all such subsegments above (resp., below) $\ell$. In the following we describe our preprocessing algorithm for $S_1$; the set $S_2$ will be preprocessed by the same algorithm.

We consider the line segment arrangement $\calA$ of all segments of $S_1$ and the line $\ell$ in the closed halfplane above $\ell$. Alevizos et al.~\cite{ref:AlevizosAn90} proved that every cell of $\calA$ is of complexity $O(n)$. Let $C$ denote the external cell of $\calA$, i.e., the cell containing the left endpoint point of $s$. Alevizos et al.~\cite{ref:AlevizosAn90} gave an $O(n\log n)$ time algorithm to compute $C$. As $C$ is simply connected, we may treat it as a simple polygon; for this, we could add two edges at infinity so that the closed halfplane above $\ell$ becomes a big triangle and we call the two edges {\em dummy edges}. In $O(n)$ time we build a point location data structure~\cite{ref:EdelsbrunnerOp86,ref:KirkpatrickOp83} on $C$ so that given any point $p$ in the plane, we can determine whether $p\in C$ in $O(\log n)$ time. We also build a ray-shooting data structure~\cite{ref:ChazelleRa94,ref:ChazelleVi89,ref:HershbergerA95} on $C$ in $O(n)$ time so that given a ray whose origin is in $C$, the first edge of the boundary $\partial C$ hit by the ray can be found in $O(\log n)$ time. This finishes our preprocessing for $S_1$, which uses $O(n\log n)$ time and $O(n)$ space.
We do the same preprocessing for $S_2$.

Given a query line $l$, $l$ intersects a segment of $S$ if and only if it intersects a segment of $S_1\cup S_2$. Hence, it suffices to determine whether $l$ intersects a segment of $S_1$ and whether $l$ intersects a segment of $S_2$. Below we show that whether $l$ intersects a segment of $S_1$ can be determined in $O(\log n)$ time. The same is true for the case of $S_2$.

We first assume that $l$ is not parallel to $\ell$. Let $p$ be the intersection of $l$ and $\ell$. We first determine whether $p$ is in $C$ by the point location data structure on $C$. If $p\not\in C$, then $p$ is in an internal cell of $\calA$, implying that $l$ must intersect a segment of $S_1$. Otherwise, let $\rho$ be the ray from $p$ going upwards. Using the ray-shooting data structure, we find the first edge $e$ of $\partial C$ hit by $\rho$. Observe that $l$ intersects a segment of $S_1$ if and only if $e$ is not a dummy edge. Hence, we can determine whether $l$ intersects a segment of $S_1$ in $O(\log n)$ time. If $l$ is parallel to $\ell$, then we can use a similar algorithm.
This proves the first statement of the lemma.

\medskip
For the second statement of the lemma, since the segments of $S$ are nonintersecting, $C$ is the only cell of $\calA$. This nice property can help us to answer the ray-shooting problem on $S$. We build a ray-shooting data structure on $C$ as above. We do the same preprocessing for $S_2$.

Given any query ray $\rho$ with origin $p$.
To find the first segment of $S$ hit by $\rho$, it is sufficient to find the first segment of $S_1$ hit by $\rho$ and find the first segment of $S_2$ hit by $\rho$. In the following, we show that the first segment of $S_1$ hit by $\rho$ can be found in $O(\log n)$ time. The same algorithm works for the case $S_2$ as well.

Without loss of generality, we assume that $\rho$ is going upwards.
If $p$ is above $\ell$, then $p$ is in $C$. Using the ray-shooting data structure, we find the first edge $e$ of $\partial C$ hit by $\rho$. If $e$ is a dummy edge, then $\rho$ does not hit any segment of $S_1$; otherwise, the segment that contains $e$ is the first segment of $S_1$ hit by $\rho$. If $p$ is below $\ell$, let $p'$ be the intersection between $\rho$ and $\ell$. Now we can follow the same algorithm as above by considering $p'$ as the new origin of $\rho$. Hence, the query time is $O(\log n)$.
\end{proof}

\subsubsection{The randomized result}
\label{sec:randomized}

We first briefly review Chan's partition tree~\cite{ref:ChanOp12} (which works for any fixed dimensional space; but for simplicity we only discuss it in 2D, which suffices for our problem). Chan's partition tree for a set $P$ of $n$ points, denoted by $T$, is a hierarchical structure by recursively subdividing the plane into triangles. Each node $v$ of $T$ corresponds to a triangle, denoted by $\triangle(v)$. If $v$ is the root, then $\triangle(v)$ is the entire plane. If $v$ is not a leaf, then $v$ has $O(1)$ children whose triangles form a disjoint partition of $\triangle(v)$. Define $P(v)=P\cap \triangle(v)$. The set $P(v)$ is not explicitly stored at $v$ unless $v$ is a leaf, in which case $|P(v)|=O(1)$. The height of $T$ is $O(\log n)$. Let $\kappa(T)$ denote the maximum number of triangles of $T$ that are crossed by any line in the plane. Chan~\cite{ref:ChanOp12} gave an $O(n\log n)$ time randomized algorithm to compute $T$ such that $\kappa(T)$ is at most $O(\sqrt{n})$ with high probability.

Let $P$ be the set of the endpoints of all segments of $S$ (so $|P|=2n$). We first build the tree $T$ as above.
We then store the segments of $S$ in $T$, as follows. For each segment $s$, we apply the following algorithm. Starting from the root of $T$, for each node $v$, we assume that $s$ is contained in $\triangle(v)$, which is true when $v$ is the root. If $v$ is a leaf, then we store $s$ at $v$; let $S(v)$ denote all segments stored at $v$. If $v$ is not a leaf, then we check whether $s$ is in $\triangle(u)$ for a child $u$ of $v$. If yes, we proceed on $u$. Otherwise, for each child $u$, for each edge $e$ of $\triangle(u)$, if $s$ intersects $e$, then we store $s$ at the edge $e$ (in this case we do not proceed to the children of $u$); denote by $S(e)$ the set of edges stored at $e$. This finishes the algorithm for storing $s$. As each node of $T$ has $O(1)$ children, $s$ is stored $O(1)$ times and the algorithm runs in $O(\log n)$ time. In this way, it takes $O(n\log n)$ time to store all segments of $S$, and the total sum of $|S(e)|$ and $|S(v)|$ for all triangle edges $e$ and all leaves $v$ is $O(n)$. In addition, $|S(v)|=O(1)$ for any leaf $v$, since $|P(v)|=O(1)$ and both endpoints of each segment $s\in S(v)$ are in $P(v)$.

Next, for each triangle edge $e$, since all edges of $S(e)$ intersect $e$,
we preprocess $S(e)$ using Lemma~\ref{lem:anchor}(1). Doing this for all triangle edges $e$ takes $O(n\log n)$ time and $O(n)$ space.

Consider a query line $l$. Our goal is to determine whether $l$ intersects any segment of $S$. Starting from the root, we determine the set of nodes $v$ whose triangles $\triangle(v)$ are crossed by $l$.
For each such node $v$, if $v$ is a leaf, then we check whether $s$ intersects $l$ for each segment $s\in S(v)$; otherwise, for each edge $e$ of $\triangle(v)$, we use the query algorithm of Lemma~\ref{lem:anchor}(1) to determine whether $l$ intersects any segment of $S(e)$. As the number of nodes $v$ whose triangles $\triangle(v)$ crossed by $l$ is at most $\kappa(T)$ and $S(v)=O(1)$ for each leaf $v$, the total time of the query algorithm is $O(\kappa(T)\cdot \log n)$. The correctness of the algorithm is discussed in the proof of Theorem~\ref{theo:intersectionrandomize}.

\begin{theorem}\label{theo:intersectionrandomize}
Given a set $S$ of $n$ (possibly intersecting) segments in the plane, we can build a data structure of $O(n)$ space in $O(n\log n)$ time so that whether a query line intersects any segment of $S$ can be determined in $O(\sqrt{n}\log n)$ time with high probability.
\end{theorem}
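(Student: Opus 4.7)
The plan is to verify the three complexity bounds and the correctness of the algorithm spelled out just above the theorem statement. The algorithm has three ingredients: Chan's randomized partition tree $T$ built on the $2n$ endpoints of segments of $S$, the insertion of each segment into $T$ along a root-to-stop path with copies deposited on crossed child-triangle edges, and the preprocessing of each per-edge set $S(e)$ by Lemma~\ref{lem:anchor}(1), which applies because every segment of $S(e)$ crosses the segment $e$.

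For preprocessing time and space, I would observe that Chan's construction builds $T$ in $O(n\log n)$ randomized time. Each segment $s$ descends $T$ until it no longer fits in a single child, where it is stored at the $O(1)$ child-triangle edges it crosses (or at the leaf itself); since the descent halts at exactly one node, $s$ contributes $O(1)$ stored copies, giving $\sum_e |S(e)| + \sum_{v\text{ leaf}}|S(v)| = O(n)$. Building the data structures of Lemma~\ref{lem:anchor}(1) on the $S(e)$'s then costs $O\bigl(\sum_e |S(e)|\log|S(e)|\bigr) = O(n\log n)$ time and $O(n)$ space in total.

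For query time, the nodes $v$ with $\triangle(v)$ crossed by the query line $l$ form a subtree of $T$ of size at most $\kappa(T)$, which by Chan's guarantee is $O(\sqrt{n})$ with high probability. At each such internal $v$ the algorithm calls Lemma~\ref{lem:anchor}(1) on $O(1)$ child-triangle-edge sets in $O(\log n)$ time each; at each leaf it scans $O(1)$ stored segments. Summing yields the claimed $O(\sqrt{n}\log n)$ query time with high probability.

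The main point to nail down is correctness. Fix $s \in S$ intersected by $l$ and let $v$ be the node at which the insertion of $s$ halted; then $s \subseteq \triangle(v)$. Because $l$ crosses $s$, it crosses $\triangle(v)$, so $v$ is visited by the query. If $v$ is a leaf, then $s \in S(v)$ is tested directly. Otherwise $s$ is not contained in any single child triangle (else the descent would have continued), and since the children tile $\triangle(v)$ in Chan's tree, $s$ must cross at least one child-triangle edge $e$, which causes $s \in S(e)$. At $v$ the algorithm invokes Lemma~\ref{lem:anchor}(1) on $S(e')$ for every such edge $e'$; the call on $S(e)$ detects an intersection with $l$, so the algorithm reports ``yes''. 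Conversely, any positive answer exhibits an actual segment of $S$ intersected by $l$, so the algorithm is correct.
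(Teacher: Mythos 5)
Your proposal is correct and follows essentially the same route as the paper: Chan's randomized partition tree on the $2n$ endpoints, each segment deposited at the $O(1)$ child-triangle edges it crosses when its descent halts (or at a leaf), each $S(e)$ preprocessed via Lemma~\ref{lem:anchor}(1), and a query that spends $O(\log n)$ per node crossed by the line, giving $O(\kappa(T)\log n)=O(\sqrt{n}\log n)$ with high probability. The only cosmetic difference is bookkeeping: you invoke the per-edge structures for the children's edges at each crossed node and argue correctness via the halting node, whereas the paper invokes $S(e)$ for the edges of $\triangle(v)$ at each crossed node $v$ and argues correctness via the child triangle containing the intersection point; the two differ by a one-level shift and a constant factor only.
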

\begin{proof}
We have discussed the preprocessing time and space. We have also shown that the query time is  $O(\kappa(T)\cdot \log n)$. Since $\kappa(T)$ is bounded by $O(\sqrt{n})$ with high probability, the query time is bounded by $O(\sqrt{n}\log n)$ with high probability. It remains to show the correctness of the query algorithm. Indeed, if the algorithm reports the existence of an intersection, then according to our algorithm, it is true that $l$ intersects a segment of $S$. On the other hand, suppose $l$ intersects a segment $s$, say, at a point $p$. If $s$ is stored at $S(v)$ for a leaf $v$, then $l$ must cross $\triangle(v)$ and thus our algorithm will detect the intersection. Otherwise, $s$ must be stored in $S(e)$ for an edge $e$ of a triangle $\triangle(u)$ that contains $p$. Since $p\in l$, $l$ must cross $\triangle(u)$. According to our query algorithm, the query algorithm of Lemma~\ref{lem:anchor}(1) will be invoked on $S(e)$, and thus the algorithm will report the existence of intersection.
\end{proof}

Suppose the segments of $S$ are nonintersecting.
In the above algorithm, if we replace Lemma~\ref{lem:anchor}(1) by Lemma~\ref{lem:anchor}(2) in both the preprocessing and query algorithms, then we can obtain the following result.

\begin{theorem}
Given a set $S$ of $n$ nonintersecting segments in the plane, we can build a data structure of $O(n)$ space in $O(n\log n)$ time so that the first segment of $S$ hit by a query ray can be found in $O(\sqrt{n}\log n)$ time  with high probability.
\end{theorem}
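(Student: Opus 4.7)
The plan is to mirror the construction of Theorem~\ref{theo:intersectionrandomize} and replace every call to Lemma~\ref{lem:anchor}(1) by the corresponding call to Lemma~\ref{lem:anchor}(2), exploiting the fact that the segments of $S$ are now nonintersecting. In preprocessing I would build the same randomized Chan partition tree $T$ on the $2n$ endpoints of $S$ and distribute the segments of $S$ into the sets $S(v)$ at leaves and $S(e)$ at triangle edges exactly as before; this part is unchanged and costs $O(n\log n)$ time and $O(n)$ space. For each triangle edge $e$, all segments of $S(e)$ cross the line segment $e$ and are pairwise nonintersecting, so Lemma~\ref{lem:anchor}(2) applies directly to $S(e)$, yielding an $O(|S(e)|)$-space structure built in $O(|S(e)|\log|S(e)|)$ time that answers ray-shooting over $S(e)$ in $O(\log n)$ time. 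Summed over all edges and leaves, the total preprocessing remains $O(n\log n)$ with $O(n)$ space.

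The query algorithm also mirrors Theorem~\ref{theo:intersectionrandomize}. Given a ray $\rho$, I would traverse $T$ from the root, descending only into nodes whose triangles are crossed by the line supporting $\rho$; with high probability the traversal visits $O(\kappa(T)) = O(\sqrt{n})$ triangles. At each visited leaf $v$ I test $\rho$ directly against the $O(1)$ segments of $S(v)$; at each visited internal triangle, for every edge $e$ I invoke the ray-shooting query of Lemma~\ref{lem:anchor}(2) on $S(e)$ with $\rho$, each call taking $O(\log n)$ time and returning at most one candidate first hit. Finally I return the candidate whose hit point is closest to the origin of $\rho$ (or report no hit if no candidate is produced), giving a total query time of $O(\sqrt{n}\log n)$ with high probability.

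The only step beyond bookkeeping is correctness, and here I would reuse the argument from the proof of Theorem~\ref{theo:intersectionrandomize}. If $s^{*}$ is the true first segment hit by $\rho$, at a point $p^{*}$, then $s^{*}$ is stored either at a leaf $v$ with $p^{*}\in\triangle(v)$ or at an edge $e$ of the child triangle containing $p^{*}$ (since $s^{*}$ is not fully contained in that child's triangle and must therefore cross one of its edges); in either case the line supporting $\rho$ crosses that triangle and the corresponding substructure is queried. Because $s^{*}$ is the ray-first hit across all of $S$, it is \emph{a fortiori} the ray-first hit in the local set $S(e)$ (or is detected by the brute-force check at the leaf), so Lemma~\ref{lem:anchor}(2) returns $s^{*}$ as a candidate and the global closest-hit selection recovers it. The main subtlety to watch is just this point: it would be wrong to return the first hit found at any single substructure, but taking the minimum over all $O(\sqrt{n})$ candidates preserves correctness and adds only $O(\sqrt{n})$ to the query cost, which is absorbed in the $O(\sqrt{n}\log n)$ bound.
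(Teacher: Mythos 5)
Your proposal is correct and follows essentially the same route as the paper: reuse the randomized partition-tree construction of Theorem~\ref{theo:intersectionrandomize}, substitute Lemma~\ref{lem:anchor}(2) for Lemma~\ref{lem:anchor}(1) on each set $S(e)$, query all $O(\kappa(T))$ crossed triangles, and return the candidate hit closest to the ray's origin. Your correctness argument (the global first hit is stored at a visited leaf or at an edge of a crossed triangle, hence appears among the candidates) matches the paper's reasoning, which it inherits from Theorem~\ref{theo:intersectionrandomize}.
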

\begin{proof}
In the preprocessing, we use Lemma~\ref{lem:anchor}(2) to preprocess $S(e)$ for each triangle edge $e$. The total preprocessing time is $O(n\log n)$ and the space is $O(n)$. Given a query ray $\rho$, we find the set of nodes $v$ whose triangles $\triangle(v)$ are crossed by $l$ in $O(\kappa(T))$ time. For each such node $v$, if $v$ is a leaf, then we check whether $\rho$ hits $s$ for each segment $s\in S(v)$. Otherwise, for each edge $e$ of $\triangle(v)$, we use the query algorithm of Lemma~\ref{lem:anchor}(2) to find the first segment of $S(e)$ hit by $\rho$. Finally, among all segments found above that are hit by $\rho$, we return the one whose intersection with $\rho$ is closest to the origin of $\rho$. The time analysis and algorithm correctness are similar to those of Theorem~\ref{theo:intersectionrandomize}.
\end{proof}

To solve the ray-shooting problem among (possibly intersecting) segments, as discussed in Section~\ref{sec:introray}, Cheng and Janardan~\cite{ref:ChengAl92} gave an algorithm that uses both an algorithm for the ray-shooting problem among lines and an algorithm for the intersection detection problem. If we replace their algorithms for these two problems by our new results in Theorems~\ref{theo:raylines} and~\ref{theo:intersectionrandomize}, then we can obtain Theorem~\ref{theo:randomray}. For the completeness of this paper, we reproduce Cheng and Janardan's algorithm~\cite{ref:ChengAl92} in the proof of Theorem~\ref{theo:randomray}.

\begin{theorem}\label{theo:randomray}
Given a set $S$ of $n$ (possibly intersecting) segments in the plane, we can build a data structure of $O(n\log n)$ space in $O(n\log^2 n)$ time such that the first segment of $S$ hit by a query ray can be found in $O(\sqrt{n}\log n)$ time with high probability.
\end{theorem}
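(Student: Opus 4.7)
The plan is to reuse Cheng and Janardan's~\cite{ref:ChengAl92} reduction of ray-shooting among segments to $O(\log n)$ instances of ray-shooting among lines and line/segment intersection detection, substituting Theorems~\ref{theo:raylines} and~\ref{theo:intersectionrandomize} for the primitives they used. I would build a balanced binary tree $T$ over the segments of $S$; for each internal node $v$, let $S(v)$ denote the set of segments in the leaves of the subtree rooted at $v$. At each such $v$, I would store two secondary structures: the ray-shooting-among-lines structure of Theorem~\ref{theo:raylines} on the supporting lines of the segments in $S(v)$, and the intersection-detection structure of Theorem~\ref{theo:intersectionrandomize} on $S(v)$ itself. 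Since $\sum_v |S(v)| = O(n\log n)$, summing the two theorems' bounds over all nodes directly yields the claimed $O(n\log n)$ space and $O(n\log^2 n)$ preprocessing time.

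For a query ray $\rho$, the plan is to descend along a single root-to-leaf path in $T$. At an internal node $v$ with children $v_\ell$ and $v_r$, I would perform $O(1)$ calls to the secondary structures stored at the children to decide which of $S(v_\ell)$ and $S(v_r)$ contains the first segment of $S(v)$ hit by $\rho$, and then recurse into that child. Intuitively, intersection detection against the line supporting $\rho$ certifies which children contain at least one crossed segment, and ray-shooting among supporting lines produces, for each surviving child, a candidate first-hit whose intersection with $\rho$ can be verified against the segment endpoints and the positive direction of $\rho$; the earlier of the two valid candidates identifies the correct child. Summing the per-node cost $O(\sqrt{|S(v)|}\,\log|S(v)|)$ along the descent path gives a geometric series dominated by the root-level $O(\sqrt{n}\log n)$.

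The main technical obstacle I expect is the correctness of the per-node decision: a ray-shooting oracle that returns the first supporting line of $S(v_\ell)$ hit by $\rho$ need not return a line whose underlying segment is actually crossed, so a naive localization based only on line hits can descend into the wrong child and miss a genuine earlier segment hit in the sibling subtree. I would resolve this, following~\cite{ref:ChengAl92}, by combining the two oracles a constant number of times per node so as to distinguish ``no segment hit in this subtree'' from ``first segment hit lies farther along $\rho$ in this subtree'' and then to select the child containing the globally earliest hit; it is crucial to keep this per-node overhead at $O(1)$ oracle calls so that the overall query cost does not acquire an extra logarithmic factor. Finally, since Theorem~\ref{theo:intersectionrandomize} provides its query bound with high probability, a union bound over the $O(\log n)$ nodes on the descent path preserves the high-probability guarantee after a mild inflation of the hidden constant in the probability exponent.
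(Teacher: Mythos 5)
Your high-level plan (reduce to the ray-shooting-among-lines and intersection-detection primitives via Cheng--Janardan's scheme, then plug in Theorems~\ref{theo:raylines} and~\ref{theo:intersectionrandomize}) is the right one, but the data structure you actually describe is not Cheng--Janardan's, and the step you defer to them is exactly the step your version cannot perform. You build a balanced binary tree over the segments with arbitrary subtree sets $S(v)$ and claim that, at each node, $O(1)$ calls to the two oracles on the children suffice to decide which child contains the globally first segment hit by $\rho$. With geometrically unstructured child sets this decision is not available from those oracles: the intersection-detection structure answers queries for a \emph{line}, so it cannot distinguish crossings behind the ray's origin from crossings ahead of it; and the first supporting line of $S(v_\ell)$ hit by $\rho$ may be hit off its segment, in which case it gives essentially no information about where (or whether) the first true segment hit in that child occurs, so you cannot compare the two children's candidates. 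Deciding ``which child holds the nearest segment hit'' is itself a ray-shooting query on each child, so the recursion is circular; no constant number of line-hit/line-detection calls per node repairs this.

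What makes the reduction work in the paper is structural, not a per-node trick: the primary tree is a \emph{segment tree on the $x$-coordinates} of the endpoints. Each canonical set $S(v)$ consists of segments that span the vertical strip $Int(v)\times(-\infty,+\infty)$, so inside that strip a segment and its supporting line coincide, and Theorem~\ref{theo:raylines} applied to the supporting lines of $S(v)$, with the ray clipped to the strip, genuinely returns the first \emph{segment} of $S(v)$ hit there. Likewise, the subtree sets $S'(v)$ are trimmed to the strip, and during the query the origin $q$ lies outside the strip so $\rho$ spans it; hence ``does the ray hit a segment of $S'(v)$'' reduces to a line-vs-segment detection query, which is what Theorem~\ref{theo:intersectionrandomize} supports. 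The query is then a percolate-up from the leaf containing $q$ followed by a percolate-down, calling $R(v)$ at the $O(\log n)$ canonical nodes encountered and $D(\cdot)$ at their children, and finally taking the candidate closest to $q$ -- not a single root-to-leaf descent in which the answer lives in a unique child. Your space bound $\sum_v|S(v)|=O(n\log n)$, the geometric-decay argument for the query time, and the union bound over $O(\log n)$ high-probability events are all fine, but without the strip-spanning canonical decomposition the correctness of the per-node step fails, and that is the heart of the proof.
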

\begin{proof}
We reproduce Cheng and Janardan's data structure~\cite{ref:ChengAl92} but instead use our new results for the ray-shooting problem among lines and the intersection detection problem.

For ease of discussion, we assume that no segment of $S$ is vertical.
The underling structure is a segment tree $T$ on the segments of
$S$~\cite{ref:deBergCo08}. Specifically, let $x_1,x_2,\ldots,x_{2n}$
be the $x$-coordinates of the endpoints of the segments of $S$ sorted
from left to right. These values partition the $x$-axis into $4n+1$
intervals $(-\infty,x_1),[x_1,x_1], (x_1,x_2),[x_2,x_2],\ldots,(x_{2n},+\infty)$.
$T$ is a complete binary tree whose leaves correspond to
the above intervals in order from left to right. Each internal
node $v$ is associated with an interval $Int(v)$ that is the union of all intervals in
the leaves of $T(v)$, where $T(v)$ is the subtree rooted at $v$.
For each segment $s\in S$, it is stored at a node $v$ if $Int(v)\subseteq
[x(s),x'(s)]$ and $Int(parent(v))\not\subseteq [x(s),x'(s)]$, where
$x(s)$ and $x'(s)$ are the $x$-coordinates of the left and right endpoints of $s$,
respectively, and $parent(v)$ is the parent of $v$ in $T$; let $S(v)$ denote the set of
all segments stored at $v$.
Each segment of $s$ is stored in $O(\log n)$ nodes and the total space
is $O(n\log n)$.

The above describes a standard segment tree. For solving our
problem, each internal node $v$ also stores another set
$S'(v)=\bigcup_{u\in T(v)}S(u)$.
One can check that both $|S(v)|$ and $|S'(v)|$ are bounded by
$O(|T_v|)$, where $|T_v|$ refers to the number of leaves of $T_v$. Finally,
we trim the segments of $S'(v)$ by only keeping the portions in the
vertical strip $Int(v)\times(-\infty,+\infty)$, i.e., for each segment
$s\in S'(v)$, we only keep its subsegment in the trip in $S'(v)$.

For each node $v\in T$, we construct the
ray-shooting-among-line data structure in Theorem~\ref{theo:raylines}
(using the randomized result with $O(n\log n)$ preprocessing time) on
the supporting lines of the segments of $S(v)$; let $R(v)$ denote the
data structure. We also construct the intersection detection
data structure in Theorem~\ref{theo:intersectionrandomize} on the
segments of $S'(v)$; let $D(v)$
denote the data structure.
This finishes the preprocessing for our problem, which uses $O(n\log^2 n)$ time and
$O(n\log n)$ space. We discuss the query algorithm below.

Consider a query ray $\rho_q$, with origin $q$. Without loss of
generality, we assume that $\rho_q$ goes rightwards. Starting from the
root of $T$, we locate the leaf whose interval contains $q$. Then,
from the leaf we go upwards in $T$ until we find the first node whose
right node $u$ is not on the path and $\rho_q$ intersects a segment of
$S'(u)$. Note that since segments of $S'(u)$ are all in the strip
$Int(u)\times(-\infty,+\infty)$ and $q$ is to the left of the strip
(and thus $\rho_q$ spans the strip), determining whether $\rho_q$
intersects a segment of $S'(u)$ is equivalent to determining whether
the supporting line of $\rho_q$ intersects a segment of $S'(u)$, and
thus we can use the data structure $D(u)$.
We call the above the {\em percolate-up} procedure. Next,
starting from $u$, we run a {\em percolate-down} procedure as follows. Suppose the
procedure is now considering a node $v$ (initially $v=u$). We first
find the first segment (if exists) of $S(v)$ hit by $\rho_q$ within
the strip $Int(v)\times(-\infty,+\infty)$. Notice that all segments of
$S(v)$ span the strip. Thus, the above problem can be solved by
calling the ray-shooting data structure $R(v)$
using the portion $\rho'$ of $\rho_q$ that lies to the right of
the left vertical line of the strip. We keep the segment found
by $R(v)$ if and only if the intersection of the segment and $\rho'$
is in the trip. Let $left(v)$ and $right(v)$ denote the left and right
children of $v$, respectively. Next, we check whether $\rho_q$ intersects a
segment of $S'(left(v))$, which, as discussed above, can be done by
using the data structure $D(left(v))$. If yes, then we
proceed on $left(v)$ recursively. Otherwise, we check whether $\rho_q$ intersects a
segment of $S'(right(v))$ by using the data structure $D(right(v))$.
If yes, then we proceed on $right(v)$ recursively. Otherwise, we
stop the algorithm. After the percolate-down procedure, among the segments found above (by
$R(v)$), the one whose intersection with $\rho_q$ is closest to the
origin $q$ is
the first segment of $S$ hit by $\rho_q$.

For the query time, it is not difficult to see that the percolate-up
procedure calls the intersection detection data structure $D(v)$ for
$O(\log n)$ nodes $v$, each taking $O(\sqrt{|S'(v)|}\log{n})$ time with high probability. Notice that these nodes $v$ are on distinct levels of $T$.
Recall that $|S'(v)|=O(|T_v|)$. Hence, $|S'(v)|$ decreases
geometrically if we order these nodes $v$ by their distances from the
root. Therefore, the total time on calling
$D(v)$ for all nodes $v$ is $O(\sqrt{n}\log n)$ with high probability\footnote{We provide some explanations here. Suppose calling $D(v)$ for each node $v$ takes $O(\sqrt{|S'(v)|}\log{n})$ time with probability at least $1-1/n^c$ for a constant $c$. Let $c'>0$ be a constant smaller than $c$. Then, $n^c>n^{c'}\cdot O(\log n)$ for sufficiently large $n$. Hence, calling $D(v)$ for all $O(\log n)$ nodes $v$ takes $O(\sqrt{n}\log n)$ time with probability at least $1-1/n^c\cdot O(\log n)> 1-1/n^{c'}$. Therefore, the $O(\sqrt{n}\log n)$ time bound holds with high probability.}. The percolate-down procedure calls $D(v)$ for $O(\log n)$ nodes $v$,
and at most two such nodes are at the same level of $T$. Hence,
the total time is also $O(\sqrt{n}\log n)$ with high probability. The procedure also calls the ray-shooting data
structure $R(v)$ for $O(\log n)$ nodes $v$ at distinct levels of $T$.
We also have $|S(v)|=O(|T_v|)$. Therefore, the total time of
the ray-shooting queries is $O(\sqrt{n}\log n)$ with high probability. In summary, the query
algorithm runs in $O(\sqrt{n}\log n)$ time with high probability.

\subparagraph{Remark.} Later we will present our deterministic result
for the segment detection problem with complexity
$O(n^{1.5},n,\sqrt{n}\log n)$ in Theorem~\ref{theo:segmentray}.
Using the above algorithm and
our deterministic result of the ray-shooting-among-line problem
in Theorem~\ref{theo:raylines}, we can obtain our
deterministic result for the ray-shooting-among-segment problem.
The space is $O(n\log n)$ and the query time is
$O(\sqrt{n}\log n)$, following the same analysis as above. The
preprocessing time satisfies the recurrence relation:
$T(n)=2T(n/2)+O(n^{1.5})$,
as both $|S(v)|$ and $|S'(v)|$ are bounded by $O(|T_v|)$.
Solving the recurrence relation gives $T(n)=O(n^{1.5})$.
\end{proof}

\subsubsection{The deterministic result}

To obtain the deterministic result, we turn to Matou\v{s}ek's partition trees~\cite{ref:MatousekEf92,ref:MatousekRa93}. As discussed before, a big issue is that the triangles of these trees may overlap. To overcome the issue, we have to somehow modify Matou\v{s}ek's original algorithms. 

\subparagraph{An overview.}
To solve the simplex range searching problem (e.g., the counting problem), Matou\v{s}ek built a partition tree in~\cite{ref:MatousekEf92} with complexity $O(n\log n, n, \sqrt{n}(\log n)^{O(1)})$; subsequently, he presented a more query-efficient result in~\cite{ref:MatousekRa93} with complexity $O(n^{1+\delta}, n, \sqrt{n})$. Ideally, we want to use his second approach. In order to achieve the $O(n^{1+\delta})$ preprocessing time, Matou\v{s}ek used multilevel data structures (called partial simplex decomposition scheme in~\cite{ref:MatousekRa93}). In our problem, however, the multilevel data structures do not work any more because they do not provide a ``nice'' way to store the segments of $S$. Without using multilevel data structures, the preprocessing time would be too high (indeed Matou\v{s}ek~\cite{ref:MatousekRa93} gave a {\em basic algorithm} without using multilevel data structures but he only showed that its runtime is polynomial). By a careful implementation, we can bound the preprocessing time by $O(n^2)$.
To improve it, we resort to the {\em simplicial partition} in~\cite{ref:MatousekEf92}. Roughly speaking, let $P$ be the set of endpoints of the segments of $S$; we partition $P$ into $r=\Theta(\sqrt{n})$ subsets of size $\sqrt{n}$ each, using $r$ triangles such that any line in the plane only crosses $O(\sqrt{r})$ triangles. Then, for each subset, we apply the algorithm of~\cite{ref:MatousekRa93}. This guarantees the $O(n^{1.5})$ upper bound on the preprocessing time for all subsets. To compute the simplicial partition,  Matou\v{s}ek~\cite{ref:MatousekEf92} first provided a {\em basic algorithm} of polynomial time and then used other techniques to reduce the time to $O(n\log n)$. For our purpose, these techniques are not suitable (for a similar reason to multilevel data structures). Hence, we can only use the basic algorithm, whose time complexity is only shown to be polynomial in~\cite{ref:MatousekEf92}. Further, we cannot directly use the algorithm because the produced triangles may overlap (the algorithm in~\cite{ref:MatousekRa93} has the same issue). Nevertheless, we manage to modify the algorithm and bound its time complexity by $O(n^{1.5})$. Also, even with the above modification that avoids certain triangle overlap, using the approach in~\cite{ref:MatousekRa93} directly still cannot lead to an $O(\sqrt{n}\log n)$ time query algorithm. Instead we have to further modify the algorithm (e.g., choose a different weight function).

In the following, we first describe our algorithm for computing the simplicial partition and then preprocess each subset in the partition by modifying Matou\v{s}ek's basic algorithm in~\cite{ref:MatousekRa93}. The algorithms in \cite{ref:MatousekEf92,ref:MatousekRa93} are both for any fixed dimensions. To simplify the description, we will discuss the planar case only. For ease of reference, we start a new section.

\subsubsection{Computing a simplicial partition}
\label{sec:simpartition}

We first review some concepts.
A {\em cutting} is a set of interior-disjoint triangles whose union is
the entire plane; its {\em size} is defined to be the number of triangles.
Let $H$ be a set of $n$ lines and $\Xi$ be a cutting. For a triangle
$\triangle\in\Xi$, let $H_{\triangle}$ denote the subset of lines of
$H$ intersecting the interior of $\triangle$. We say that $\Xi$ is an
{\em $\epsilon$-cutting} for $H$ if $|H_{\triangle}|\leq \epsilon
\cdot n$ for each triangle $\triangle\in \Xi$. We also need to handle
the weighted case where each line $l$ of $H$ has a weight $w(l)$, which is
a positive integer. We use $(H,w)$ to denote the weighted line set.
For each subset $H'\subseteq H$, define $w(H')=\sum_{l\in H'}w(l)$. A
cutting $\Xi$ is an {\em $\epsilon$-cutting} for $(H,w)$ if
$w(H_{\triangle})\leq \epsilon \cdot w(H)$ for every triangle
$\triangle\in \Xi$.

\begin{mylemma}\label{lem:cutting}{\em \cite{ref:ChazelleCu93,ref:MatousekCu91}}
Given a set of $n$ weighted lines $(H,w)$, for any parameter $r\leq n$, a $(1/r)$-cutting of size $O(r^2)$ can be computed in $O(nr)$ time.
\end{mylemma}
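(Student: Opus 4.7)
The plan is to construct the cutting by Chazelle's hierarchical cutting technique, adapted to weighted lines. For a sufficiently large constant $\rho$, I would build a nested sequence of cuttings $\Xi_0, \Xi_1, \ldots, \Xi_k$, where $\Xi_0$ is the trivial cutting consisting only of the plane, $k=\lceil\log_\rho r\rceil$, and each $\Xi_i$ is a $(1/\rho^i)$-cutting for $(H,w)$ of size $O(\rho^{2i})$ that refines $\Xi_{i-1}$. Since $\rho^k \geq r$, the final cutting $\Xi_k$ is a $(1/r)$-cutting of size $O(r^2)$ as required.

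The refinement step from $\Xi_i$ to $\Xi_{i+1}$ is the core building block: for every triangle $\triangle\in\Xi_i$, I would replace $\triangle$ by a constant-size local $(1/\rho)$-cutting of the weighted subset $(H_\triangle,w)$, clipped to $\triangle$, and glue these local cuttings together to obtain $\Xi_{i+1}$. Each local cutting is built by a weighted version of the basic construction: take a weighted $\epsilon$-net $R$ of $H_\triangle$ of size $O(\rho\log\rho)$ (sampling lines with probability proportional to their weight), triangulate the arrangement of $R$, and then subdivide any cell whose weighted depth still exceeds $w(H_\triangle)/\rho$. A weighted $\epsilon$-net argument guarantees that only a constant number of subdivision rounds are needed, and the sampling can be derandomized by the method of conditional probabilities on a small canonical test set of triangles. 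Constructing the local cutting costs $O(|H_\triangle|\cdot\rho)$ time.

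The overall running time hinges on the standard bound $\sum_{\triangle\in\Xi_i}|H_\triangle| = O(n\rho^i)$, which follows from the $(1/\rho^i)$-cutting property together with known line-triangle incidence counts for optimal cuttings. Hence, refining $\Xi_i$ into $\Xi_{i+1}$ takes $\sum_\triangle O(|H_\triangle|\cdot\rho) = O(n\rho^{i+1})$ time, and summing the geometric series over $i=0,\ldots,k-1$ yields the total bound $O(n\rho^k) = O(nr)$, dominated by the last level. The main obstacle is the weighted local construction: we must carry out the deterministic $(1/\rho)$-cutting inside each triangle within a budget $O(|H_\triangle|\rho)$ that is independent of the magnitudes of the weights. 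The key trick is to avoid ever ``materializing'' $w(l)$ copies of a line; instead, one groups the lines of $H_\triangle$ into $O(\log n)$ weight classes (lines whose weights lie in consecutive powers of two), samples from each class in proportion to its total weight, and evaluates the $\epsilon$-cutting condition against a polynomial-size canonical test set of triangles rather than all triangles in the plane. Once these local details are handled, the hierarchical scheme proves the lemma.
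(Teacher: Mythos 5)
The paper never proves this lemma: it is imported as a black box from Chazelle~\cite{ref:ChazelleCu93} and Matou\v{s}ek~\cite{ref:MatousekCu91}, so your attempt has to be measured against those constructions. Your outline (a hierarchy $\Xi_0,\ldots,\Xi_k$ with constant refinement ratio $\rho$, weighted $\epsilon$-net-based local cuttings inside each cell, derandomization against a canonical test set) is indeed the right family of ideas, but as written it does not deliver the stated bounds. The first gap is the size bound. If every triangle of $\Xi_i$ is replaced by a local $(1/\rho)$-cutting of size at most $A\rho^2$ for an absolute constant $A>1$, then $|\Xi_k|\leq (A\rho^2)^k=A^k\rho^{2k}$, and since $k=\Theta(\log_\rho r)$ the factor $A^k=r^{\log_\rho A}$ is a genuine polynomial in $r$: you obtain $O(r^{2+\delta})$, not $O(r^2)$. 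Preventing this compounding of constants across the $\Theta(\log r)$ levels is exactly the technical heart of Chazelle's result: the local cutting inside a cell $\triangle$ must be built with size roughly $O\big(\rho+\rho^2 V_\triangle/|H_\triangle|^2\big)$, where $V_\triangle$ is the number of vertices of the arrangement of $H_\triangle$ inside $\triangle$, and one then sums over a level using the fact that the total number of such vertices is at most $n(n-1)/2$, so the constant at every level stays absolute. Your sketch treats the $O(r^2)$ bound as immediate from gluing constant-size local cuttings, which it is not.

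The second gap is specific to the weighted setting. Your running-time analysis rests on $\sum_{\triangle\in\Xi_i}|H_\triangle|=O(n\rho^i)$, which in the unweighted case follows simply because each of the $O(\rho^{2i})$ triangles is crossed by at most $n/\rho^i$ lines. For a weighted cutting the guarantee is only $w(H_\triangle)\leq w(H)/\rho^i$; a single triangle may be crossed by arbitrarily many lines of tiny weight, so neither $|H_\triangle|\leq n/\rho^i$ nor the sum bound follows, and with them the $O(n\rho^{i+1})$ cost per refinement round and the overall $O(nr)$ bound collapse. The standard repair is to reduce weights to multiplicities at a controlled scale---replace each line by a number of copies proportional to its weight, normalized so that the resulting multiset has size $O(n)$, and compute an unweighted cutting of that multiset, arguing it is a weighted cutting of the original set---which is precisely the device this paper itself uses in Lemma~\ref{lem:weightcutting} when it has to invoke the present lemma on heavily weighted lines; alternatively one can construct the cutting simultaneously for the weight measure and the counting measure. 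Your closing remarks about weight classes explain how to sample without materializing $w(l)$ copies, but they do not restore the cardinality bound that your time analysis needs, so the $O(nr)$ claim remains unproved as stated.
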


Recall that $P$ is the set of the endpoints of $S$ and $|S|=n$. To
simplify the notation, we let $|P|=n$ in the following (and
thus $|S|=n/2$).

A {\em simplicial partion} of size $m$ for $P$ is a collection
$\Pi=\{(P_1,\triangle_1),\ldots,(P_m,\triangle_m)\}$ with the
following properties: (1) The subsets $P_i$'s form a disjoint
partition of $P$; (2) each $\triangle_i$ is an open triangle containing $P_i$; (3) $\max_{1\leq i\leq m}|P_i|\leq 2\cdot \min_{1\leq i\leq m}|P_i|$; (4) the triangles may overlap and a triangle $\triangle_i$ may contain points in $P\setminus P_i$. We define the {\em crossing number} of $\Pi$ as the largest number of triangles that are intersected by any line in the plane.

\begin{mylemma}\label{lem:simpartition}{\em\cite{ref:MatousekEf92}}
For any integer $z$ with $2\leq z<|P|$, there exists a simplicial partition $\Pi$ of size $\Theta(r)$ for $P$, whose subsets $P_i$'s satisfy $z\leq |P_i|<2z$, and whose crossing number is $O(\sqrt{r})$, where $r=|P|/z$.
\end{mylemma}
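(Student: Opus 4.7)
The plan is to construct $\Pi$ iteratively, adding one pair $(P_i,\triangle_i)$ at a time while maintaining a weighting scheme on a set of ``test lines'' that penalizes future triangles for crossing many already-crossed lines. Concretely, I would fix a finite test set $H$ of lines (for instance, all $\binom{n}{2}$ lines determined by pairs of points of $P$) and give each line an initial weight $w(l)=1$. Let $W=\sum_{l\in H}w(l)$ denote the current total weight, so initially $W=|H|$. The construction proceeds for roughly $r=|P|/z$ rounds, each producing one class $(P_i,\triangle_i)$ of the partition.

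In iteration $i$, let $Q_i=P\setminus(P_1\cup\cdots\cup P_{i-1})$ denote the still-unassigned points. I apply Lemma~\ref{lem:cutting} to $(H,w)$ with parameter $\sqrt{r}$ to obtain a $(1/\sqrt{r})$-cutting $\Xi_i$ of size $O(r)$. Since $\Xi_i$ covers the plane with $O(r)$ triangles and $|Q_i|$ is still $\Omega(|P|)$ during the first constant fraction of the iterations, by averaging some triangle $\triangle^\star\in\Xi_i$ contains $\Omega(|P|/r)=\Omega(z)$ points of $Q_i$. I then carve $\triangle^\star$ by one or two extra cuts to obtain a smaller triangle $\triangle_i\subseteq\triangle^\star$ enclosing exactly a subset $P_i\subseteq Q_i$ with $z\leq|P_i|<2z$; this ``peeling'' step can be done by sweeping a line through $\triangle^\star$ and stopping when the count falls in the target range. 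Finally, I update the weight of every line crossing $\triangle_i$ by multiplying it by $1+1/\sqrt{r}$.

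For the analysis, because $\triangle_i\subseteq\triangle^\star$ and $\Xi_i$ is a $(1/\sqrt{r})$-cutting of $(H,w)$, the total weight of lines crossing $\triangle_i$ is at most $W/\sqrt{r}$, so the multiplicative update changes $W$ by at most $W\cdot(1/\sqrt{r})\cdot(1/\sqrt{r})=W/r$. After $r$ iterations, $W\leq|H|\cdot(1+1/r)^r\leq e\cdot|H|$. A line $l$ that crosses $t(l)$ of the output triangles has weight $(1+1/\sqrt{r})^{t(l)}\leq e|H|$, giving $t(l)=O(\sqrt{r}\log|H|)$. To reach the claimed $O(\sqrt{r})$ bound without the logarithmic factor, I would replace the full test set by an $O(1/\sqrt{r})$-approximation of size polynomial in $r$, exactly as in Matou\v{s}ek~\cite{ref:MatousekEf92}, which reduces $\log|H|$ to $O(\log r)$ and is absorbed by a slight adjustment of the update constant. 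The hardest part will be the \emph{endgame}: once $|Q_i|$ drops below a constant fraction of $|P|$, the pigeonhole step no longer guarantees a pigeon of size $\geq z$, and one must recurse on the residual point set with a correspondingly smaller value of $r$, arguing by induction that properties (1)--(3) and the $O(\sqrt{r})$ crossing number bound are preserved through the recursive levels (the geometric series of residual sizes keeps the total crossing number at $O(\sqrt{r})$).
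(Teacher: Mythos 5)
Your multiplicative-weights framework is the right general idea (it is the spirit of Matou\v{s}ek's construction, which the paper simply cites and then reviews in Section~\ref{sec:simpartition}), but as written there are two genuine gaps. The first and most important one is the choice and role of the test set $H$. The lemma requires the crossing number to be $O(\sqrt{r})$ for \emph{every} line in the plane, while your argument only bounds the crossing number of lines in $H$; taking $H$ to be the $\binom{n}{2}$ lines through pairs of points of $P$ does not help, because the triangles produced by the cuttings are not determined by point pairs, so an arbitrary line can cross a triangle pattern that no line of $H$ reproduces. Matou\v{s}ek's proof constructs a specific test set of $O(r)$ lines (dual to the vertices of a $(1/t)$-cutting of the dual lines of $P$, with $t=\Theta(\sqrt{r})$; this is the first main step reviewed in the paper) and proves a \emph{test-set property}: bounding the crossing number on these lines bounds it, up to a constant factor plus $O(\sqrt{r})$, for all lines. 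Your proposal never establishes, or even states, this transfer step, and without it the conclusion does not follow.

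The second gap is quantitative. With a fixed $(1/\sqrt{r})$-cutting of size $O(r)$ and update factor $1+1/\sqrt{r}$ you correctly get only $O(\sqrt{r}\log|H|)$, and the claim that the $\log$ factor ``is absorbed by a slight adjustment of the update constant'' is not justified: a constant-factor change in the multiplier leaves the bound at $\Theta(\sqrt{r}\log|H|)$; you would need the multiplier to grow like $1+\Theta(\log|H|)/\sqrt{r}$, which is a different (and unproved) scheme. Matou\v{s}ek avoids the issue differently: he doubles the weight of a line each time it crosses a new triangle, but computes, in iteration $i$, a $(1/t_i)$-cutting with the \emph{adaptive} parameter $t_i=\Theta(\sqrt{n_i/z})$ chosen so that the cutting has at most $n_i/z$ triangles, where $n_i$ is the number of unassigned points. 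This gives $\log w(H)\le \log|H|+O\bigl(\sum_i 1/\sqrt{r-i}\bigr)=O(\sqrt{r})$ with no logarithmic loss, and it simultaneously kills your ``endgame'' problem: since the cutting never has more than $n_i/z$ triangles, pigeonholing yields a triangle with at least $z$ unassigned points in \emph{every} iteration, so no recursion on the residual set is needed. (Your sweep-line carving of $\triangle^{\star}$ is also unnecessary for this lemma --- one may simply take $\triangle_i=\triangle^{\star}$ and any $z$ points inside it, since property~(4) allows $\triangle_i$ to contain points of $P\setminus P_i$; carving is only needed for the paper's \emph{enhanced} partition with the weakly-overlapped property, and there a single cut yields a triangle or convex quadrilateral, not a triangle.)
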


To compute such a simplicial partition as in
Lemma~\ref{lem:simpartition}, Matou\v{s}ek~\cite{ref:MatousekEf92}
first presented a basic algorithm whose runtime is polynomial and
then improved the time to $O(n\log n)$ by other techniques. As
discussed before, the techniques are not suitable for our purpose and
we can only use the basic algorithm. In addition, the above
property~(4) prevents us from using the partition directly. Instead we
use an {\em enhanced simplicial partition} with the following
modified/changed properties. In property (2), each $\triangle_i$ is
either a triangle or a convex quadrilateral; we now call
$\triangle_i$ a {\em cell}. In property (4), the cells may still
overlap, and a cell $\triangle_i$ may still contain points in
$P\setminus P_i$; however,
if $\triangle_i$ contains a point $p\in P_j$ with $j\neq i$, then all
points of $P_i$ are outside $\triangle_j$ (e.g., see Fig.~\ref{fig:overlap}). This modified property (4), which we call {\em the weakly-overlapped property}, is the key to guarantee the success of our approach. We use convex
	quadrilaterals instead of only triangles to make sure that
	the modified property (4) can be achieved. The crossing number of
	the enhanced partition is defined as the largest number of cells
	that are intersected by any line in the plane.
We will show that by modifying Matou\v{s}ek's basic algorithm~\cite{ref:MatousekEf92}, we can compute an enhanced simplicial partition with the same feature as Lemma~\ref{lem:simpartition}. Roughly speaking, each cell of our partition is a subset of a triangle of the partition computed by  Matou\v{s}ek's algorithm. For our purpose, we are interested in the parameters $z=\sqrt{n}$ and thus $r=\Theta(\sqrt{n})$. We will show that such an enhanced simplicial partition with crossing number $O(\sqrt{r})$ can be computed in $O(n^{1.5})$ time. To this end, we first review Matou\v{s}ek's basic algorithm~\cite{ref:MatousekEf92}.
Below we fix $r=\sqrt{n}$ (and thus $z=n/r=\sqrt{n}$).

\begin{figure}[t]
\begin{minipage}[t]{\textwidth}
\begin{center}
\includegraphics[height=1.5in]{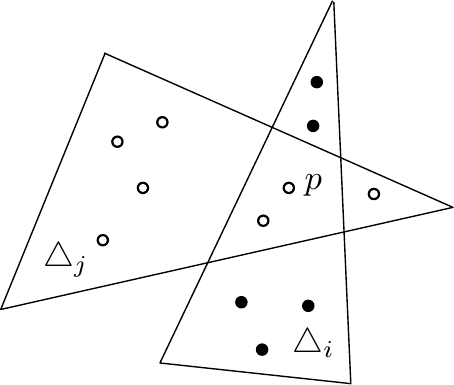}
\caption{\footnotesize Illustrating the weakly-overlapped property: $P_j$ consists of all circle points and $P_i$ consists of all disk points. A point $p\in P_j$ is also contained in $\triangle_i$, but all points of $P_i$ are outside $\triangle_j$.
}
\label{fig:overlap}
\end{center}
\end{minipage}
\vspace{-0.15in}
\end{figure}

The first main step is to compute a {\em test set} $H$ of $r$ lines (i.e., Lemma 3.3 of~\cite{ref:MatousekEf92}). This is done by computing a $(1/t)$-cutting $\Xi$ for the dual lines of the points of $P$ such that $\Xi$ has at most $r$ vertices in total, where $t$ can be chosen so that $t=\Theta(\sqrt{r})$. The set $H$ is just the dual lines in the primal plane of the vertices of $\Xi$. By Lemma~\ref{lem:cutting}, this step can be done in $O(n\sqrt{r})$ time.

The second main step is to construct the simplical partition $\Pi$ by using $H$ (i.e., Lemma 3.2 of~\cite{ref:MatousekEf92}). The algorithm has $m$ iterations and the $i$-th iteration will compute the pair $(P_i,\triangle_i)$, for $1\leq i\leq m$, with $m=\Theta(r)$. Suppose that $(P_1,\triangle_1),\ldots,(P_i,\triangle_i)$ have been computed. Let $P_i'=P\setminus(P_1\cup \cdots \cup P_i)$ and $n_i=|P_i'|$. The algorithm for computing $(P_{i+1},\triangle_{i+1})$ works as follows. If $n_i<2z$, then set $P_{i+1}=P_i'$ and set $\triangle_{i+1}$ to be the whole plane, which finishes the entire algorithm.
We next discuss the case $n_i\geq 2z$.

We define a weighted line set $(H,w_i)$: For each line $l\in H$, define $w_i(l)=2^{k_i(l)}$, where $k_i(l)$ is the number of triangles among $\triangle_1,\ldots,\triangle_i$ crossed by $l$. We compute a $(1/t_i)$-cutting $\Xi_i$ for $(H,w_i)$ for a largest possible value $t_i$ such that $\Xi_i$ has at most $n_i/z$ triangles. By Lemma~\ref{lem:cutting}, we can choose $t_i$ such that $t_i=\Theta(\sqrt{n_i/z})$. As $\Xi_i$ has at most $n_i/z$ triangles, it has a triangle that contains at least $z$ points of $P_i'$. Let $\triangle_{i+1}$ be such a triangle and choose any $z$ points of $P_i'\cap \triangle_{i+1}$ to constitute $P_{i+1}$. This finishes the construction of $(P_{i+1},\triangle_{i+1})$.

Matou\v{s}ek~\cite{ref:MatousekEf92} proved that the crossing number of $\Pi$ thus constructed is $O(\sqrt{r})$.

To compute our enhanced simplical partition, we slightly modify the above algorithm as follows (we only point out the changes). In the case $n_i\geq 2z$, let $\triangle$ be a triangle of $\Xi_i$ that contains at least $z$ points of $P_i'$. Let $\ell$ be a line whose left side contains exactly $z$ points of $P_i'\cap \triangle$. For example, $\ell$ can be chosen as a vertical line between the $z$-th leftmost point and the $(z+1)$-th leftmost point of $P_i'\cap \triangle$ (if the two points are on the same vertical line, then we slightly perturb the line so that its left side contains exactly $z$ points of $P_i'\cap \triangle$). Instead of arbitrarily picking $z$ points of $P_i'\cap \triangle$ to form $P_{i+1}$, we pick
the $z$ points to the left of $\ell$.
We now use $\triangle_{i+1}$ to refer to the region of $\triangle$ to the left of $\ell$, which is either a triangle or a convex quadrilateral.

Since each cell $\triangle_{i+1}$ is only a subset of its counterpart in the original algorithm, the crossing number of our partition is also $O(\sqrt{n})$. We still use $\Pi=\{(P_1,\triangle_1),\ldots,(P_m,\triangle_m)\}$ with $m=\Theta(r)$ to denote our partition.
All the properties of the enhanced simplical partition hold for $\Pi$. In particular, the following lemma proves that the weakly-overlapped property holds.

\begin{mylemma}\label{lem:property4}{\em (The weakly-overlapped property)}
For any cell $\triangle_i$ of $\Pi$, if $\triangle_i$ contains a point $p\in P_j$ with $j\neq i$, then all points of $P_i$ are outside $\triangle_j$.
\end{mylemma}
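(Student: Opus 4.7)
My plan is to reduce the lemma to a strictly stronger claim: for every pair of indices $i<j$, the cell $\triangle_i$ contains no point of $P_j$. Granting this claim, the weakly-overlapped property is immediate: if some point $p\in P_j$ lies in $\triangle_i$ with $j\neq i$, the claim forces $j<i$, and then applying the claim with the roles of $i$ and $j$ swapped gives $P_i\cap\triangle_j=\emptyset$, which is exactly what the lemma asserts.

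To prove the stronger claim, I would first isolate a key identity enforced by the left-of-$\ell$ modification. Writing $P_i'=P\setminus(P_1\cup\cdots\cup P_i)$ as in the algorithm, recall that in iteration $i+1$ the cell $\triangle_{i+1}$ was defined to be the portion of some cutting triangle $\triangle\in\Xi_i$ lying to the left of $\ell$, and $P_{i+1}$ was defined to be the $z$ points of $P_i'\cap\triangle$ lying to the left of $\ell$. Since $\ell$ was chosen so that its left side contains \emph{exactly} $z$ points of $P_i'\cap\triangle$, those $z$ points are precisely $P_i'\cap\triangle_{i+1}$, giving the identity
\[
  P_{i+1} \;=\; P_i'\cap\triangle_{i+1}.
\]
This identity fails in Matou\v{s}ek's original algorithm, in which $z$ points of $P_i'\cap\triangle$ are picked arbitrarily; making it hold is the whole purpose of our modification.

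With the identity in hand, the stronger claim reduces to a short set-theoretic argument. Suppose $p\in P_j\cap\triangle_i$ with $j>i$. Since $j-1\geq i-1$, we have $P_{j-1}'\subseteq P_{i-1}'$, so $p\in P_j\subseteq P_{j-1}'\subseteq P_{i-1}'$. Combined with $p\in\triangle_i$ this puts $p$ into $P_{i-1}'\cap\triangle_i$, which by the identity (with $i{+}1$ replaced by $i$) is exactly $P_i$. This contradicts the disjointness of $P_1,\ldots,P_m$ together with $p\in P_j$ and $j\neq i$, proving the claim and hence the lemma.

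I do not expect a serious obstacle here, since the construction was engineered with this property in mind; the only point deserving a bit of care is the degenerate case where $\ell$ must separate two points of $P_i'\cap\triangle$ having the same $x$-coordinate. The tiny perturbation already described in the construction makes the count ``exactly $z$ on the left of $\ell$'' well-defined, which is precisely what forces $P_{i+1}=P_i'\cap\triangle_{i+1}$ to be an equality rather than a mere inclusion.
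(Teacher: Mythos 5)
Your proof is correct and takes essentially the same route as the paper: both arguments rest on the fact that the modified construction forces $P_k=P_{k-1}'\cap\triangle_k$ (equivalently, $\triangle_k$ contains no point of $P_{k-1}'\setminus P_k$), use it once to conclude $j<i$, and once more with the roles of $i$ and $j$ swapped to get $P_i\cap\triangle_j=\emptyset$. Packaging this as the monotone claim ``$i<j$ implies $\triangle_i\cap P_j=\emptyset$'' is only a cosmetic difference from the paper's two-step argument.
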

\begin{proof}
Suppose $\triangle_i$ contains a point $p\in P_j$ with $j\neq i$.
When the algorithm constructs $P_i$ in the $i$-th iteration, $\triangle_i$ does not contain any point of $P_{i-1}'\setminus P_i$. Hence, $P_j$ must be constructed earlier than $P_i$, i.e., $j<i$. When the algorithm constructs $P_j$ in the $j$-th iteration, $\triangle_j$ does not contain any point of $P_{j-1}'\setminus P_j$. Since $j<i$, $P_i\subseteq P_{j-1}'\setminus P_j$. Therefore, $\triangle_j$ does not contain any point of $P_i$.
\end{proof}

The next lemma shows that the algorithm can be implemented in $O(n^{1.5})$ time.
\begin{mylemma}\label{lem:time10}
The enhanced simplicial partition $\Pi$ can be computed in $O(n^{1.5})$ time.
\end{mylemma}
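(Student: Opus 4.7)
The plan is to bound the total work of the algorithm by summing across all $m = \Theta(r) = \Theta(\sqrt{n})$ iterations and, in particular, to show that the bottleneck step of locating points, finding a heavy triangle, and extracting the $z$ leftmost points can also be done in $O(n^{1.5})$ aggregate time.

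First, computing the test set $H$ of $r = \sqrt{n}$ lines takes $O(n^{5/4})$ time by applying Lemma~\ref{lem:cutting} to the $n$ dual lines of $P$ with parameter $t = \Theta(\sqrt{r}) = \Theta(n^{1/4})$. Then in each iteration $i$ I would account for three kinds of work: (a) updating the weights $w_i$ from $w_{i-1}$ by checking, for each line in $H$, whether it crosses the newly added cell $\triangle_i$, which takes $O(|H|) = O(\sqrt{n})$ time; (b) constructing the $(1/t_i)$-cutting $\Xi_i$ in $O(|H| \cdot t_i) = O(\sqrt{n}\cdot \sqrt{r-i})$ time by Lemma~\ref{lem:cutting}; and (c) identifying a triangle of $\Xi_i$ containing at least $z$ points of $P_i'$ and extracting its $z$ leftmost points to form $P_{i+1}$. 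Summing (a) across iterations yields $O(rm) = O(n)$, and summing (b) yields $O\bigl(r \cdot \sum_{i=0}^{r-1}\sqrt{r-i}\bigr) = O(r^{5/2}) = O(n^{5/4})$.

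The main obstacle is step (c): a naive implementation using a standard planar point-location structure on $\Xi_i$ incurs a logarithmic factor per query and yields $O(n^{1.5}\log n)$ aggregate time, which is too much. To shave this factor, my plan is to precompute the full line arrangement $\mathcal{A}(H)$, which has $O(r^2) = O(n)$ faces and can be built in $O(n)$ time, and then to locate each of the $n$ points of $P$ in $\mathcal{A}(H)$ once, by brute force in $O(nr) = O(n^{1.5})$ time (checking each of the $r$ lines per point). Since the triangles of $\Xi_i$ produced by the cutting construction in Lemma~\ref{lem:cutting} are bounded by lines of $H$, every cell of $\Xi_i$ is a union of faces of $\mathcal{A}(H)$, so each point's containing cutting cell is determined by its arrangement face.

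In each iteration, I would then label every face of $\mathcal{A}(H)$ with the cell of $\Xi_i$ that contains it in $O(|\mathcal{A}(H)|) = O(n)$ time, by picking a seed arrangement face at a vertex of each triangle of $\Xi_i$ and flood-filling inside the triangle, stopping at arrangement edges that lie on the triangle's boundary. With these labels in place, looking up the cutting cell of any point of $P_i'$ is $O(1)$, so counting points per cell, picking a heavy triangle (guaranteed to exist by averaging since $|\Xi_i| \leq n_i/z$), choosing the vertical split $\ell$, and extracting the $z$ leftmost points via linear-time selection all take $O(n_i)$ time per iteration. Summing (c) over all iterations gives $O(nm) + O(\sum_i n_i) = O(n^{1.5})$, since $m = \Theta(\sqrt{n})$ and $\sum_i n_i = \Theta(rn/2) = \Theta(n^{1.5})$. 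Combining with the test-set cost and the other terms, the overall runtime is $O(n^{1.5})$.
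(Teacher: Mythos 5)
There is a genuine gap in your handling of step (c). Your whole scheme for locating points and counting them per cutting cell rests on the claim that every cell of $\Xi_i$ is a union of faces of the arrangement $\mathcal{A}(H)$, because the triangles are ``bounded by lines of $H$.'' That is not a property the cuttings of Lemma~\ref{lem:cutting} have: the constructions of Chazelle and Matou\v{s}ek build a cutting by taking an arrangement of a (sub)sample of the lines and then triangulating (or trapezoidally decomposing) its faces, and the triangulation chords are segments joining arrangement vertices that are \emph{not} contained in lines of $H$. Such a chord can pass through the interior of a face of $\mathcal{A}(H)$, so a single arrangement face may be split among several triangles of $\Xi_i$. Then your per-face labeling is ill-defined (a face has no unique containing cell), the $O(1)$ lookup of a point's cell from its arrangement face is wrong, and the flood fill has no arrangement edge at which to ``stop'' along those chord edges. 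The paper avoids this entirely: it builds, once, a simplex range reporting structure on $P$ (Matou\v{s}ek's, $O(n)$ space, $O(\sqrt{n}+k)$ query), and in each iteration queries every triangle of $\Xi_i$ to count the points of $P_i'$ it contains, giving $O(r\sqrt{n}+n)=O(n)$ per iteration and $O(n^{1.5})$ overall; your arrangement-based shortcut would need a correct substitute for that step.

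A secondary issue: in step (b) you apply Lemma~\ref{lem:cutting} directly to the weighted set $(H,w_i)$ with the claimed $O(|H|\cdot t_i)$ bound. The weights are $w_i(l)=2^{k_i(l)}$ with $k_i(l)$ as large as $\Theta(\sqrt{r})$, so they are exponentially large; one cannot simulate them by multiplicities or do constant-time arithmetic on them without further work. The paper devotes Lemma~\ref{lem:weightcutting} to exactly this: it computes the scale $2^a\le w_i(H)<2^{a+1}$ with an $O(\sqrt{r})$-length bit array and replaces $(H,w_i)$ by a rounded multiset $H'$ of size $O(r)$ whose unweighted $\frac{1}{5\sqrt{r}}$-cutting is a $(1/t_i)$-cutting for $(H,w_i)$, all in $O(r^{3/2})$ time. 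Without some such rounding argument, your per-iteration cutting cost is not justified. The rest of your accounting (test set, weight updates, selection of the $z$ leftmost points, and the summations over iterations) is fine and matches the paper's budget.
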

\begin{proof}
As discussed before, the first main step runs in $O(n\sqrt{r})$ time,
which is bounded by $O(n^{1.5})$ as $r=\sqrt{n}$. Below we
discuss the second main step.

The second main step has $m$ iterations. In each iteration, we need to
compute the $(1/t_i)$-cutting $\Xi_i$ for $(H,w_i)$, which can be done
in $O(r\cdot t_i)$ time by Lemma~\ref{lem:cutting} since $|H|=r$.
This is $O(r^{3/2})$ time, for $t_i=\Theta(\sqrt{n_i/z})$ and
$n_i/z\leq n/z=r$. However, we cannot apply Lemma~\ref{lem:cutting}
directly to compute $\Xi_i$ as the weights of the lines of $H$ might
be too large.
Matou\v{s}ek (in Lemma 3.4~\cite{ref:MatousekEf92}) suggested a method
that can resolve the issue when $r$ is a constant.  In
Lemma~\ref{lem:weightcutting}, we extend the method and show that
$\Xi_i$ can be computed in $O(r^{3/2})$ time.

After $\Xi_i$ is obtained, we need to find a triangle $\triangle^*$ of
$\Xi_i$ that contains at least $z$ points. One approach
is to first build a point location data structure on
$\Xi_i$~\cite{ref:EdelsbrunnerOp86,ref:KirkpatrickOp83} and then use
it to find the triangle of $\Xi_i$ that contains each point of $P_i'$.
The total time is $O(r+n_i\log r)$. However, this would lead to an
overall time of $O(nr\log r)$ for all $m$ iterations, which is not
bounded by $O(n^{1.5})$. We can improve the algorithm in the following
way. We build a simplex range reporting data structure on $P$ before
the first iteration. For example, we can use
Matou\v{s}ek's approach in~\cite{ref:MatousekRa93}, which builds a
data structure of $O(n)$ space in $O(n^{1+\delta})$ that can answer
each simplex range reporting query on $P$ in $O(\sqrt{n}+k)$ time,
where $k$ is the number of points of $P$ in the query
simplex\footnote{Because we can afford a preprocessing time of
$O(n^{1.5})$, we could use a simpler approach as long as the space is
$O(n)$ and the query time is $O(\sqrt{n}+k)$.}.
Then, for each triangle $\triangle$ of $\Xi_i$, using a
simplex range reporting query, we find all points of $P$ in
$\triangle$, and for each point we determine whether it is in $P_i'$
(for this we could put a mark on each point of $P_i'$). In this way,
we can determine the number of points $P_i'$ in $\triangle$ in
$O(\sqrt{n}+k)$ time. Doing this for all triangles of $\Xi_i$ takes
$O(r\sqrt{n}+n)$ time in total as $\Xi_i$ has at most $r$ triangles.
Subsequently, we can determine $\triangle^*$, after which we can
obtain the cell $\triangle_{i+1}$ and the subset $P_{i+1}$ in
additional $O(n)$ time. In summary, we can compute $(\triangle_{i+1},
P_{i+1})$ in $O(r\sqrt{n}+n)$ time.

Next we update the crossing numbers of the lines of $H$. For each line $l\in H$, if $l$ crosses $\triangle_{i+1}$, then $k_{i+1}(l)=k_i(l)+1$; otherwise, $k_{i+1}(l)=k_i(l)$. This steps takes $O(r)$ time.

This finishes the $i$-th iteration, which takes
$O(r^{3/2}+r\sqrt{n}+n)$ time in total. As $r=\sqrt{n}$ and there are $O(r)$ iterations, the total time of the algorithm is $O(n^{1.5})$.
\end{proof}

\begin{mylemma}\label{lem:weightcutting}
Suppose the crossing numbers $k_i(l)$'s are known for all lines $l\in
H$. Then, we can compute the cutting $(1/t_i)$-cutting $\Xi_i$ for
$(H,w_i)$ in $O(r^{3/2})$ time.
\end{mylemma}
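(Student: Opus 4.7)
The main obstacle is that the weights $w_i(l) = 2^{k_i(l)}$ can be as large as $2^{\Theta(r)}$, so they do not fit in $O(\log n)$-bit machine words; a direct invocation of Lemma~\ref{lem:cutting} on $(H, w_i)$ would pay $\Omega(r/\log n)$ per arithmetic operation, blowing the total running time beyond $O(r^{3/2})$. My plan is to replace $w_i$ by a polynomially bounded weight function that fits in $O(\log r)$ bits per entry, losing only a constant factor in cutting quality, and then invoke Lemma~\ref{lem:cutting} as a black box.

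First, I would normalize: let $k^* = \max_{l \in H} k_i(l)$ and set $w_i'(l) = 2^{k_i(l) - k^*}$. Since this is a common scaling, the notion of a $(1/t_i)$-cutting coincides for $(H,w_i)$ and $(H,w_i')$, but now every value lies in $(0,1]$ with the maximum equal to $1$. Note that $k^*$ and every $k_i(l)$ are already stored as $O(\log r)$-bit integers, so normalization costs $O(r)$ time symbolically (we never materialize $2^{k^*}$).

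Second, I would round up the small values: define $w_i''(l) = \max(w_i'(l),\, 1/r)$. All $r$ weights now lie in $[1/r, 1]$, so each fits in $O(\log r)$ bits and ordinary RAM arithmetic on them is $O(1)$ per operation. Moreover,
\[
  w_i''(H) \;\leq\; w_i'(H) + r \cdot \tfrac{1}{r} \;=\; w_i'(H) + 1 \;\leq\; 2\, w_i'(H),
\]
since the heaviest line contributes $1$ and hence $w_i'(H) \geq 1$. Now I would invoke Lemma~\ref{lem:cutting} on $(H, w_i'')$ with parameter $2t_i$: since $|H|=r$, this produces a $(1/(2t_i))$-cutting $\Xi_i$ of size $O(t_i^2) = O(r)$ in $O(r \cdot t_i) = O(r^{3/2})$ time.

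Finally, I would verify the $(1/t_i)$-cutting property of $\Xi_i$ with respect to the original weights. For every triangle $\triangle \in \Xi_i$,
\[
  w_i'(H_\triangle) \;\leq\; w_i''(H_\triangle) \;\leq\; \frac{w_i''(H)}{2t_i} \;\leq\; \frac{2\, w_i'(H)}{2t_i} \;=\; \frac{w_i'(H)}{t_i},
\]
which is exactly the $(1/t_i)$-cutting condition for $(H, w_i')$, and hence for $(H, w_i)$ by scale invariance. The only subtlety is ensuring that the constant-factor inflation caused by rounding does not violate the cutting guarantee; this is handled by asking Lemma~\ref{lem:cutting} for a $(1/(2t_i))$-cutting rather than a $(1/t_i)$-cutting, which does not affect the asymptotic time or size bounds.
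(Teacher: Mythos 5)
Your proof is correct, and it follows the same high-level strategy as the paper --- replace the huge weights $w_i(l)=2^{k_i(l)}$ by word-sized surrogates that distort the total weight by only a constant factor, then make one call to Lemma~\ref{lem:cutting} with parameter $\Theta(t_i)$ --- but your normalization is genuinely different and, in one respect, simpler. The paper normalizes by (an approximation of) the \emph{total} weight: it computes $a$ with $2^a\le w_i(H)<2^{a+1}$ by simulating binary addition in a bit array of length $O(\sqrt{r})$, which is where Matou\v{s}ek's bound $\log w_i(H)=O(\sqrt{r})$ is needed and where $O(r^{3/2})$ of the time budget goes, and then replaces each line $l$ by $\max(1,2^{b+1+k_i(l)-a})$ copies, obtaining an \emph{unweighted} multiset $H'$ with $|H'|\le 5r$ on which Lemma~\ref{lem:cutting} is invoked. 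You normalize by the \emph{maximum} weight, which costs only $O(r)$ time (a maximum over word-sized exponents) and needs no a priori bound on $\log w_i(H)$; your error analysis is sound, since the maximum normalized weight being $1$ gives $w_i'(H)\ge 1$, so clipping at $1/r$ at most doubles the total, and the chain $w_i'(H_\triangle)\le w_i''(H_\triangle)\le w_i''(H)/(2t_i)\le w_i'(H)/t_i$ is exactly the required cutting property, which is indeed scale-invariant.

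Two caveats, neither fatal. First, the paper defines weighted line sets with positive \emph{integer} weights, so to invoke Lemma~\ref{lem:cutting} literally you should rescale $w_i''$ by $2^{\lceil\log_2 r\rceil}$, which makes every weight an integer in $[1,2r]$ (rounding the clipped values to $1$ or $2$); this is cosmetic. Second, and more substantively, your time bound leans on the weighted form of Lemma~\ref{lem:cutting} running in $O(|H|\cdot t_i)$ time independently of the total weight. That is what the lemma states, but the way weighted cuttings are typically realized --- and the way the paper actually uses the lemma --- is by reduction to an unweighted multiset whose size equals the total integer weight. Under your max-normalization that total can be $\Theta(r^2)$ (e.g., when many lines attain the maximum crossing number), so the multiset reduction would cost $\Theta(r^2 t_i)$ rather than $O(r^{3/2})$; the paper's normalization by the total weight is precisely what caps the multiset at $O(r)$ copies. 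If you want the same robustness, note that after your clipping the values $w_i''(l)$ are word-sized, so $w_i''(H)$ can be computed exactly in $O(r)$ time with ordinary arithmetic, and you can then take multiplicities $\lceil 2r\,w_i''(l)/w_i''(H)\rceil$ (total $O(r)$) and finish as the paper does --- this combines your cheap normalization with the paper's multiset step and avoids its $O(r^{3/2})$-time bit-array computation of $a$ altogether.
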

\begin{proof}
We extend the method suggested by Matou\v{s}ek (in Lemma
3.4~\cite{ref:MatousekEf92}) and the algorithm in Theorem 2.8
of~\cite{ref:MatousekCu91} for computing a cutting for a set
of weighted lines.

Recall that $w_i(H)=\sum_{l\in H}w_i(l)=\sum_{l\in H}2^{k_i(l)}$. We first
determine an integer $a$ such that $2^a\leq w_i(H)<2^{a+1}$.
Matou\v{s}ek (in Lemma 3.2~\cite{ref:MatousekEf92}) already proved
that $\log w_i(H)=O(\sqrt{r})$. Hence, $a+1\leq c\cdot \sqrt{r}$ for a
sufficiently large constant $c$. This also implies $k_i(l)\leq c\cdot
\sqrt{r}$ for each $l\in H$. We can compute $a$ in $O(r^{3/2})$
time as follows.

Let $A$ be a array of size $c\cdot \sqrt{r}$. Initially, every element
of $A$ is $0$. Let $value(A)$ denote the value of the binary code of
the elements of $A$ (each element of $A$ is either $1$ or $0$; note that $value(A)$ is only used for discussion). So initially $value(A)=0$.
For each $l\in H$, we add $2^{k_i(l)}$ to $value(A)$ by updating the
array $A$. Since $k_i(l)\leq c\cdot \sqrt{r}$, the addition operation
can be easily done in $O(\sqrt{r})$ time by scanning the array. As $|H|=r$, the total time for
doing this for all lines of $H$ is $O(r^{3/2})$. Finally, if $i$ is the
largest index of $A$ with $A[i]=1$, then we have $a=i$.

Let $b=\lfloor\log r\rfloor$. Thus, $2^b\leq r\leq 2^{b+1}$.

We define a multiset $H'$ as follows. For each line $l\in H$, if
$b+1+k_i(l)-a\geq 0$, then we put $2^{b+1+k_i(l)-a}$ copies of $l$ in
$H'$; otherwise, we put just one copy of $l$ in $H'$. Let $|H'|$
denote the cardinality of $H'$, counted with the multiplicities. We
have the following:
\begin{equation*}
\begin{split}
|H'| & \leq   |H| + \sum_{l\in H} 2^{b+1+k_i(l)-a} = r +
2^{b+1-a} \cdot  \sum_{l\in H} 2^{k_i(l)} = r + 2^{b+1-a} \cdot w_i(H)
\\
& \leq  r + 2^{b+1-a} \cdot 2^{a+1} = r + 2^{b+2}\leq r + 4r = 5r.
\end{split}
\end{equation*}

This also implies that the step of ``put $2^{b+1+k_i(l)-a}$ copies of $l$ in $H'$'' for all $l\in H$ can be done in $O(r)$ time. Therefore, generating the multiset $H'$ takes $O(r)$ time.

Now we compute a $\frac{1}{5\sqrt{r}}$-cutting $\Xi$ for the
unweighted multiset $H'$ in $O(r^{3/2})$ time by Lemma~\ref{lem:cutting}. In what follows, we prove
that $\Xi$ is a $(1/t_i)$-cutting for the weighted set $(H,w)$. Thus,
we can simply return $\Xi$ as $\Xi_i$. The total time of the algorithm
is $O(r^{3/2})$. This will prove the lemma.

As $t_i=\Theta(\sqrt{r})$, our goal is to show that
that $\Xi$ is a $\frac{1}{\sqrt{r}}$-cutting for $(H,w)$.
Let $\triangle$ be a triangle of $\Xi$. Define $H_{\triangle}$ to
be the subset of lines of $H$ that cross $\triangle$. It is
sufficient to prove $w_i(H_{\triangle})\leq w_i(H)/\sqrt{r}$.

Let $H'_{\triangle}$ denote the multiset of lines of $H'$ crossing
$\triangle$.
Because $\Xi$ is a $\frac{1}{5\sqrt{r}}$-cutting of $H'$ and $|H'|\leq 5r$, it holds
that $|H'_{\triangle}|\leq \frac{|H'|}{5\sqrt{r}}\leq \sqrt{r}$.
Consequently, we can derive:
\begin{equation*}
\begin{split}
w_i(H_{\triangle}) & = \sum_{l\in H_{\triangle}} w_i(l) = \sum_{l\in
H_{\triangle}} 2^{k_i(l)} = \frac{1}{2^{b+1-a}} \cdot \sum_{l\in
H_{\triangle}} 2^{b+1+k_i(l)-a} \leq  \frac{1}{2^{b+1-a}} \cdot
|H'_{\triangle}| \\
& \leq \frac{\sqrt{r}}{2^{b+1-a}} = \frac{2^a\cdot \sqrt{r}}{2^{b+1}}
\leq \frac{w_i(H)\cdot \sqrt{r}}{2^{b+1}} \leq \frac{w_i(H)\cdot
\sqrt{r}}{r} = \frac{w_i(H)}{\sqrt{r}}.\\
\end{split}
\end{equation*}

This proves that $\Xi$ is a $\frac{1}{\sqrt{r}}$-cutting for $(H,w)$.
\end{proof}

In the following, we will preprocess each subset $P_i$ of $\Pi$ by using/modifying the basic algorithm in~\cite{ref:MatousekRa93}. But before that, we give a picture on how we will use our simplicial partition to store edges of $S$ to solve our segment detection and ray-shooting queries.

\subparagraph{Storing the segments in $\boldsymbol{\Pi}$.}
For each segment $s$ of $S$, if both endpoints of $s$ are in the same subset $P_i$ of $\Pi$, then $s$ is in the cell $\triangle_i$ as $\triangle_i$ is convex and we store $s$ in $\triangle_i$; let $S_i$ denote the set of segments stored in $\triangle_i$. Otherwise, let $P_i$ and $P_j$ be the two subsets that contain the endpoints of $s$, respectively. The weakly-overlapped property in Lemma~\ref{lem:property4} leads to the following observation.

\begin{observation}\label{obser:boundary}
The segment $s$ intersects the boundary of at least one cell of $\triangle_i$ and $\triangle_j$.
\end{observation}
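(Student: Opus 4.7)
The plan is to proceed by case analysis on whether the endpoints $p \in P_i$ and $q \in P_j$ of $s$ lie inside the ``other'' cell. Let $p$ denote the endpoint in $P_i$ and $q$ the endpoint in $P_j$, with $i \neq j$. Since $P_i \subseteq \triangle_i$ and $P_j \subseteq \triangle_j$, we have $p \in \triangle_i$ and $q \in \triangle_j$ automatically. The question is whether $s$ stays inside $\triangle_i$ (forcing $q \in \triangle_i$) and simultaneously stays inside $\triangle_j$ (forcing $p \in \triangle_j$).

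The key observation is that this simultaneous containment is ruled out by Lemma~\ref{lem:property4}. I would argue: if $q \notin \triangle_i$, then the segment $s$ from the interior point $p \in \triangle_i$ to the exterior point $q$ must cross $\partial \triangle_i$, and we are done. Symmetrically, if $p \notin \triangle_j$, then $s$ crosses $\partial \triangle_j$. So the only remaining case is $q \in \triangle_i$ \emph{and} $p \in \triangle_j$. But then $\triangle_i$ contains the point $q \in P_j$ with $j \neq i$, so the weakly-overlapped property (Lemma~\ref{lem:property4}) asserts that all points of $P_i$ lie outside $\triangle_j$; in particular $p \notin \triangle_j$, contradicting the assumption of this case.

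There is no real obstacle here: the whole point of the weakly-overlapped property, as designed in the construction of the enhanced simplicial partition, is precisely to preclude the ``both inside each other'' configuration. The proof is just a short three-line case split invoking Lemma~\ref{lem:property4}, and I expect it to be stated in essentially that form in the paper.
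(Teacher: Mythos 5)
Your proof is correct and follows essentially the same route as the paper: the paper argues that if $s$ misses $\partial\triangle_i$ then both endpoints lie in $\triangle_i$, so Lemma~\ref{lem:property4} forces the $P_i$-endpoint outside $\triangle_j$ and hence $s$ crosses $\partial\triangle_j$, which is exactly your case split with the ``both inside each other'' configuration excluded by the weakly-overlapped property. The only difference is cosmetic (your contradiction phrasing versus the paper's direct derivation, and swapped names for $p$ and $q$).
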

\begin{proof}
If $s$ intersects the boundary of $\triangle_i$, then the observation follows. Otherwise, both endpoints of $s$ are in $\triangle_i$. Let $p$ be the endpoint of $s$ that is in $P_j$ and let $q$ be the other endpoint, which is in $P_i$. Since $\triangle_i$ contains $p$, by Lemma~\ref{lem:property4}, all points of $P_i$ are outside $\triangle_j$. Hence, $q$ is outside $\triangle_j$, implying that $s$ must intersect the boundary of $\triangle_j$.
\end{proof}

By Observation~\ref{obser:boundary}, we find a cell $\triangle$ of $\triangle_i$ and $\triangle_j$ whose boundary intersects $s$. Let $e$ be an edge of $\triangle$ that intersects $s$. We store $s$ at $e$; let $S(e)$ denote the set of segments of $S$ that are stored at $e$. In this way, each segment of $S$ is stored exactly once. Next, for each cell $\triangle\in \Pi$ and for each edge $e$ of $\triangle$, we preprocess $S(e)$ using Lemma~\ref{lem:anchor}(1) or using Lemma~\ref{lem:anchor}(2) if the segments of $S$ are nonintersecting. With $\Pi$, the above preprocessing on $S$ takes $O(n\log n)$ time and $O(n)$ space.
Later in Section~\ref{sec:subsetprocess} we will prove the following lemma.

\begin{mylemma}\label{lem:subsetprocess}
\begin{enumerate}
\item
For each subset $P_i$ of $\Pi$, with $O(|P_i|^{2})$ time and $O(|P_i|)$ space preprocessing, we can determine whether a query line intersects any segment of $S_i$ in $O(\sqrt{|P_i|}\log |P_i|)$ time.
\item
If the segments of $S_i$ are nonintersecting, then with $O(|P_i|^{2})$ time and $O(|P_i|)$ space preprocessing, we can determine the first segment of $S_i$ hit by a query ray in $O(\sqrt{|P_i|}\log |P_i|)$ time.
\end{enumerate}
\end{mylemma}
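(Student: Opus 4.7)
Let $m = |P_i|$. I would build a partition tree $T_i$ on $P_i$ by recursively applying the enhanced simplicial partition construction of Section~\ref{sec:simpartition}: at a node $v$ whose subset $P(v)$ has size $m_v$, invoke Lemma~\ref{lem:time10} with $z = \sqrt{m_v}$ to obtain $\Theta(\sqrt{m_v})$ weakly-overlapping child cells of $\Theta(\sqrt{m_v})$ points each (each a triangle or a convex quadrilateral, satisfying Lemma~\ref{lem:property4}), and recurse until subsets have $O(1)$ points. Subset sizes halve in logarithm per level, so $T_i$ has depth $O(\log\log m)$, and the total construction cost, dominated by the root, is $O(m^{1.5})$.

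Segments of $S_i$ are then stored in $T_i$. For each $s\in S_i$, descend from the root to the deepest node $v$ whose subset contains both endpoints of $s$; if $v$ is a leaf, store $s$ at $v$. Otherwise $s$'s endpoints lie in distinct child subsets $P(v_1), P(v_2)$, and by Observation~\ref{obser:boundary} applied to the partition at $v$, $s$ crosses some edge $e$ of one of $\triangle(v_1), \triangle(v_2)$; store $s$ at $e$, contributing to $S(e)$. Each segment is stored exactly once. For every cell-edge $e$ in $T_i$, every segment of $S(e)$ crosses $e$, so $S(e)$ is preprocessed using Lemma~\ref{lem:anchor}(1) for part~(1) or Lemma~\ref{lem:anchor}(2) for part~(2). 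Across $T_i$ this adds $O(m\log m)$ time and $O(m)$ space, so the totals fit the claimed $O(m^2)$ time and $O(m)$ space budgets.

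For a query line $l$ (respectively, ray $\rho$), descend $T_i$ top-down, visiting only the cells crossed by $l$. At each visited cell $\triangle(u)$, query the Lemma~\ref{lem:anchor} structure on each of its $O(1)$ edges intersected by $l$, then recurse into $u$; leaves are handled by brute force on their $O(1)$ stored segments. Because the crossing numbers $O(\sqrt{r_v})=O(m_v^{1/4})$ compose geometrically along the tree (the number of cells visited at level $k$ grows like $m^{(2^k-1)/2^{k+1}}$ and is bounded by $O(\sqrt m)$ at the deepest level), the total number of visited cells is $O(\sqrt m)$; since each Lemma~\ref{lem:anchor} call costs $O(\log m)$, the overall query time is $O(\sqrt{m}\log m)$. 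For part~(2) we additionally return the hit closest to the origin of $\rho$ among those found at the edge queries and leaves.

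The principal difficulty is correctness in the presence of overlapping cells. A segment whose endpoints land in overlapping cells could a priori be ``missed'' by the descent, but the weakly-overlapped property (Lemma~\ref{lem:property4}) combined with Observation~\ref{obser:boundary} guarantees that $s$ is pinned to some child-cell edge and is therefore reachable from every query line or ray crossing that edge. A secondary subtlety is the preprocessing bookkeeping: because multilevel machinery from~\cite{ref:MatousekRa93} cannot be used (segments do not fit neatly into its scheme), one must work with the direct bound of Lemma~\ref{lem:time10}, but the super-exponential decrease of $m_v$ with depth makes the costs telescope comfortably into $O(m^{1.5})$, well within the $O(m^2)$ budget stated.
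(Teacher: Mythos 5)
Your storage scheme (pinning each segment either to a cell edge via Observation~\ref{obser:boundary} and the weakly-overlapped property, or to the deepest node containing both endpoints) and the correctness argument for the descent match the paper's in spirit, and your preprocessing/space accounting is fine. The genuine gap is in the query-time analysis of your recursive construction. When you recursively apply the enhanced simplicial partition with branching $\Theta(\sqrt{m_v})$, the crossing bound at each node is $c\sqrt{r_v}$ for some constant $c>1$, and these constants \emph{multiply} along the $\Theta(\log\log m)$ levels of the tree: the number of cells visited at level $k$ is bounded only by $c^{k}\, m^{(2^k-1)/2^{k+1}}$, so at the bottom you get $c^{\Theta(\log\log m)}\sqrt{m} = \sqrt{m}\,(\log m)^{\Theta(1)}$ visited cells, not $O(\sqrt{m})$. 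With an $O(\log m)$ charge per visited cell for the Lemma~\ref{lem:anchor} queries, this yields $O(\sqrt{m}\,\mathrm{polylog}\,m)$, which does not meet the claimed $O(\sqrt{m}\log m)$. This compounding of constants is precisely the classical obstruction to getting near-$\sqrt{n}$ query time out of a recursively applied simplicial partition, and it is why the paper does not recurse on Section~\ref{sec:simpartition} at all inside each $P_i$.

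Instead, the paper's proof of Lemma~\ref{lem:subsetprocess} adapts Matou\v{s}ek's 1993 \emph{basic} algorithm built on efficient hierarchical cuttings: it constructs a forest of $t$ trees whose cells within each tree are interior-disjoint (overlap, and hence the weakly-overlapped property, is needed only among the root cells), and controls the \emph{global} number of crossed cells through a weight function on a test line set maintained over all iterations. Even there, the exact logarithmic factor is delicate: the paper must set $t=\sqrt{n}$ rather than $\sqrt{n}\log n$ and replace the weight function by \eqref{equ:newweight} (dropping a $\log n$ factor in the leaf term) so that Lemma~\ref{lem:bound} gives $|K(l)|=O(\sqrt{n})$ crossed cells and $\sum_{\triangle\in L(l)}|P(\triangle)|=O(\sqrt{n}\log n)$ leaf mass; these global bounds, not a level-by-level composition, are what produce the $O(\sqrt{n}\log n)$ query time (using the original parameters would already cost $O(\sqrt{n}\log^2 n)$). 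To repair your argument you would need an analogous global charging argument; the per-level crossing-number composition alone cannot deliver the stated bound.
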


We can thus obtain our results for the segment intersection problem and the ray-shooting problem.

\begin{theorem}\label{theo:segmentray}
\begin{enumerate}
\item
Given a set of $n$ (possibly intersecting) line segments, we can build a data structure of space $O(n)$ in $O(n^{1.5})$ time so that whether a query line intersects any segment can be determined in $O(\sqrt{n}\log n)$ time.
\item
Given a set of $n$ (possibly intersecting) line segments, we can build a data structure of space $O(n\log n)$ in $O(n^{1.5})$ time so that the first segment hit by a query ray can be found in $O(\sqrt{n}\log n)$ time.
\item
Given a set of $n$ nonintersecting line segments, we can build a data structure of space $O(n)$ in $O(n^{1.5})$ time so that the first segment hit by a query ray can be found in $O(\sqrt{n}\log n)$ time.
\end{enumerate}
\end{theorem}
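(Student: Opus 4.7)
The plan is to assemble the enhanced simplicial partition $\Pi$ from Section~\ref{sec:simpartition} together with the segment storage described just above the statement, then invoke Lemma~\ref{lem:anchor} and the (still-to-be-proved) Lemma~\ref{lem:subsetprocess} as the black-box building blocks. Recall that each segment of $S$ is stored either in some $S_i$ (both endpoints in $P_i$, so the segment lies in the convex cell $\triangle_i$) or at some $S(e)$ (the segment crosses an edge $e$ of a cell). The weakly-overlapped property of $\Pi$ is what makes this storage well-defined in $O(n\log n)$ time and $O(n)$ space via Observation~\ref{obser:boundary}.

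For Part~1, I would preprocess as follows: (i) compute $\Pi$ by Lemma~\ref{lem:time10} in $O(n^{1.5})$ time; (ii) for each cell edge $e$, preprocess $S(e)$ by Lemma~\ref{lem:anchor}(1), costing $O(n\log n)$ in total because $\sum_e |S(e)|=O(n)$; (iii) for each subset $P_i$, preprocess $S_i$ by Lemma~\ref{lem:subsetprocess}(1) in $O(|P_i|^2)=O(n)$ time per subset and $O(|P_i|)$ space. Since $\Pi$ has $\Theta(\sqrt{n})$ subsets of size $\Theta(\sqrt{n})$, step (iii) totals $O(n^{1.5})$ time and $O(n)$ space. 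The query algorithm for a line $l$ splits by storage location: if $s\in S_i$ then $s\subseteq\triangle_i$, so $l$ must cross $\triangle_i$, and it suffices to invoke Lemma~\ref{lem:subsetprocess}(1) on the $O(\sqrt{r})=O(n^{1/4})$ cells crossed by $l$, each call costing $O(n^{1/4}\log n)$, for a total of $O(\sqrt{n}\log n)$; if $s\in S(e)$ we cannot restrict by cells and must invoke Lemma~\ref{lem:anchor}(1) on all $O(\sqrt{n})$ edges, again costing $O(\sqrt{n}\log n)$.

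Part~3 is the same algorithm with Lemma~\ref{lem:anchor}(2) and Lemma~\ref{lem:subsetprocess}(2) in place of their part-(1) counterparts. Each invocation returns a candidate for the first segment hit; the final answer is the candidate whose intersection with the ray is nearest to the ray's origin. Correctness and complexity follow verbatim as in Part~1, using that subsegments of a nonintersecting family remain nonintersecting and each $S(e)$ is anchored to the line containing $e$.

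For Part~2 (ray-shooting among possibly intersecting segments), I would re-use the Cheng--Janardan segment-tree scheme reproduced inside the proof of Theorem~\ref{theo:randomray}, replacing the two randomized ingredients by the deterministic ones now available: the ray-shooting-among-lines data structure of Theorem~\ref{theo:raylines} for each $S(v)$, and the intersection-detection data structure of Part~1 above for each $S'(v)$. The space bound $O(n\log n)$ and query time $O(\sqrt{n}\log n)$ follow by the same geometric-series argument as in Theorem~\ref{theo:randomray}; the preprocessing time satisfies $T(n)=2T(n/2)+O(n^{1.5})$, which solves to $O(n^{1.5})$. The main obstacle in all three parts is hidden in Lemma~\ref{lem:subsetprocess}, whose proof is deferred; granting it, the present theorem is essentially a careful packaging, the only real care being to verify that the query search is exhaustive despite the cells of $\Pi$ overlapping — exactly what the weakly-overlapped property and single-location storage are designed to guarantee.
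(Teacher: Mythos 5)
Your proposal is correct and follows essentially the same route as the paper's own proof: Part 1 and Part 3 combine the enhanced simplicial partition, the edge sets $S(e)$ via Lemma~\ref{lem:anchor}, and the per-subset structures of Lemma~\ref{lem:subsetprocess} with the same time/space accounting and crossing-number argument, and Part 2 plugs Theorem~\ref{theo:raylines} and Part 1 into the Cheng--Janardan segment-tree scheme with the recurrence $T(n)=2T(n/2)+O(n^{1.5})$, exactly as in the remark following Theorem~\ref{theo:randomray}. Deferring Lemma~\ref{lem:subsetprocess} as a black box is also what the paper does.
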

\begin{proof}
We begin with Part (1) of the theorem.
For the preprocessing time, computing $\Pi$ takes $O(n^{1.5})$ time. Storing the segments in $\Pi$ and preprocessing them by Lemma~\ref{lem:anchor} takes $O(n\log n)$ time. Applying Lemma~\ref{lem:subsetprocess} on all subsets $P_i$ of $\Pi$ takes $O(n^{1.5})$ time in total, as the size of each $P_i$ is $O(\sqrt{n})$. Hence, the overall preprocessing time is $O(n^{1.5})$. Following the same analysis, the space is $O(n)$.
Next we describe the query algorithm and analyze the query time.

Consider a query line $\ell$. First, for each cell $\triangle_i$ of $\Pi$, for each edge $e$ of $\triangle_i$, we determine whether $\ell$ intersects a segment of $S(e)$, which can be done in $O(\log n)$ time by Lemma~\ref{lem:anchor}(1); if the answer is yes, then we halt the entire query algorithm. As $\Pi$ has $\Theta(\sqrt{n})$ cells and each cell has at most four edges, the total time of this step is $O(\sqrt{n}\log n)$. Second, by checking every cell of $\Pi$, we find those cells that are crossed by $\ell$. For each such cell $\triangle_i$, by Lemma~\ref{lem:subsetprocess}(1), we determine whether $\ell$ intersects any segment of $S_i$ in $O(n^{1/4}\log n)$ time, for $|P_i|=\Theta(\sqrt{n})$; if the answer is yes, then we halt the entire algorithm. As $\ell$ can cross at most $O(n^{1/4})$ cells of $\Pi$, this step takes $O(\sqrt{n}\log n)$ time. Hence, the query time is $O(\sqrt{n}\log n)$.

To see the correctness of the algorithm, suppose $\ell$ intersects a segment $s\in S$. If both endpoints of $s$ are in the same subset $P_i$ of $\Pi$, then $s\in S_i$ and $\ell$ must cross the cell $\triangle_i$ and thus the intersection will be detected in the second step of the algorithm when we invoke the query algorithm of Lemma~\ref{lem:subsetprocess}(1) on $P_i$. If the two endpoints of $s$ are not in the same subset $P_i$ of $\Pi$, then by Observation~\ref{obser:boundary}, $s$ must be stored at an edge $e$ of a cell of $\Pi$; thus the intersection will be detected when we invoke the query algorithm of Lemma~\ref{lem:anchor}(1) on $S(e)$.

Part (2) of the theorem has been discussed in the proof of Theorem~\ref{theo:randomray} (see the remark at the end of the proof), i.e., we apply Cheng and Janardan's algorithmic scheme~\cite{ref:ChengAl92} but instead use our result in Theorems~\ref{theo:raylines}  for the ray-shooting problem among lines and use the result of Part (1) of this theorem for the intersection detection problem.

For Part (3), the preprocessing is similar to Part (1).
The query algorithm is also very similar.
Consider a query ray $\rho$. First, for each cell $\triangle_i$ of $\Pi$, for each edge $e$ of $\triangle_i$, we determine the first segment of $S(e)$ hit by $\rho$, which can be done in $O(\log n)$ time by Lemma~\ref{lem:anchor}(2). Second, for each cell $\triangle_i$ of $\Pi$, if it is crossed by $\Pi$, then by Lemma~\ref{lem:subsetprocess}(2), we find the first segment of $S_i$ hit by $\rho$ in $O(n^{1/4}\log n)$ time. Third, among all segments found above, we return the one whose intersection with $\rho$ is closest to the origin of $\rho$. The total query time is $O(\sqrt{n}\log n)$ time.
\end{proof}


\subsubsection{Proving Lemma~\ref{lem:subsetprocess}}
\label{sec:subsetprocess}

In this section, we prove Lemma~\ref{lem:subsetprocess}. Since both endpoints of $s$ are in $P_i$ for
each segment $s\in S_i$, $|S_i|\leq |P_i|/2$. To simplify the notation, let $n=|P_i|$, $P=P_i$,
and $S=S_i$. Hence, $|S|\leq n/2$. With these notation, we restate Lemma~\ref{lem:subsetprocess} as follows.

\begin{mylemma}\label{lem:subsetprocess10}{\em (A restatement of Lemma~\ref{lem:subsetprocess})}
Let $P$ be a set of $n$ points in the plane and let $S$ be a set of segments whose endpoints are in $P$.
\begin{enumerate}
\item
With $O(n^{2})$ time and $O(n)$ space preprocessing, whether a query line intersects any segment of $S$ can be determined in $O(\sqrt{n}\log n)$ time.
\item
If the segments of $S$ are nonintersecting, then with $O(n^{2})$ time and $O(n)$ space preprocessing, the first segment of $S$ hit by a query ray can be found in $O(\sqrt{n}\log n)$ time.
\end{enumerate}
\end{mylemma}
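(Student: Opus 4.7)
My plan is to build a recursive hierarchical partition of $P$ by repeatedly applying the enhanced simplicial partition from Section~\ref{sec:simpartition}. At the root, I apply Lemma~\ref{lem:simpartition} with $z=\sqrt{n}$ to obtain a partition with $\Theta(\sqrt{n})$ cells containing $\Theta(\sqrt{n})$ points each; on each subset $P_v$ produced, I apply the same construction to obtain a sub-partition with $\Theta(|P_v|^{1/2})$ cells, and so on recursively until subsets have constant size. This yields a tree $T$ of depth $O(\log\log n)$.

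Segments of $S$ are inserted into $T$ exactly as in the randomized scheme of Section~\ref{sec:randomized}. For each $s \in S$, I descend $T$ from the root; at a node with partition $\Pi$, if both endpoints of $s$ lie in a common child subset, I recurse into that child; otherwise, Observation~\ref{obser:boundary} (which holds at every level because the weakly-overlapped property of Lemma~\ref{lem:property4} survives the recursion) guarantees that $s$ crosses the boundary of some cell of $\Pi$, and I store $s$ at an edge $e$ of such a cell. For every edge $e$ appearing anywhere in $T$, I preprocess the set $S(e)$ using Lemma~\ref{lem:anchor}(1) for statement~(1), or Lemma~\ref{lem:anchor}(2) for statement~(2); at leaves, the $O(1)$ remaining segments are stored explicitly.

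The query algorithm, given a line (respectively ray) $\ell$, traverses $T$ recursively: at each visited node, it queries the Lemma~\ref{lem:anchor} data structure for every edge of every child cell, and recurses into each child cell that $\ell$ crosses. Correctness follows because any segment $s$ hit by $\ell$ is either stored at an edge of a cell that $\ell$ must cross, or descended to a leaf whose cell contains $s$ and hence is also crossed by $\ell$; in case~(2), we return the hit segment whose intersection with $\ell$ is nearest the origin of $\ell$.

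For the analysis, computing the partition at a node with $m$ points costs $O(m^{1.5})$ by Lemma~\ref{lem:time10} and preprocessing the edge data structures at that node costs $O(m\log m)$ by Lemma~\ref{lem:anchor}; summing over the recursion gives total preprocessing $O(n^{1.5})$, well within the $O(n^2)$ budget. Space is $O(n)$ because each segment is stored exactly once and the total cell count across all $O(\log\log n)$ levels telescopes to $O(n)$. The query cost at a node with $m$ points is $O(\sqrt{m}\log m)$ for the edge queries, and by the crossing number $O(m^{1/4})$ of the partition we enter only $O(m^{1/4})$ children, yielding the recurrence $Q(m) = O(\sqrt{m}\log m) + O(m^{1/4})\cdot Q(\sqrt{m})$, which telescopes to $Q(n) = O(\sqrt{n}\log n)$ because the $\log m$ factor halves at each level. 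The main obstacle will be verifying rigorously that the weakly-overlapped property continues to support Observation~\ref{obser:boundary} at every recursion level and that the telescoping of cell counts, preprocessing time, and query time is valid with the $\Theta(\log\log n)$ recursion depth.
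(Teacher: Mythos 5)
There is a genuine gap in your query-time analysis. At a node holding $m$ points the partition theorem only guarantees a crossing number of at most $C\,m^{1/4}$ for some absolute constant $C>1$, and in your recursion these constants multiply: the number of nodes visited at depth $j$ is bounded by $\prod_{i<j} C\,m_i^{1/4}=C^{j}\,n^{\frac12(1-1/2^{j})}$ with $m_i=n^{1/2^{i}}$, so the total cost at depth $j$ is $\Theta\bigl((C/2)^{j}\sqrt{n}\log n\bigr)$ once you account for the per-node edge-query cost $O(\sqrt{m_j}\log m_j)$. Your telescoping argument only uses that the explicit $\log m_j$ factor halves at each level; it ignores the factor $C^{j}$, and over the $\Theta(\log\log n)$ levels this contributes $2^{\Theta(\log\log n)}=(\log n)^{\Theta(1)}$, so the recurrence solves to $O(\sqrt{n}\,(\log n)^{O(1)})$ rather than $O(\sqrt{n}\log n)$ unless one could certify $C\le 2$, which the construction does not provide. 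This compounding of constants is precisely the classical weakness of recursively applied simplicial partitions (it is why the partition trees of~\cite{ref:MatousekEf92} have query time $\sqrt{n}\,(\log n)^{O(1)}$), and it is the reason the paper does \emph{not} prove Lemma~\ref{lem:subsetprocess10} by recursing on Lemma~\ref{lem:simpartition}/Lemma~\ref{lem:time10}.

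What the paper does instead is adapt the basic algorithm of~\cite{ref:MatousekRa93}, based on efficient hierarchical cuttings: it builds, in $O(n^{2})$ time, a single forest of $f\le 2t$ trees with $t=\sqrt{n}$, each of depth $O(\log n)$, using a modified weight function (Equation~\eqref{equ:newweight}) so that the global bounds of Lemma~\ref{lem:bound} hold, namely that any line crosses only $O(\sqrt{n})$ cells in total over the whole forest and the crossed leaf cells contain only $O(\sqrt{n}\log n)$ points in total. Because these bounds are proved once for the entire multi-level structure (via the weighted test set, not level-by-level), there is no per-level constant blow-up, and the query time $O(t\log n+|K(\ell)|\log n+\sum_{\triangle\in L(\ell)}|P(\triangle)|)=O(\sqrt{n}\log n)$ follows; the weakly-overlapped property (Lemma~\ref{lem:property4new}) and the $\ell^*$-trimming of cells are still needed, as in your scheme, to make Observation~\ref{obser:boundary} available for storing segments at cell edges. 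Your structural ideas (segment placement via the weakly-overlapped property, Lemma~\ref{lem:anchor} on edge sets, the space and $O(n^{1.5})$-preprocessing accounting) are sound, but to rescue the query bound you would have to either bound the crossing-number constant by $2$ or replace the recursion by a construction with global crossing guarantees, which is exactly the route the paper takes.
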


In the following, we prove Lemma~\ref{lem:subsetprocess10}.
We resort to the techniques of Matou\v{s}ek~\cite{ref:MatousekRa93},
which provides a more efficient partition tree using Chazelle's
algorithm for computing hierarchical cuttings~\cite{ref:ChazelleCu93}.
We still need to modify the algorithm
in~\cite{ref:MatousekRa93} as we did before for computing the enhanced
simplicial partition. In particular, we need to have a similar weakly-overlapped property.
We also have to
change the weight function defined on the line sets in order to
achieve the claimed query time. In the following, we first review the
algorithm of Matou\v{s}ek in~\cite{ref:MatousekRa93}. As discussed
before, Matou\v{s}ek first gave a basic algorithm of polynomial time
and then reduce the time to $O(n^{1+\delta})$ using multilevel data
structures. Here we cannot use multilevel data structures and thus only use
his basic algorithm (i.e., the one in Theorem 4.1 of~\cite{ref:MatousekRa93}).
We will show that his basic algorithm can be
implemented in $O(n^{2})$ time.

We first construct a data structure for a subset $P'$ of at least half
points of $P$. To build a data structure for the whole $P$, the above
construction is performed for $P$, then for $P\setminus P'$, etc., and
thus a logarithmic number of data structures with geometrically
decreasing sizes will be obtained. Because the preprocessing time of
the data structure for $P'$ is $\Omega(n)$ and the space is $\Theta(n)$,
constructing all data structures for $P$ takes asymptotically the same time and space
as those for $P'$ only. To answer a simplex range query on $P$, each of these data
structures will be called. Since the query time for $P'$ is
$\Omega(\sqrt{n})$, the total query time for $P$ is asymptotically the
same as that for $P'$. Below we describe the data
structure for $P'$.

The data structure has a set of (not necessarily disjoint) triangles,
$\Psi_0=\{\triangle_1,\ldots,\triangle_t\}$ with $t=\sqrt{n}\log n$.
For each $1\leq i\leq t$, we have a subset $P_i\subseteq P$ of at most
$\frac{n}{2t}$ points that are contained in $\triangle_i$. The subsets $P_i$'s form a disjoint partition of
$P'$.
For each $i$, there is a rooted tree $T_i$ whose nodes correspond to
triangles, with $\triangle_i$ as the root. Each internal node of $T_i$
has $O(1)$ children whose triangles are
interior-disjoint and together cover their parent triangle. For each
triangle $\triangle$ of $T_i$, let $P(\triangle)=P_i\cap \triangle_i$.
If $\triangle$ is a leaf, then the points of $P(\triangle)$ are
explicitly stored at $\triangle$. Each point
of $P_i$ is stored in exactly one leaf triangle of $T_i$.
The depth of $T_i$ is $q=O(\log n)$. Hence, the data structure is a forest of $t$
trees. Let $\Psi_j$ denote the set of all triangles of all trees
$T_i$'s that lie at distance $j$ from the root (note that $\Psi_0$ is
consistent with this definition). For any line $l$ in the plane, let $K_j(l)$ be
the set of triangles of $\Psi_j$ crossed by $l$; let $L_j(h)$ be the
leaf triangles of $K_j(l)$. Define $K(l)=\bigcup_{j=0}^q K_j(l)$ and
$L(l)=\bigcup_{j=0}^q L_j(l)$. Matou\v{s}ek~\cite{ref:MatousekRa93}
proved that $\sum_{j=0}^q|\Psi_j|=O(n)$, and $|K(l)|=O(\sqrt{n})$ and $\sum_{\triangle\in L(l)}|P(\triangle)|=O(\sqrt{n})$ hold for any line $l$ in the plane.


We next review Matou\v{s}ek's basic
algorithm~\cite{ref:MatousekRa93} for constructing the data structure described above.
As in the algorithm for constructing simplicial partitions, the first step is to compute a test set $H$ (called a {\em guarding set} in~\cite{ref:MatousekRa93}) of $n$ lines, which can be done in $O(n\sqrt{n})$ time as discussed in Section~\ref{sec:simpartition}. After that, the algorithm proceeds in $t$ iterations; in the $i$-th iteration, $T_i$, $\triangle_i$, and $P_i$ will be produced.

Suppose $T_j$, $\triangle_j$, and $P_j$ for all $j=1,2\ldots,i$ have been constructed.
Define $P_i'=P\setminus(P_1\cup\cdots \cup P_i)$. If $|P_i'|<n/2$, then we stop the construction. Otherwise, we proceed with the $(i+1)$-th iteration as follows. Let $\Psi_0^{(i)},\ldots,\Psi_q^{(i)}$ denote the already constructed parts of $\Psi_0,\ldots,\Psi_q$. Define $K_j^{(i)}(l)$ and $L_j^{(i)}(l)$ similarly as $K_j(l)$ and $L_j(l)$. We define a weighted line set $(H,w_i)$.
For each line $l\in H$, define a weight
\begin{equation}\label{equ:weight}
w_i(l)= \exp\bigg(\frac{\log n}{\sqrt{n}}\cdot\bigg[\sum_{j=0}^{q}4^{q-j}\cdot |K_j^{(i)}(l)|+\sum_{\triangle\in K_q^{(i)}(l)}{|P(\triangle)|}\bigg]\bigg).
\end{equation}

The next step is to compute an {\em efficient hierarchical
$(1/r)$-cutting} for $(H,w_i)$ with $r=\sqrt{n}$, which consists of a
sequence of cuttings $\Xi_0,\Xi_1,\ldots,\Xi_k$ that satisfy the
following properties. (1) $\Xi_0$ is a single triangle that contains
the entire plane. (2) For two fixed constants $C$ and $\rho>4$, for each
$1\leq j\leq k$, $\Xi_j$ is a $(1/\rho^j)$-cutting for $(H,w_i)$ of size
$O(\rho^{2j})$ such that each triangle of $\Xi_j$ is contained in a
triangle of $\Xi_{j-1}$ and each triangle of $\Xi_{j-1}$ contains at
most $C$ triangles of $\Xi_j$ (if a triangle $\triangle\in \Xi_{j-1}$ contains a triangle $\triangle'\in \Xi_j$, we say that $\triangle$ is the {\em parent} of $\triangle'$ and $\triangle'$ is a {\em child} of $\triangle$). (3) $\rho^{k-1} < r \leq \rho^k$ and
thus $k=\Theta(\log r)$.

We let $p$ be the largest index such that the size of $\Xi_p$ is at most $t$. As the size of $\Xi_j$ is
$O(\rho^{2j})$, we obtain that $\rho^{2p}=\Theta(t)$ and $\Xi_p$ is a
$(1/r_p)$-cutting of $(H,w_i)$ with $r_p=\rho^p=\Theta(\sqrt{t})$.
We define $q=k-p$. Note that $\rho^{q}=O(r/\sqrt{t})=O(\sqrt{n/t})$. Since $|P_i'|\geq n/2$ and $\Xi_p$ has at most $t$ triangles, $\Xi_p$ has a triangle, denoted by $\triangle_{i+1}$, containing at least $\frac{n}{2t}$ points of $P_i'$. We arbitrarily select $\frac{n}{2t}$ points of $P_i'\cap \triangle_{i+1}$ to form the set $P_{i+1}$. Further, all triangles in $\Xi_p,\Xi_{p+1},\ldots,\Xi_k$ contained in $\triangle_{i+1}$ form the tree $T_{i+1}$, whose root is $\triangle_{i+1}$. Next, we remove some nodes from $T_{i+1}$ as follows; we call it a {\em pruning procedure}. Starting from the root, we perform a depth-first-search (DFS). Let $\triangle$ be the triangle of the current node the DFS is visiting. Suppose $\triangle$ belongs to $\Xi_{a+j}$ for some $0\leq j\leq q$. If $\triangle$ contains at least $2^{q-j}$ points of $P_{i+1}$ ($\triangle$ is called a {\em fat} triangle in~\cite{ref:MatousekRa93}), then we proceed on the children of $\triangle$; otherwise, we make $\triangle$ a leaf node and return to its parent (and continue DFS). In other words, a triangle of $T_{i+1}$ is kept if and only all its ancestor triangles are fat. This finishes the construction of the $(i+1)$-th iteration.

For our purpose, we modify the algorithm as follows (we only point out the differences). Let $\triangle^*$ denote the above $\triangle_{i+1}$ that contains at least $\frac{n}{2t}$ points of $P_i'$. Let $l^*$ be a line such that its left side contains exactly $\frac{n}{2t}$ points of $P_i'\cap \triangle^*$ (and we use these points to form $P_{i+1}$).
We now set $\triangle_{i+1}$ to the part of $\triangle^*$ on the left side of $l^*$. Hence, $\triangle_{i+1}$ is either a triangle or a convex quadrilateral. We form the tree $T_{i+1}$ in the same way as above except that each node of $T_{i+1}$ now corresponds to a {\em cell}, which is either a triangle or a convex quadrilateral. This change will guarantee a similar weakly-overlapped property as in Lemma~\ref{lem:property4}.

The second change we make is that we set $t$ to $\sqrt{n}$ instead of $\sqrt{n}\log n$. The third change is that we redefine the weight function in \eqref{equ:weight} as follows (i.e., the second term does not have the $\log n$ factor any more):
\begin{equation}\label{equ:newweight}
w_i(l)= \exp\bigg(\frac{\log n}{\sqrt{n}}\cdot\sum_{j=0}^{q}4^{q-j}\cdot |K_j^{(i)}(l)|+\frac{1}{\sqrt{n}}\cdot\sum_{\triangle\in K_q^{(i)}(l)}{|P(\triangle)|}\bigg).
\end{equation}
As a consequence, by following Matou\v{s}ek's proof in~\cite{ref:MatousekRa93} (i.e., Theorem 4.1), we have the following Lemma~\ref{lem:bound}. Before proceeding to the lemma proof, we briefly explain why we need to make these changes. As will be clear later, the time complexity of the query algorithm for our problem is bounded by $O(t\log n + K(l)\cdot \log n + \sum_{\triangle\in L(l)}|P(\triangle)|)$. To guarantee the $O(\sqrt{n}\log n)$ query time, we need to make sure that both $t$ and $K(l)$ are bounded by $O(\sqrt{n})$. For the simplex range searching problem, Matou\v{s}ek's algorithm needs to bound both $K(l)$ and $\sum_{\triangle\in L(l)}|P(\triangle)|$ by $O(\sqrt{n})$, and to do so, the algorithm needs to set $t$ to $\sqrt{n}\log n$. For our problem, it is sufficient to bound $\sum_{\triangle\in L(l)}|P(\triangle)|$ by $O(\sqrt{n}\log n)$\footnote{This is also reflected in our new weight function, where the second term does not have a $\log n$ factor as in \eqref{equ:weight}; intuitively, this implies that the number of points in the leaves is less important than before.}; consequently, we are able to use a smaller $t$ with $t=\sqrt{n}$.

\begin{mylemma}\label{lem:bound}
\begin{enumerate}
\item $\sum_{j=0}^q|\Psi_j|=O(n)$.
\item
For any line $l$ in the plane, $|K(l)|=O(\sqrt{n})$ and $\sum_{\triangle\in L(l)}|P(\triangle)|=O(\sqrt{n}\log n)$.
\end{enumerate}
\end{mylemma}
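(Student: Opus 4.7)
The plan is to follow the analysis of Matou\v{s}ek's Theorem~4.1 in~\cite{ref:MatousekRa93}, re-verifying each step for our modified construction in which cells may be convex quadrilaterals, $t=\sqrt{n}$ rather than $\sqrt{n}\log n$, and the weight function~\eqref{equ:newweight} drops the $\log n$ factor in front of the $|P(\triangle)|$ term present in~\eqref{equ:weight}.

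Part~1 follows directly from the pruning criterion. A cell at level $j$ of $T_i$ survives pruning only if its parent at level $j-1$ is fat and hence contains at least $2^{q-j+1}$ points of $P_i$. Since each node has $O(1)$ children and the $|P_i|$ points are partitioned among the leaves of $T_i$, the number of cells at level $j$ of $T_i$ is $O(|P_i|/2^{q-j})$. Summing over $j$ telescopes to $O(|P_i|)$, and summing over $i$ gives $\sum_{j=0}^q |\Psi_j|=O(|P'|)=O(n)$.

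For Part~2, the heart of the argument is an inductive bound $w_i(H)=n^{O(1)}$ across all $\Theta(\sqrt{n})$ iterations. At iteration $i+1$, we build an efficient hierarchical $(1/r)$-cutting $\Xi_0,\ldots,\Xi_k$ of $(H,w_i)$ with $r=\sqrt{n}$ and select $\triangle_{i+1}\subseteq \triangle^{*}\in\Xi_p$, where $\rho^{2p}=\Theta(t)$ and $q=k-p$. For each $l\in H$, the increment $\log(w_{i+1}(l)/w_i(l))$ is proportional to $\sum_j 4^{q-j}\cdot|K_j^{T_{i+1}}(l)| + \sum_{\triangle\in K_q^{T_{i+1}}(l)}|P(\triangle)|/\log n$, where $K_j^{T_{i+1}}(l)$ denotes the new cells at level $j$ of $T_{i+1}$ crossed by $l$. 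Summing over $l\in H$, the number of line--cell incidences at level $p+j$ is bounded via the $(1/\rho^{p+j})$-cutting property of $\Xi_{p+j}$: the total weight of lines crossing any cell is at most $w_i(H)/\rho^{p+j}$ and there are $O(\rho^{2(p+j)})$ cells. Balancing the exponential prefactor against the cutting bound (using $\rho>4$ so that $4^{q-j}/\rho^{p+j}$ decays), the bookkeeping goes through exactly as in~\cite{ref:MatousekRa93} and gives $w_{i+1}(H)\le (1+O(1/\sqrt{n}))\cdot w_i(H)$. After $\Theta(\sqrt{n})$ iterations this yields $w_{\mathrm{final}}(H)=n^{O(1)}$. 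The loss of the $\log n$ factor on the second term of~\eqref{equ:newweight} is exactly compensated by our smaller $t=\sqrt{n}$, so the same balance still works; this verification is the main obstacle.

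Once $w_{\mathrm{final}}(H)=n^{O(1)}$ is established, the standard test-set argument (which is why $H$ is chosen from the vertices of a $(1/t)$-cutting in the dual) extends this to every line $l$ in the plane, giving $\log w_{\mathrm{final}}(l)=O(\log n)$. Unpacking~\eqref{equ:newweight} yields
\[
\sum_{j=0}^q 4^{q-j}\,|K_j(l)| \;=\; O(\sqrt{n}), \qquad \sum_{\triangle\in K_q(l)}|P(\triangle)| \;=\; O(\sqrt{n}\log n).
\]
The first bound forces $|K_j(l)|=O(\sqrt{n}/4^{q-j})$, so $|K(l)|=\sum_j|K_j(l)|=O(\sqrt{n})$ by geometric summation. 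For $\sum_{\triangle\in L(l)}|P(\triangle)|$, I split the leaves by level: those at the deepest level $q$ contribute $O(\sqrt{n}\log n)$ from the second bound, while any leaf at level $j<q$ must have failed the fatness test and so has $|P(\triangle)|<2^{q-j}$, contributing in total at most $|K_j(l)|\cdot 2^{q-j}=O(\sqrt{n}/2^{q-j})$, which telescopes to $O(\sqrt{n})$. Combining the two parts gives $\sum_{\triangle\in L(l)}|P(\triangle)|=O(\sqrt{n}\log n)$, as required.
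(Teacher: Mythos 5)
Your overall route is the same as the paper's: both proofs reduce everything to showing $\log w_t(H)=O(\log n)$ for the modified weight \eqref{equ:newweight}, then unpack the exponent to get $|K_j(l)|=O(\sqrt{n}\,4^{-(q-j)})$ and $\sum_{\triangle\in K_q(l)}|P(\triangle)|=O(\sqrt{n}\log n)$ for test lines, extend to arbitrary lines via the test-set argument, and finish with the geometric summation plus the ``non-fat leaf'' bound (your last paragraph spells out explicitly what the paper delegates to the three paragraphs after Lemma~4.2 of~\cite{ref:MatousekRa93}); your Part~1 count via fatness of parents is a harmless variant of the paper's ``$O(n/t)$ new cells per iteration'' count.

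Two points, however, need attention. First, your stated per-iteration bound $w_{i+1}(H)\le(1+O(1/\sqrt{n}))\,w_i(H)$ is what Matou\v{s}ek's parameters ($t=\sqrt{n}\log n$) give; with $t=\sqrt{n}$ the level-$q$ cells alone already force a factor $1+\Theta(\log n/\sqrt{n})$ per iteration (about $\sqrt{n}$ leaf cells, each crossed by a $1/\rho^{k}$-fraction of the weight, each crossing multiplying a line's weight by $e^{\log n/\sqrt{n}}$). This does not hurt you, since $\bigl(1+O(\log n/\sqrt{n})\bigr)^{\sqrt{n}}=n^{O(1)}$ is all that is needed, but the claim should be corrected. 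Second, and more substantively, the step you yourself call ``the main obstacle'' --- that dropping the $\log n$ factor on the $|P(\triangle)|$ term is compensated by $t=\sqrt{n}$ --- is asserted rather than verified, and it is exactly the piece that does not ``go through exactly as in~\cite{ref:MatousekRa93}'': one must check that the leaf factor satisfies $f(\triangle)=1+O\bigl((\exp(|P(\triangle)|/\sqrt{n})-1)/\sqrt{n}\bigr)\le\exp\bigl(O(|P(\triangle)|/n)\bigr)$, which uses $e^x\le 1+2x$ and hence requires $|P(\triangle)|/\sqrt{n}\le 1$, i.e., $n/(2t)\le\sqrt{n}$; this is precisely the computation the paper's proof supplies (and the reason $t\ge\sqrt{n}/2$ is the smallest admissible choice, cf.\ Footnote~\ref{foot:10}). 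Adding that calculation would close the only real gap in your write-up.
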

\begin{proof}
The proof is almost the same as that in~\cite{ref:MatousekRa93} (i.e., the proof of Theorem 4.1). We briefly discuss it by referring to the corresponding parts in~\cite{ref:MatousekRa93} .

The proof for $\sum_{j=0}^q|\Psi_j|=O(n)$ is exactly the same as that in~\cite{ref:MatousekRa93}. Indeed, the algorithm adds $O(\rho^{2q})=O(n/t)$ new cells in each of the $t$ iterations. Therefore, the total number of cells is $O(n)$.

For the second lemma statement, we claim that for any line $l\in H$ the following hold (which correspond to Lemma~4.2~\cite{ref:MatousekRa93}):
\begin{equation}\label{equ:10}
|K_j(l)|=O(\sqrt{n}\cdot 4^{-(q-j)}), j=0,1,\ldots,q,
\end{equation}
\begin{equation}\label{equ:20}
\sum_{\triangle\in K_q(l)}|P(\triangle)|=O(\sqrt{n}\log n).
\end{equation}

With the above claim, following literally the same proof as that in~\cite{ref:MatousekRa93} (specifically, the three paragraphs after Lemma 4.2~\cite{ref:MatousekRa93}), the second lemma statement can be proved.

In the following, we prove the above claim, which is similar to the proof of Lemma 4.2 of~\cite{ref:MatousekRa93}. We focus on the differences.

The key is to prove that $\log w_t(H)=O(\log n)$ (recall that $w_t(H)$
stands for the total weight of all lines of $H$ after the $t$-th
iteration of the algorithm). Indeed, by our definition of the weight
funciton, we have
\begin{equation*}
\frac{\log n}{\sqrt{n}}\cdot\sum_{j=0}^{q}4^{q-j}\cdot
|K_j(l)|+\frac{1}{\sqrt{n}}\cdot\sum_{\triangle\in
K_q(l)}{|P(\triangle)|}\leq \log w_t(H),\ \  j=0,1,\ldots,q.
\end{equation*}
This leads to Equations~\eqref{equ:10} and \eqref{equ:20}, for  $\log w_t(H)=O(\log n)$.

It remains to prove $\log w_t(H)=O(\log n)$. The proof follows the same line as
in~\cite{ref:MatousekRa93}. Indeed, the bound for $f_j$ (see~\cite{ref:MatousekRa93} for the definition)
is the same as
before as it is for the first term of \eqref{equ:newweight}, which is
the same as Matou\v{s}ek's weight definition in \eqref{equ:weight}. The bound for $f(\triangle)$ (which is $f(s)$ in~\cite{ref:MatousekRa93}), however,
is different because our weight definition does not have the $\log n$ factor.
As a consequence, we have the following
$$f(\triangle)=1+O\bigg(\frac{\exp(|P(\triangle)|/\sqrt{n})-1}{\sqrt{n}}\bigg)$$
Note that $|P(\triangle)|\leq n/(2t)=\sqrt{n}/2$. Using the
inequalities $1+x\leq e^x\leq 1+2x$ (the latter one holds for $x\leq
1$ \footnote{To guarantee $|P(\triangle)|/\sqrt{n}\leq 1$ for using the inequality $e^x\leq 1+2x$, it suffices to have $n/(2t)\leq \sqrt{n}$. Hence, $t\geq \sqrt{n}/2$. Therefore, $\sqrt{n}/2$ is the smallest possible value for $t$ to make the proof work if we choose the weight function as \eqref{equ:newweight}. Using Matou\v{s}ek's original weight function, the smallest possible value for $t$ is $\sqrt{n}\log n/2$. Therefore, in order to set $t$ to $\sqrt{n}$ (to guarantee the query time complexities of our problems), we have to change the weight function in order to make sure the same proof works.\label{foot:10}}), we further obtain
$$f(\triangle)=1+O\bigg(\frac{\exp(|P(\triangle)|/\sqrt{n})-1}{\sqrt{n}}\bigg)\leq
1+O\bigg(\frac{|P(\triangle)|/\sqrt{n}}{\sqrt{n}}\bigg)\leq \exp\bigg(O\bigg(\frac{|P(\triangle)|}{n}\bigg)\bigg).$$

Following the rest of the argument in~\cite{ref:MatousekRa93}, we can still derive $\log w_t(H)=O(\log n)$\footnote{Note that Matou\v{s}ek~\cite{ref:MatousekRa93} also showed that the weight of each line of $H$ increases by at most a constant factor in every iteration. This property does not hold any more in our case. However, this does not affect the proof of $\log w_t(H)=O(\log n)$, i.e., although we do not have a good bound for the increase of the weight in each individual iteration, we can still achieve asymptotically the same bound as before for the total weight after all iterations.}.
\end{proof}

This finishes our algorithm for constructing the data structure for $P'$. As discussed before, to construct the data structure for the whole set $P$, we perform the above construction for a logarithmic number of times; each time we obtain a forest. The total number of all trees in all these forests is at most a number $f\leq 2t$. We order these trees by the time they constructed: $T_1,T_2,\ldots,T_f$. Correspondingly, we have the cells $\triangle_1,\ldots,\triangle_f$, and the subsets $P_1,\ldots,P_f$, which form a disjoint partition of $P$. Because the sizes of the problems which these logarithmic number of constructions are based on are geometrically decreasing, the bounds in Lemma~\ref{lem:bound} still hold for all these $f$ trees.
The following lemma is analogous to Lemma~\ref{lem:property4}.

\begin{mylemma}\label{lem:property4new}{\em (The weakly-overlapped property)}
Among the cells $\triangle_1,\ldots,\triangle_f$, if a cell $\triangle_i$ contains a point $p\in P_j$ with $j\neq i$, then all points of $P_i$ are outside $\triangle_j$.
\end{mylemma}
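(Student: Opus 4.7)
The plan is to replay the proof of Lemma~\ref{lem:property4} in this new setting, since the enhanced simplicial partition there and the enhanced partition tree here were both designed around the same modification: taking an existing cutting-cell $\triangle^*$ and cutting it with a line $\ell^*$ chosen so that exactly the prescribed number of still-available points fall on its left side, and then calling that left portion the new cell. This is precisely what delivers the weakly-overlapped property in both lemmas, so the old proof should carry over almost verbatim.

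First I would isolate the key invariant, namely that for each globally indexed iteration $i$ (with $1\leq i\leq f$, concatenating iterations across all $O(\log n)$ rounds of the construction) the cell satisfies
$$\triangle_i \cap P_{i-1}' \;=\; P_i, \qquad\text{where } P_{i-1}' := P \setminus (P_1 \cup \cdots \cup P_{i-1}),$$
so in particular $\triangle_i$ contains no point of $P_{i-1}'\setminus P_i$. This is immediate from our modification of Matou\v{s}ek's basic algorithm: $\triangle_i$ is the portion of some cutting-cell $\triangle^*\in\Xi_p$ lying to the left of $\ell^*$, and $\ell^*$ was chosen so that its left side contains exactly the $n/(2t)$ points of $P_{i-1}'\cap\triangle^*$ that constitute $P_i$.

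Granted the invariant, the argument is identical to that of Lemma~\ref{lem:property4}. Suppose $\triangle_i$ contains some $p\in P_j$ with $j\neq i$. Since $p\in\triangle_i$ but $p\notin P_i$, the invariant forces $p\notin P_{i-1}'$, so $p$ was removed in an earlier iteration and hence $j<i$. Because the subsets are pairwise disjoint and $j\in\{1,\ldots,i-1\}$, we have $P_i \subseteq P_{i-1}' \subseteq P_{j-1}'\setminus P_j$. Applying the invariant at iteration $j$ yields $\triangle_j\cap(P_{j-1}'\setminus P_j)=\emptyset$, and therefore $\triangle_j\cap P_i=\emptyset$, which is exactly what the lemma asserts.

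The only point that needs a brief sanity check is that the global indexing is well-behaved across rounds: when round $r{+}1$ starts, the point set it processes is exactly the set of points not yet chosen in rounds $1,\ldots,r$, which coincides with $P_{i-1}'$ for its first iteration. Hence $P_{i-1}'$ correctly describes the residual set no matter where the round boundaries fall, and no extra argument is needed at a round transition. I do not expect a real obstacle here; the main subtlety is simply verifying that the left-of-$\ell^*$ modification in the partition-tree construction produces the same cell/residual invariant that drove the proof in the simplicial-partition case.
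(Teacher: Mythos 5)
Your proposal is correct and follows essentially the same argument as the paper, which simply replays the proof of Lemma~\ref{lem:property4}: the invariant that $\triangle_i$ contains no point of $P_{i-1}'\setminus P_i$ (guaranteed by the left-of-$\ell^*$ modification) forces $j<i$, and applying the same invariant at iteration $j$ gives $\triangle_j\cap P_i=\emptyset$. Your extra check that the global indexing across the logarithmically many construction rounds preserves the residual-set invariant is a point the paper leaves implicit, and it is handled correctly.
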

\begin{proof}
The proof is literally the same as that for Lemma~\ref{lem:property4}. Suppose $\triangle_i$ contains a point $p\in P_j$ with $j\neq i$. When the algorithm constructs $P_i$, $\triangle_i$ does not contain any point of $P_{i-1}'\setminus P_i$, where $P_{i-1}'=P\setminus(P_1\cup\cdots \cup P_{i-1})$. Hence, $P_j$ must be constructed earlier than $P_i$, i.e., $j<i$. When the algorithm constructs $P_j$, $\triangle_j$ does not contain any point of $P_{j-1}'\setminus P_j$, where $P_{j-1}'=P\setminus(P_1\cup\cdots \cup P_{j-1})$. Since $j<i$, $P_i\subseteq P_{j-1}'\setminus P_j$. Therefore, $\triangle_j$ does not contain any point of $P_i$.
\end{proof}

\begin{mylemma}\label{lem:preprowholeP}
The data structure for the whole $P$ can be constructed in $O(n^2)$ time and $O(n)$ space.
\end{mylemma}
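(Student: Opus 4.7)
The plan is to reduce to bounding the cost of the first construction on $P'$, since the data structure on the whole $P$ is built by a logarithmic sequence of constructions on point sets of geometrically decreasing sizes ($P$, then $P\setminus P'$, etc.), so the total preprocessing time and space form geometric series dominated by the cost of the first construction on $P'$ of size $\Theta(n)$. I would split the work of the first construction into (a) the one-time computation of the test/guarding set $H$ of $n$ lines, and (b) the cost of the $t=\sqrt{n}$ iterations. For (a), the same method reviewed in Section~\ref{sec:simpartition} (computing a $(1/\Theta(\sqrt{n}))$-cutting for the dual lines of $P$) takes $O(n\sqrt{n})=O(n^{1.5})$ time by Lemma~\ref{lem:cutting}.

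The main obstacle lies in (b): in the $(i+1)$-th iteration, we must compute an efficient hierarchical $(1/r)$-cutting $\Xi_0,\ldots,\Xi_k$ for the weighted line set $(H,w_i)$ with $r=\sqrt{n}$, where the weights $w_i(l)$ defined by \eqref{equ:newweight} may be astronomically large, so Chazelle's unweighted hierarchical-cutting algorithm~\cite{ref:ChazelleCu93} cannot be applied directly to $(H,w_i)$. My plan is to extend the reduction used for Lemma~\ref{lem:weightcutting} to hierarchical cuttings: using the bound $\log w_i(H)=O(\log n)$ established inside the proof of Lemma~\ref{lem:bound}, I would first compute $a$ with $2^a\leq w_i(H)<2^{a+1}$ in $O(n\log n)$ time by a similar scan-and-add procedure on an $O(\log n)$-size array, then build a multiset $H'$ of size $O(n)$ by replicating each line $l\in H$ a number of times proportional to $w_i(l)/w_i(H)\cdot |H'|$, and finally invoke Chazelle's algorithm on the unweighted $H'$ to produce a hierarchical cutting of size $O(r^2)$ in $O(|H'|\cdot r)=O(n^{1.5})$ time. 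A computation analogous to the one in the proof of Lemma~\ref{lem:weightcutting}, applied to each cutting $\Xi_j$ individually, shows that this cutting is simultaneously an efficient hierarchical $(1/r)$-cutting for $(H,w_i)$.

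The remaining per-iteration work is lower order. Determining the cell $\triangle^*\in\Xi_p$ containing at least $n/(2t)=\Theta(\sqrt{n})$ points of $P_i'$, and then computing the splitting line $l^*$ and the cell $\triangle_{i+1}$, takes $O(n)$ time by scanning $P_i'$ against the $O(\sqrt n)$ cells of $\Xi_p$ via point location. Extracting the subtree of the hierarchical cutting that lies inside $\triangle_{i+1}$ and running the depth-first pruning procedure costs time proportional to $|\Xi_p|+\cdots+|\Xi_k|=O(\rho^{2q})=O(n/t)=O(\sqrt n)$. Updating the crossing counts $K_j^{(i)}(l)\to K_j^{(i+1)}(l)$ and the associated weights for all $l\in H$ can be done in $O(n + \sum_{j}|\text{new cells in }\Psi_j|)=O(n)$ time per iteration. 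Hence each of the $t=\sqrt n$ iterations costs $O(n^{1.5})$, giving a total time of $O(n^2)$, which dominates (a).

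For the space bound, Lemma~\ref{lem:bound}(1) gives $\sum_{j=0}^{q}|\Psi_j|=O(n)$, which accounts for the cells of all trees $T_i$; the sets $P$ and $H$, the point-location auxiliaries on each $\Xi_p$, and the bookkeeping for the weights together use $O(n)$ additional space. Because the logarithmic sequence of constructions on $P,P\setminus P',\ldots$ operates on point sets whose sizes form a geometric progression, summing over all constructions preserves the $O(n^2)$ time and $O(n)$ space bounds, which completes the proof.
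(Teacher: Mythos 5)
Your proposal follows essentially the same route as the paper's proof: reduce to the construction for $P'$ (geometrically decreasing sizes), compute the test set $H$ in $O(n\sqrt{n})$ time, handle the huge weights in each of the $t=\sqrt{n}$ iterations by replicating lines into an unweighted multiset and running Chazelle's algorithm (this is exactly the paper's Lemma~\ref{lem:hcutting}, extending Lemma~\ref{lem:weightcutting} using $\log w_i(H)=O(\log n)$ from Lemma~\ref{lem:bound}), locate $\triangle^*$ by point location, prune by DFS, and take the space bound from Lemma~\ref{lem:bound}(1). The one slip is your claim that updating the crossing counts and weights costs only $O(n)$ per iteration: testing all $n$ lines of $H$ against the $O(n/t)=O(\sqrt{n})$ new cells (plus the point-location work needed to count points of $P_{i+1}$ in each cell for the fat-cell pruning) gives the straightforward bound $O(n^2/t)=O(n^{1.5})$ per iteration, as in the paper---but since this is still dominated by your per-iteration cutting cost, the final $O(n^2)$ time and $O(n)$ space bounds are unaffected.
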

\begin{proof}
As discussed before, it is sufficient to show that the data structure for $P'$ can be constructed in $O(n^2)$ time and $O(n)$ space. The $O(n)$ space follows from Lemma~\ref{lem:bound}(1). Below we bound the construction time.

As discussed before, computing the test set $H$ takes $O(n\sqrt{n})$ time. The algorithm proceeds in $t=\sqrt{n}$ iterations. Consider the $(i+1)$-th iteration.

For each line $l\in H$, define $k_i(l)$ as the exponential of its weight $w_i(l)$, i.e., $k_i(l)=\frac{\log n}{\sqrt{n}}\cdot\sum_{j=0}^{q}4^{q-j}\cdot |K_j^{(i)}(l)|+\frac{1}{\sqrt{n}}\cdot\sum_{\triangle\in K_q^{(i)}(l)}{|P(\triangle)|}$.
Note that Lemma~\ref{lem:bound} proves that $k_i(l)$ is bounded by $O(\log n)$.
Lemma~\ref{lem:hcutting} shows that the efficient hierarchical $(1/\sqrt{n})$-cuttings for $(H,w_i)$ can be constructed in $O(n\sqrt{n})$ time in a similar way as Lemma~\ref{lem:weightcutting}.

To find the triangle $\triangle^*$ of $\Xi_p$ that contains at least $\frac{n}{2t}$ points of $P_i'$,
we first build a point location data structure on $\Xi_p$ in $O(t)$ time~\cite{ref:EdelsbrunnerOp86,ref:KirkpatrickOp83}, for $\Xi_p$ has at most $t$ triangles, and then perform a point location for each point of $P_i'$. In this way, determining $\triangle^*$ can be done in $O(t+n\log t)$ time. After that, obtaining $\triangle_{i+1}$ and the subset $P_{i+1}$ can be easily done in additional $O(n)$ time.

Next, we perform the pruning procedure by running DFS on $T_{i+1}$, which is initially formed by all cells of $\Xi_p,\ldots, \Xi_k$ contained in $\triangle_{i+1}$. To this end, we need to know the number of points of $P_{i+1}$ contained in each cell $\triangle$ of $T_{i+1}$. For this, we again apply the above point location algorithm on each $\Xi_{j}$ for $j=p,p+1,\cdots, k$. Notice that the total number of cells of all cuttings $\Xi_p,\ldots, \Xi_k$ contained in $\triangle_{i+1}$ is $\rho^{2q}=O(n/t)$, where $q=k-p$. Hence, the total time for building all point location data structures is $O(n/t)$. The total time for point location queries is $O(|P_{i+1}|\cdot \log n\cdot q)$, which is $O(\frac{n}{t}\log^2 n)$, for $|P_{i+1}|=\frac{n}{2t}$ and $q=O(\log n)$. Therefore, computing the numbers of points of $P_{i+1}$ contained in the cells of $T_{i+1}$ can be done in $O(\frac{n}{t}\log^2 n)$ time. Subsequently, running DFS on $T_{i+1}$ takes $O(|T_{i+1}|)$ time, which is $O(n/t)$ since the total number of cells of the cuttings $\Xi_p,\ldots, \Xi_k$ contained in $\triangle_{i+1}$ is $O(n/t)$.

Finally, we update the values $k_i(l)$'s for all lines $l\in H$. For each line $l\in H$, by traversing $T_{i+1}$, for each cell $\triangle$ of the tree, if $l$ crosses $\triangle$, then we can update $k_i(l)$ as follows. Suppose $l$ crosses $\triangle$ and the depth of $\triangle$ is $j$. Then, the term $|K_{j}^{(i)}(l)|$ in the weight function increases by one, and thus we simply increment $k_i(l)$ by $4^{q-j}\cdot \sqrt{\log n/n}$. If $j=q$, then $\triangle$ is a leaf and we further increase $k_i(l)$ by $|P(\triangle)|\cdot \sqrt{1/n}$; note that the size $|P(\triangle)|$ is stored at $\triangle$.
Since $|T_{i+1}|=O(n/t)$ and $|H|=n$, updating the values $k_i(l)$'s for all lines $l\in H$ can be easily done in $O(n^2/t)$ time, which is $O(n\sqrt{n})$ time.

This finishes the algorithm for the $(i+1)$-th iteration, which takes $O(n\sqrt{n})$ time.
 As there are $t=\sqrt{n}$ iterations, the total time of the algorithm is $O(n^2)$.
\end{proof}

\begin{mylemma}\label{lem:hcutting}
Suppose the values $k_i(l)$'s are known for all lines $l\in
H$. Then, we can compute an efficient hierarchical $(1/\sqrt{n})$-cutting for $(H,w_i)$ in $O(n\sqrt{n})$ time.
\end{mylemma}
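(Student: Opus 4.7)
The plan is to extend the reduction used in Lemma~\ref{lem:weightcutting} from single cuttings to the hierarchical setting by applying Chazelle's hierarchical cutting algorithm~\cite{ref:ChazelleCu93} to an appropriately constructed unweighted multiset of lines. First I would observe that $k_i(l)=O(\log n)$ for every $l\in H$. This follows from the same argument that appears in the proof of Lemma~\ref{lem:bound} (which establishes $\log w_i(H)=O(\log n)$ as a consequence of a telescoping product over iterations); equivalently, it is an invariant maintained across all $t=\sqrt{n}$ iterations. Hence $w_i(l)\le n^{O(1)}$ for every $l$ and $w_i(H)\le n^{O(1)}$.

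Next I would build the unweighted multiset $H'$ used to simulate $(H,w_i)$. Let $a$ be the integer with $2^a\le w_i(H)<2^{a+1}$, which can be determined in $O(n\log n)$ time by the $O(\sqrt{r})$-length binary addition trick in the proof of Lemma~\ref{lem:weightcutting}. Let $b=\lfloor\log n\rfloor$, so that $2^b=\Theta(n)$. For each $l\in H$, set $k'_i(l)=\lfloor\log_2 w_i(l)\rfloor$; this integer is nonnegative and bounded by $O(\log n)$, so $w_i(l)\in[2^{k'_i(l)},2^{k'_i(l)+1})$. Put $\max\bigl(1,\,2^{b+1+k'_i(l)-a}\bigr)$ copies of $l$ into $H'$. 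The computation in Lemma~\ref{lem:weightcutting} carries over and gives
\[
|H'|\le n+2^{b+1-a}\cdot w_i(H)\le n+2^{b+2}=O(n),
\]
so $H'$ has linear size and can be generated in $O(n)$ time once $a$ and the $k'_i(l)$'s are known.

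Then I would invoke Chazelle's hierarchical cutting construction~\cite{ref:ChazelleCu93} on the unweighted multiset $H'$ to build a hierarchical $(1/(\alpha\sqrt{n}))$-cutting $\Xi_0,\Xi_1,\ldots,\Xi_k$, where $\alpha$ is a suitable constant to be fixed below. Chazelle's algorithm automatically enforces the structural requirements of an efficient hierarchical cutting (each triangle of $\Xi_j$ is contained in a unique parent triangle of $\Xi_{j-1}$, each parent has at most $C$ children for some absolute constant $C$, and $|\Xi_j|=O(\rho^{2j})$), and it runs in $O(|H'|\cdot\sqrt{n})=O(n\sqrt{n})$ time. To finish, I need to verify that the same hierarchy is a $(1/\rho^j)$-cutting for the weighted set $(H,w_i)$ at every level. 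For any triangle $\triangle\in\Xi_j$, letting $H'_\triangle$ and $H_\triangle$ be the lines of $H'$ and $H$ crossing $\triangle$, the chain of inequalities
\[
w_i(H_\triangle)\;\le\;2\sum_{l\in H_\triangle}2^{k'_i(l)}\;=\;\frac{2}{2^{b+1-a}}\sum_{l\in H_\triangle}2^{b+1+k'_i(l)-a}\;\le\;\frac{2}{2^{b+1-a}}\,|H'_\triangle|
\]
combined with the unweighted cutting bound $|H'_\triangle|\le|H'|/(\alpha\rho^j)$, the choice $2^b=\Theta(n)$, and $w_i(H)\ge 2^a$ yields $w_i(H_\triangle)/w_i(H)=O(|H'|/(\alpha\rho^j\cdot 2^{b+1}))=O(1/(\alpha\rho^j))$. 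Choosing $\alpha$ to be a sufficiently large constant makes this at most $1/\rho^j$, as required.

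The main obstacle is the last verification step. A single multiset construction (with a single pair $(a,b)$) must simultaneously yield the weighted cutting property at all $\Theta(\log\sqrt{n})$ levels of the hierarchy, not just the finest one. The key observation that makes this go through is that the multiplicity assignment depends only on the global quantities $a$ and $b$, and both sides of the above inequality scale identically with $\rho^j$; so once the computation is done for an arbitrary level, the same constant $\alpha$ works uniformly. A secondary technical point is computing $a$ and the $k'_i(l)$'s efficiently, which is handled by the same $O(\sqrt{r})$-word binary arithmetic trick from Lemma~\ref{lem:weightcutting}, contributing $O(n\sqrt{n})$ time in total and matching the overall time bound.
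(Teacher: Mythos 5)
Your proposal is correct and follows essentially the same route as the paper: simulate the weighted set $(H,w_i)$ by an $O(n)$-size unweighted multiset $H'$ whose multiplicities are determined from $a$ and the exponents $k_i(l)$ (computed via the array-addition trick, using $\log w_i(H)=O(\log n)$), run the standard hierarchical-cutting algorithm on $H'$ in $O(n\sqrt{n})$ time, and verify the weighted cutting property level by level. The only cosmetic difference is that you round the real-valued weights to powers of $2$ and absorb the factor-$2$ slack into the constant, whereas the paper works directly in base $e$; this does not change the argument.
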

\begin{proof}
The proof is very similar to that for Lemma~\ref{lem:weightcutting},
so we only point out the differences. The algorithm first compute an
integer $a$ so that $e^a\leq w_i(H)<e^{a+1}$. For a similar task, an array $A$ of
size $O(\sqrt{r})$ is used in Lemma~\ref{lem:weightcutting}. Here,
since $\log w_i(H)=O(\log n)$ by Lemma~\ref{lem:bound},
we can use an array of size $O(\log n)$. Also, $value(A)$ is defined on the elements of $A$ with base $2$ in Lemma~\ref{lem:weightcutting}; here we use base $e$.
Following the same algorithm,
we can compute $a$ in $O(n\log n)$ time. After having $a$, the rest of
the algorithm is very similar as before (e.g., we use base $e$ instead of base $2$).
Also the algorithm for Lemma~\ref{lem:weightcutting} only needs a cutting while here we need
an efficient hierarchical cutting, but they are computed by
exactly the same algorithm of Lemma~\ref{lem:cutting}. The analysis is also similar.
The total time is $O(n\sqrt{n})$ (i.e., replace $r$ in
Lemma~\ref{lem:weightcutting} by $\sqrt{n}$).
\end{proof}


In summary, we have computed $f$ trees, $T_1,\ldots, T_f$,
along with cells $\triangle_1,\cdots,\triangle_f$ and subsets
$P_1,\ldots,P_f$, with the following properties: (1) The subsets $P_i$'s are disjoint and $P=\bigcup_{i=1}^fP_i$. (2) Ech cell is either a triangle or a convex quadrilateral. (3)
Each subset $P_i$ is contained in $\triangle_i$. (4)
The weakly-overlapped property in Lemma~\ref{lem:property4new} holds. (5) The bounds of Lemma~\ref{lem:bound} hold for all $f$ trees.
We use $\Psi$ to refer to this data structure.

\subparagraph{Storing the segments in the data structure $\boldsymbol{\Psi}$.}
We now store the segments of $S$ in $\Psi$. For each segment $s\in S$,
if their endpoints are in two different subsets $P_i$ and $P_j$,
then we can prove Observation~\ref{obser:boundary} again using
Lemma~\ref{lem:property4new}. Let $\triangle$ be a cell of
$\triangle_i$ and $\triangle_j$ whose boundary intersects $s$. Let $e$
be an edge of $\triangle$ that intersects $s$. We store $s$ at $e$;
let $S(e)$ be the set of all segments stored at $e$.
If the endpoints of $s$ are in the same subset $P_i$, then we store
$e$ in the tree $T_i$ in the same way as we store segments in Chan's
partition tree in Section~\ref{sec:randomized} (indeed $T_i$ and
Chan's partition tree share similar properties: each internal node has
$O(1)$ children; children cells do not overlap and together form a
partition of their parent cell). After that, each edge $e$ of each cell of $T_i$ stores a set $S(e)$ of segments that intersect $e$.
In addition, if both endpoints of $s$ are in a leaf cell $\triangle$ of $T_i$, then we store $s$ there; let $S(\triangle)$ be the set of all segments stored in $\triangle$.
In this way, each segment is stored $O(1)$ times.

For each edge $e$ of each cell of each tree of $\Psi$, we preprocess $S(e)$ using Lemma~\ref{lem:anchor}(1), or using Lemma~\ref{lem:anchor}(2) if the segments of $S$ are nonintersecting.
After $\Psi$ is obtained, the above preprocessing on $S$ takes $O(n\log n)$ time and $O(n)$ space.

This finishes our preprocessing for Lemma~\ref{lem:subsetprocess10}, which uses $O(n^2)$ time and $O(n)$ space. In the following, we describe the query algorithms.

Consider a query line $\ell$. First, for each $\triangle_i$, $1\leq i\leq f$, for each edge $e$ of $\triangle_i$, we determine whether $\ell$ intersects a segment of $S(e)$, which can be done in $O(\log n)$ time by Lemma~\ref{lem:anchor}(1); if the answer is yes, then we halt the entire query algorithm. The total time of this step is $O(f\log n)$; recall that $f\leq 2t$ and $t=\sqrt{n}$. Second, by checking every cell $\triangle_i$, $1\leq i\leq f$, we determine those cells crossed by $\ell$; this takes $O(f)$ time.
For each such cell $\triangle_i$, we determine whether $\ell$ intersects a segment stored in $T_i$. This can be done in the same way as our query algorithm using Chan's partition trees in Section~\ref{sec:randomized}. Starting from the root, we determine the set of cells $\triangle$ of $T_i$ crossed by $\ell$.
For each such cell $\triangle$, if it is a leaf, then we check whether $s$ intersects $\ell$ for each segment $s\in S(\triangle)$. Otherwise, for each edge $e$ of $\triangle$, we use the query algorithm of Lemma~\ref{lem:anchor}(1) to determine whether $\ell$ intersects any segment of $S(e)$. This finishes the algorithm. Lemma~\ref{lem:bound}(2) guarantees that the total query time is $O(\sqrt{n}\log n)$, for there are a total of $O(\sqrt{n})$ cells crossed by $\ell$ and the total number of points of $P$ in those leaf cells crossed by $\ell$ is $O(\sqrt{n}\log n)$ (which implies that the total number of segments stored in those leaf cells crossed by $\ell$ is $O(\sqrt{n}\log n)$).
Therefore, the query time is bounded by $O(\sqrt{n}\log n)$.

\subparagraph{Remark.}
If we set $t$ to  $\sqrt{n}\log n$ as in~\cite{ref:MatousekRa93}, then the query time would become $O(\sqrt{n}\log^2 n)$.
Note that setting $t=\sqrt{n}\log n$ does not cause any problem for
simplex range searching queries in~\cite{ref:MatousekRa93} because the
issue can be easily resolved by using multilevel data structures. Here
again we cannot effectively use multilevel data structures. On the
other hand, it can easily checked from the proof of
Lemma~\ref{lem:preprowholeP} that smaller $t$ also helps reduce the
preprocessing time. As discussed in Footnote~\ref{foot:10}, $\sqrt{n}$ is
asymptotically the smallest value for $t$ in order to guarantee the
bounds of Lemma~\ref{lem:bound}(2) by following the same proof as in~\cite{ref:MatousekRa93}.
\medskip

Suppose the segments of $S$ are nonintersecting. Consider a query
$\rho$. The algorithm is similar as above but we use the query
algorithm of  Lemma~\ref{lem:anchor}(2) instead on each set $S(e)$. As
a last step, among all segments hit by $\rho$ found by the algorithm as above, we
return the segment whose intersection with $\rho$ is closest to the origin of $\rho$. The
query time is $O(\sqrt{n}\log n)$.

This proves Lemma~\ref{lem:subsetprocess10} and thus Lemma~\ref{lem:subsetprocess}.

\section{Concluding Remarks}
\label{sec:app}

We demonstrate several applications of the subpath hull queries where our new result leads to improvement. In each problem, the algorithm needs to preprocess a simple path for subpath hull queries, and the goal of each query is usually to perform certain operations (e.g., one of those listed in Theorem~\ref{theo:subpathhull}) on the convex hull of the query subpath. All algorithms use the previous result of Guibas et al.~\cite{ref:GuibasCo91}. We replace it by our new result in~Theorem~\ref{theo:subpathhull}, which reduces the space of the original algorithm by a $\log\log n$ factor while the runtime is the same as before or even better. In the following, for each problem, we will briefly discuss the previous result and the operations on the convex hull of the query subpath needed in the algorithm; we then present the improvement of using our new result. Refer to the cited papers for the algorithm details of these problems.

\subparagraph{Computing an optimal time-convex hull under the $\boldsymbol{L_p}$ metrics.} Dai et al.~\cite{ref:DaiOp13} presented an algorithm for computing an optimal time-convex hull for a set of $n$ points in the plane under the $L_p$ metrics. The algorithm runs in $O(n\log n)$ time and $O(n\log\log n)$ space. In their algorithm, the operation on the convex hull of the query subpath is the third operation in Theorem~\ref{theo:subpathhull} (called {\em one-sided segment sweeping query} in~\cite{ref:DaiOp13}; see Section 4.2~\cite{ref:DaiOp13}). Using our new result in Theorem~\ref{theo:subpathhull}, the problem can now be solved in $O(n\log n)$ time and $O(n)$ space.

\subparagraph{Computing a guarding set for simple polygons.} Christ et al.~\cite{ref:ChristIm10} studied a new class of art gallery problems motivated by applications in wireless localization. They gave an $O(n\log n)$ time and $O(n\log\log n)$ space algorithm to compute a guarding set for a simple polygon of $n$ vertices (see Corollary 11~\cite{ref:ChristIm10}). In their algorithm, the operation on the convex hull of the query subpath is the third operation in Theorem~\ref{theo:subpathhull}. Using our new result in Theorem~\ref{theo:subpathhull}, the space of the algorithm can be reduced to $O(n)$ while the runtime is still $O(n\log n)$.

\subparagraph{Enclosing rectangles by two rectangles of minimum total area.} Becker et al.~\cite{ref:BeckerAn91} considered the problem of finding two rectangles of minimum total area to enclose a set of $n$ rectangles in the plane. They gave an algorithm of $O(n\log n)$ time and $O(n\log\log n)$ space. In their algorithm, the operation on the convex hull of the query subpath is the third operation in Theorem~\ref{theo:subpathhull}.
Using our new result in Theorem~\ref{theo:subpathhull}, the problem can now be solved in $O(n\log n)$ time and $O(n)$ space.

\subparagraph{Enclosing polygons by two rectangles of minimum total area.} Becker et al.~\cite{ref:BeckerEn96} extended their work above and studied the problem of enclosing a set of simple polygons using two rectangles of minimum total area. They gave an algorithm of $O(n\alpha(n)\log n)$ time and $O(n\log\log n)$ space, where $n$ is the total number of vertices of all polygons and $\alpha(n)$ is the inverse
Ackermann’s function. In their algorithm, the operation on the convex hull of the query subpath is the third operation in Theorem~\ref{theo:subpathhull}.
Using our new result in Theorem~\ref{theo:subpathhull}, the space of the algorithm can be reduced to $O(n)$ while the runtime is still $O(n\alpha(n)\log n)$.

\subparagraph{$\boldsymbol{L_1}$ Top-$\boldsymbol{k}$ weighted sum aggregate nearest and farthest neighbor queries.} Wang and Zhang~\cite{ref:WangOn19} studied top-$k$ aggregate nearest neighbor queries (also called group nearest neighbor queries) using the weighted sum operator under the $L_1$ metric in the plane. They built a data structure of $O(n \log n \log\log n)$ space in $O(n\log n \log\log n)$ time. In their query algorithm, the operation on the convex hull of the query subpath is the third operation in Theorem~\ref{theo:subpathhull} (see Lemma 8~\cite{ref:WangOn19}). Using our new result in Theorem~\ref{theo:subpathhull}, we can reduce both the space and the preprocessing time of their data structure to $O(n\log n)$, while the query time is the same as before.
Wang and Zhang~\cite{ref:WangOn19} also considered the farthest neighbor queries and obtained the same result as above using similar techniques, which can also be improved as above by using our new result in Theorem~\ref{theo:subpathhull}.


\bibliography{reference}

%
%

\end{document}